%% file: main.tex
\documentclass[11pt]{article}

\usepackage[letterpaper,margin=1in]{geometry}
\usepackage[T1]{fontenc}
\usepackage[utf8]{inputenc}

\usepackage{amsmath,amssymb,amsfonts,amsthm,mathtools}
\usepackage{aliascnt}
\usepackage{microtype}
\usepackage{xcolor}
\usepackage{graphicx}
\usepackage{booktabs}   
\usepackage{array}
\usepackage{enumitem}
\usepackage{parskip}    

\usepackage{hyperref}
\hypersetup{
  colorlinks=true,
  linkcolor=blue,
  citecolor=blue,
  urlcolor=blue,
  filecolor=blue,
  anchorcolor=blue,
  pdftitle={The Price of Order in the Logarithmic Method},
  pdfauthor={Sichen Wang, Zhipeng Lu, and Jingbang Chen}
}

\theoremstyle{plain}
\newtheorem{theorem}{Theorem}[section]
\newaliascnt{proposition}{theorem}
\newtheorem{proposition}[proposition]{Proposition}
\aliascntresetthe{proposition}
\newaliascnt{lemma}{theorem}
\newtheorem{lemma}[lemma]{Lemma}
\aliascntresetthe{lemma}
\newaliascnt{corollary}{theorem}
\newtheorem{corollary}[corollary]{Corollary}
\aliascntresetthe{corollary}
\newaliascnt{conjecture}{theorem}

\aliascntresetthe{conjecture}

\theoremstyle{definition}
\newaliascnt{definition}{theorem}
\newtheorem{definition}[definition]{Definition}
\aliascntresetthe{definition}

\theoremstyle{remark}
\newaliascnt{remark}{theorem}
\newtheorem{remark}[remark]{Remark}
\aliascntresetthe{remark}
\newaliascnt{example}{theorem}

\aliascntresetthe{example}

\usepackage[capitalize,nameinlink]{cleveref}
\crefname{theorem}{Theorem}{Theorems}
\Crefname{theorem}{Theorem}{Theorems}
\crefname{proposition}{Proposition}{Propositions}
\Crefname{proposition}{Proposition}{Propositions}
\crefname{lemma}{Lemma}{Lemmas}
\Crefname{lemma}{Lemma}{Lemmas}
\crefname{corollary}{Corollary}{Corollaries}
\Crefname{corollary}{Corollary}{Corollaries}
\crefname{conjecture}{Conjecture}{Conjectures}
\Crefname{conjecture}{Conjecture}{Conjectures}
\crefname{definition}{Definition}{Definitions}
\Crefname{definition}{Definition}{Definitions}
\crefname{remark}{Remark}{Remarks}
\Crefname{remark}{Remark}{Remarks}
\crefname{example}{Example}{Examples}
\Crefname{example}{Example}{Examples}
\crefname{section}{Section}{Sections}
\Crefname{section}{Section}{Sections}
\crefname{appendix}{Appendix}{Appendices}
\Crefname{appendix}{Appendix}{Appendices}

\numberwithin{equation}{section}

\newcommand{\Rhat}{\widehat{R}}        
\newcommand{\What}{\widehat{W}}        
\newcommand{\Dtil}{\widetilde{D}}      

\title{The Price of Order in the Logarithmic Method}
\author{
  \begin{tabular}{@{}c@{\hspace{2.5em}}c@{\hspace{2.5em}}c@{}}
    Sichen Wang
      & Zhipeng Lu
      & Jingbang Chen\textsuperscript{*}\\[0.15em]
    {\footnotesize\texttt{wsc@smbu.edu.cn}}
      & {\footnotesize\texttt{zhipeng.lu@smbu.edu.cn}}
      & {\footnotesize\texttt{chenjb@cuhk.edu.cn}}\\[-0.05em]
    {\footnotesize Shenzhen MSU-BIT University}
      & {\footnotesize Shenzhen MSU-BIT University}
      & {\footnotesize CUHK-Shenzhen \& SLAI}
  \end{tabular}
}
\date{}

\begin{document}
\pagenumbering{arabic}

\maketitle
\begingroup
\renewcommand{\thefootnote}{\fnsymbol{footnote}}
\footnotetext[1]{Corresponding author.}
\endgroup

\begin{abstract}
The logarithmic method is a classical static-to-dynamic transformation: it stores
one dynamic ordered set as several immutable static components and rebuilds them
by merges.  The same component-and-merge discipline underlies write-optimized
ordered indexes, where cheap insertions must be reconciled with exact ordered
queries.  In this paper, we study the insertion-only version after $n$ insertions, over
abstract keys, in a strongly materialized merge-stack model with sequential
component merges and one forward scan of the live components per query.  We bound
the product between the total amount of data written during the $n$ insertions
and the worst-case amount of data read by a single query, known as the
\emph{write-read product}.  The optimal bounds are as follows:
\begin{itemize}[leftmargin=1.6em,itemsep=1pt,topsep=2pt]
  \item Membership and local certificates: $\Theta(n\log^2 n)$.
  \item Order and range queries with named keys or endpoints:
  $\Theta(n\log^3 n)$.
  \item Select: $\Theta(n^2)$.
\end{itemize}
Thus, the logarithmic method does not impose a universal dynamic overhead:
under materialized one-way access, the optimum depends on what information the
query reveals before the scan starts.  This pinpoints the access-model
obstruction behind the extra logarithm for exact order and range queries, and
the quadratic barrier for select.
\end{abstract}

\input{sections/intro}
\input{sections/model}
\input{sections/membership}
\input{sections/order}
\input{sections/select}
\input{sections/lifting}
\input{sections/accessmodel}

\appendix
\crefalias{section}{appendix}
\input{sections/nonmaterialized}

\section*{Acknowledgments}
The authors used OpenAI's ChatGPT 5.5 Pro to help organize and polish the exposition of this paper, based on drafts written by the authors. All mathematical statements, proofs, and references were checked, revised, and finalized by the authors. The authors assume responsibility for all content of the submission.

\pagebreak
\bibliographystyle{alpha}
\bibliography{references}

\end{document}

%% file: sections/intro.tex
\section{Introduction}\label{sec:intro}

A classical theme in dynamic data structures is to obtain dynamism by
rebuilding static structures in batches.  The logarithmic method of Bentley and
Saxe keeps several static structures of different ages and merges them as updates
arrive~\cite{bentleysaxe1980}; lower-bound and structural versions go back to
Mehlhorn and Overmars~\cite{mehlhorn1981,overmars1983}.  This is also a natural
abstraction of write-optimized ordered storage.  The LSM-tree was introduced to
make high-rate updates sequential and cheap~\cite{oneil1996}; sorted-run
architectures appear in large-scale storage systems and key-value stores such as
Bigtable, Cassandra, MyRocks/RocksDB, WiscKey, and
PebblesDB~\cite{bigtable2008,cassandra2010,myrocks2020,wisckey2017,pebblesdb2017},
and in cache-oblivious write-optimized search structures~\cite{cola2007}.

The broad area is active because write optimization creates a persistent tension
between update cost, point lookup cost, range-query cost, and space.  The systems
literature formulates this tension explicitly through the RUM tradeoff, optimizes
merge policies and filter allocation for point lookups, and designs range filters
or range-query mechanisms for LSM-style stores~\cite{athanassoulis2016,monkey2017,dostoevsky2018,growlsm2025,surf2018,rosetta2020,remix2021}.
From a theory point of view, however, these systems raise a more basic question:
what does the logarithmic method itself do to the complexity of ordered-search
queries?

At a fixed time, a logarithmic-method structure is not one search tree.  It is a
set of independently built sorted components.  A point lookup can often be
certified locally in each component, for instance by a hash table or filter.  An
exact ordered query is different: predecessor, rank, and exact range queries must
know where the query key, or the interval endpoints, fall inside the relevant
components.  Thus the issue is not only the number of components, but also what
kind of information the query must recover inside each component.

\noindent\textbf{Question.}
For each query family $Q$, what is the optimal product of amortized write cost
and worst-case read cost for a logarithmic-method ordered set?
Equivalently, if $E$ is the total amount of data written and $\Rhat_Q$ is the
worst read cost for $Q$, we study
\[
  P_Q(n)=\inf_{\mathcal A}\left(\frac{E_{\mathcal A}(n)}{n}\right)
    \Rhat_{\mathcal A,Q}(n),
\]
and throughout the paper we report the unnormalized product $E\Rhat_Q$.

This question is not answered by the closest existing lines of work.  Classical
static-to-dynamic transformations analyze the overhead of maintaining many static
structures, but do not separate query families that all have logarithmic static
search cost~\cite{bentleysaxe1980,mehlhorn1981,overmars1983,competitivedyn2021}.
LSM and write-optimized-storage work optimizes concrete tradeoffs for point and
range queries, but does not give a matching query-by-query lower-bound
classification under a single merge history~\cite{athanassoulis2016,monkey2017,dostoevsky2018,surf2018,rosetta2020,remix2021}.
Fractional cascading explains how repeated searches across related catalogs can
be shared~\cite{chazelleguibas1986,mehlhornnaher1990,afshani2021}, but assumes
that the cross-catalog links needed by the search are actually available.  Lower
bounds in cell-probe and external-memory models study different access regimes,
where memory cells or blocks can be probed by address~\cite{fredmansaks1989,brodalfagerberg2003,larsen2012,patrascu2011unify}.
Our focus is the restricted but natural merge-stack regime in which components
are written by sequential merges and later read once in chronological order.

The main obstruction is easy to see in a small example.  Suppose the current set
is stored in three sorted components $C_1,C_2,C_3$, scanned in that order.  For
membership of a key $x$, each component only has to certify whether $x$ occurs
there.  For predecessor or rank of $x$, the query must learn the local gap of $x$
inside each $C_i$.  A search in $C_1$ does not reveal the gap in $C_2$ or $C_3$
unless some cross-component search shortcut has already been materialized.  The
natural algorithmic objection is fractional cascading: repeated searches in many
sorted catalogs are precisely what fractional cascading was designed to remove
\cite{chazelleguibas1986}, and dynamic variants maintain such links under updates
\cite{mehlhornnaher1990,afshani2021}.

In a materialized one-way merge stack, however, the shortcut has the wrong time
direction.  The query scans older components before younger ones.  A bridge into a
younger component would have to be encountered in an older component, but the
younger component did not exist when the older one was written.  Rewriting the
older component later changes its birth time and moves it behind the younger one
in the scan.  Thus the components remain independent sorted catalogs for the
purpose of exact order localization.

We formalize this obstruction in a merge-stack model.  An insertion-only ordered
set is stored as immutable sorted \emph{components}.  Updates append singleton
components and rebuild suffixes by sequential merges.  Queries make one forward
pass over the live components, from old to young, and cannot return to a component
after passing it.  Our main theorem is for \emph{abstract keys}, which can be
compared and hashed but carry no word-RAM coordinate revealing their rank, and for
\emph{strongly materialized} components, which are self-contained encodings of
their own keys rather than directories for other components.  These assumptions
are the clean setting in which the missing fractional cascade is exposed.  The
select query gives the forward scan even less information: it supplies only a
rank $K$, not a target key value, and the threshold key may be in a component that
has already passed under the head.

\subsection{Our Results}

The main result is a tight, query-sensitive classification of the logarithmic
method in this model.

\begin{theorem}[Main Classification]\label{thm:main}
In the strongly materialized merge-stack model over abstract keys, with oblivious
merge schedules and bounded-error randomized query evaluation, the optimal
write-read product satisfies
\[
  \min E\Rhat_Q =
  \begin{cases}
    \Theta(n\log^2 n), & Q\text{ is locally certifiable in each component},\\[1mm]
    \Theta(n\log^3 n), & Q\text{ is a target-key order or exact range query},\\[1mm]
    \Theta(n^2),       & Q\text{ is select}.
  \end{cases}
\]
Here the first line includes membership, point-emptiness, and global
minimum/maximum.  The second line includes predecessor, successor, rank,
range-count, exact range-sum with unit weights, range-minimum, range-maximum, and
bounded-interval emptiness.
\end{theorem}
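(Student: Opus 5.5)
The plan is to prove each of the three lines as a matching pair: an upper bound realized by an explicit oblivious merge schedule, and a lower bound valid for every schedule in the model. Throughout, I would parametrize a schedule by the sequence of live component sizes it maintains. The write cost $E$ is the total size of all merge outputs. The read cost $\Rhat_Q$ is the sum, over live components at the worst query time, of the per-component cost that $Q$ forces.

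\textbf{Upper bounds.} I would use a leveled (tiered/Bentley--Saxe-type) schedule with fan-out $t$. It has about $\log_t n$ levels, write cost $E=O(n\log_t n)$ in leveling form, or $O(n t\log_t n)$ for the tiering variant, and $O(t\log_t n)$ or $O(\log_t n)$ live components. Locally certifiable queries cost $O(1)$ per component, since each component materializes a hash table or filter certificate. Choosing the schedule to balance components against writes then gives $E\cdot\#\text{components}=O(n\log^2 n)$; plain binary Bentley--Saxe already has $E=O(n\log n)$ with $\log n$ components. For target-key order and range queries, each component must be searched by binary search, at cost $O(\log|C|)$. Summed over binary Bentley--Saxe this is $\sum_i \log 2^i=O(\log^2 n)$, which yields $O(n\log^3 n)$. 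For select, a single fully merged component, rebuilt at every insertion, gives $E=O(n^2)$ and $\Rhat=O(1)$ via direct indexing. Alternatively, no merges give $E=O(n)$ with an $O(n)$ read. The $\Theta(n^2)$ value is the product.

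\textbf{Lower bounds.} The common step is a write-cost lemma. Any schedule whose worst time exhibits live components of sizes $s_1,\dots,s_k$ pays an amortized cost via the standard potential/``number of times each key is rewritten'' argument: $E\ge \Omega(n\cdot h)$ whenever every time has at most roughly $n^{1/h}$-balanced components. Concretely, I would prove $E\cdot k\ge\Omega(n\log^2 n)$ along the lines of the Mehlhorn/Overmars tradeoff, adapted to oblivious schedules by averaging over insertion times.

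For local certificates, an adversary argument shows each live component must be touched: a key hidden in an unread component is indistinguishable under abstract keys and strong materialization. This yields $\Rhat\ge k$, and the product bound follows.

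For order queries I would strengthen this per component to $\Omega(\log |C_i|)$. Because components are strongly materialized and keys abstract, the one-way scan cannot carry bridges forward, so each component is an independent sorted catalog. A randomized (Yao) comparison/information argument then forces $\log|C_i|$ reads to locate a uniformly random gap, with errors bounded. The write-cost lemma must then be refined to the weighted quantity $\sum_i\log s_i$: I would show $E\cdot\sum_i\log s_i\ge\Omega(n\log^3 n)$ by a convexity and charging argument over the merge tree. I expect this refined tradeoff to be the main obstacle. Few large components cost many writes, while many small components reduce $\log s_i$ only mildly, so the optimization must be shown to be attained at binary-type schedules.

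For select, the scan order is the obstacle. The adversary places the threshold key so that its identity depends on counts in all components, and one-way access forces reading $\Omega$(total size of the non-largest components). This gives $\Rhat\ge n-\max_i s_i$. Keeping the largest component near $n$ at all times costs $\Omega(n^2)$ total writes by a rebuilding count, so in either regime the product is $\Omega(n^2)$.
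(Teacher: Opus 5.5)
Your upper bounds and your membership and select lower bounds follow the paper's lines. For membership you use a per-level adversary plus the binomial merge-forest count $\Dtil\What=\Omega(\log^2 n)$. For select you use a one-way cut floor followed by a rebuilding count. Your $t-\max_i s_i$ agrees up to a factor $2$ with the paper's $\max_{\text{boundaries}}\min\{P,U\}$. You still owe the information accounting behind the cut floor: each read reveals only $O(\log n)$ bits, because exact layout makes addresses carry nothing extra, while $r$ independent prefix candidates of $\Theta(\log n)$ bits each must be carried across the boundary. The rebuilding count should also be stated quantitatively as $E=\Omega(n^2/\Rhat)$ rather than as a two-regime dichotomy.

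The genuine gap is the order line. The inequality you flag as ``the main obstacle'' is the entire content of the extra logarithm, and ``a convexity and charging argument'' does not supply it. As written it is false: the never-merge schedule has $E=n$ and $\sum_i\log s_i=0$ at all times. You need the read floor $\sum_i\log(s_i+1)\ge\max\{W(t),\Phi(t)\}$ and a case split, since the potential argument only works under a frontier cap and a depth cap. If $\What>n^{\gamma}$, the frontier alone wins. If $\Dtil>L^6$ (with $L=\log_2 n$), the edit floor together with $\Rhat=\Omega(L)$ at the final state wins. Only the remaining case needs a real tradeoff. That tradeoff cannot come from showing $\max_t\Phi=\Omega(\log^2 n)$ and multiplying by $E=\Omega(n\log n)$: lazy base-$q$ counters drive $\max_t\Phi/\log^2 n$ to $0$ as $q$ grows, so $\Dtil$ must enter the inequality.

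The idea you are missing is the paper's entropy argument along a uniformly random root-to-leaf path of the merge forest. Write $q_v=S_v/n$ and $z_v=\log_2 S_v$, let $h_v$ be the entropy of $v$'s child-size distribution, and let $a_v$ be its ordered prefix load. Then there are four exact identities:
$\frac1n\int_0^n\Phi=\sum_vq_va_v$, because a level's lifetime in inserts equals the mass of its younger siblings;
$\sum_vq_vh_v=L$;
$\sum_vq_v=\Dtil$;
and $\sum_vq_vh_vz_v\ge L^2/2$, from $\sum_kZ_kX_k\ge\frac12(\sum_kX_k)^2$.
A max-entropy bound for laws on $\{0,1,2,\dots\}$ gives $a_v\ge(1-o(1))z_v\mu(h_v)$, with $\mu$ the inverse of the geometric entropy. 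Jensen on the convex $u\mapsto u\mu(1/u)$ then gives $\frac1n\int\Phi\ge(1-\varepsilon)\Dtil L\mu(\beta L/\Dtil)$. Finally $\mu(h)\ge h^2/4$ yields $\Dtil\max_t\Phi=\Omega(L^3)$. Without this, your proposal proves only $\Omega(n\log^2 n)$ for order queries.

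Three smaller points on the same line. First, the per-level $\Omega(\log m_i)$ floor fails at arbitrary states: disjoint key ranges let headers settle all but one level. You need a realization where a single hidden key in any level flips the shared answer, as in the paper's zero-side/hot construction. Second, range-minimum, range-maximum and emptiness have no group inverse, so they need a separate direct sum through an empty query interval. Third, binary search on a sorted array is not a no-backseek scan, so your upper bound needs a level-order tree layout.
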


The minimum ranges over all algorithms in the model.  The upper bounds are
explicit merge schedules with forward-readable static indexes.  The lower bounds
are worst-case over insertion sequences and query instances.

\begin{center}
\small
\setlength{\tabcolsep}{4pt}
\renewcommand{\arraystretch}{1.18}
\begin{tabular}{@{}>{\raggedright\arraybackslash}p{0.22\textwidth}>{\raggedright\arraybackslash}p{0.30\textwidth}>{\raggedright\arraybackslash}p{0.26\textwidth}>{\raggedright\arraybackslash}p{0.15\textwidth}@{}}
\toprule
family & examples & information available before the scan & product \\
\midrule
local certificates & membership, point-emptiness, global min/max & a key or constant-size certificate checked component by component & $\Theta(n\log^2 n)$ \\
target-key order/range & predecessor, rank, range-count, exact range aggregates & key values or interval endpoints whose local gaps must be found & $\Theta(n\log^3 n)$ \\
selection & select & only a rank $K$, with no target key value & $\Theta(n^2)$ \\
\bottomrule
\end{tabular}
\end{center}

\Cref{thm:main} is not an answer-size separation.  Membership, rank, and select
all have $\Theta(\log n)$-bit answers on the hard instances, and rank and select
are inverse operations.  The difference is what the scan knows before it starts.
Membership names a key and equality can be checked locally.  Predecessor, rank,
and range queries name key values, but exactness requires locating those values
inside each component.  Select names no key at all; the answer value is unknown
while early components pass under the one-way scan.

The assumptions in \Cref{thm:main} are also close to the boundary of the
phenomenon.  Integer keys reduce the cost of searching within one component;
random access removes the no-backseek difficulty behind select; and
non-materialized cross-component directories can reintroduce fractional-cascading
information.  We treat these variants after the main proof, in \Cref{sec:access}
and \Cref{sec:nonmat}.  They are not part of the headline model, but they help
separate which restriction is responsible for which line of the classification.

\subsection{Overview}

The proof has a simple architecture.  For each query family we prove a read
barrier at a fixed state, and then combine it with an accounting lemma for all
suffix-merge histories.

For membership, the fixed-state barrier is the number of live components.  Let
$\What$ be the maximum number of live components, and let $\Dtil$ be the average
number of times an item is written, including its initial write.  The edit floor
and the merge-forest counting lemma give
\[
  E=\Omega(n\Dtil),
  \qquad
  \Dtil\What=\Omega(\log^2 n).
\]
Some membership query must inspect all live components, so $\Rhat=\Omega(\What)$.
This gives $E\Rhat=\Omega(n\log^2 n)$, matched by the balanced Bentley--Saxe
schedule with local hashes.

For target-key order and range queries, the read barrier is the summed static
search cost in the live components.  At a state with component sizes $m_i$, we
prove
\[
  \Rhat_{\mathrm{ord}}
    =\Omega\!\left(\sum_i \log(m_i+1)\right).
\]
The terms add because strong materialization forbids a read in one component from
revealing the private search position of the query key in another component.  The
merge-history side then proves the size-diversity law
\[
  E\cdot \max_t \sum_{C\text{ live at }t}\log(|C|+1)
    =\Omega(n\log^3 n).
\]
This is the only extra logarithm in the middle line of \Cref{thm:main}.  Once the
predecessor lower bound is established, direct-sum arguments transfer it to rank,
range-count, exact range-sum, range-minimum, range-maximum, and bounded-interval
emptiness.

For select, the query is target-free.  Consider a boundary in the scan with $P$
stored keys before it and $U$ stored keys after it.  We construct a rank query
whose answer is one of $\min\{P,U\}$ possible prefix keys, but the index of the
right key is determined only by suffix information.  Since the scan cannot go
back, it must read
\[
  \Omega(\min\{P,U\})
\]
words before crossing the boundary.  If the read budget is small, nearly all live
keys must repeatedly sit in one huge component; immutability then forces that
component to be rebuilt many times.  This yields $E=\Omega(n^2/\Rhat)$ and hence
$E\Rhat=\Omega(n^2)$.

\Cref{sec:model} defines the model, costs, query families, and key assumptions.
\Cref{sec:membership} proves the membership line.  \Cref{sec:order} proves the
no-cascade read floor, the size-diversity law, and the order/range line.
\Cref{sec:select} proves the cut lower bound and the quadratic select line.
\Cref{sec:lifting} gives a general static-to-dynamic lifting principle for
additive queries.  \Cref{sec:access} records access-model variants, and
\Cref{sec:nonmat} studies cascades beyond strong materialization.

\subsection{Related Works}
\label{sec:related}

The closest prior work explains pieces of the landscape: dynamization by merging,
write-optimized storage, fractional cascading, and lower bounds in richer memory
models.  The following table summarizes why none of these directions already
implies \Cref{thm:main}.

\begin{center}
\small
\setlength{\tabcolsep}{3pt}
\renewcommand{\arraystretch}{1.16}
\begin{tabular}{@{}>{\raggedright\arraybackslash}p{0.21\textwidth}>{\raggedright\arraybackslash}p{0.30\textwidth}>{\raggedright\arraybackslash}p{0.38\textwidth}@{}}
\toprule
line of work & what it explains & difference from this paper \\
\midrule
logarithmic method & how to maintain many rebuilt static structures & no classification separating query families with similar static costs \\
LSM and write-optimized storage & practical read/write/space tradeoffs for sorted runs & not a matching lower-bound theorem for membership, order/range, and select under one merge history \\
fractional cascading & how to share repeated searches across related catalogs & the useful bridge has the wrong chronological direction in a materialized one-way merge stack \\
cell-probe and external-memory lower bounds & tradeoffs under random access to cells or blocks & our restriction is not cell access but the order in which materialized information can be read and rewritten \\
\bottomrule
\end{tabular}
\end{center}

\paragraph{Static-to-Dynamic Transformations.}
Bentley and Saxe introduced the logarithmic method for decomposable searching
problems~\cite{bentleysaxe1980}.  Mehlhorn proved lower bounds for such
transformations~\cite{mehlhorn1981}, and Overmars developed a systematic theory
of dynamic structures built from static ones~\cite{overmars1983}.  Recent
competitive dynamization studies merge policies under nonuniform inputs and
read/write ratios~\cite{competitivedyn2021}.  These works explain the overhead of
maintaining many static structures.  They do not distinguish query types whose
static problems are all logarithmic; the extra logarithm in \Cref{thm:main} is
caused by exact order localization across materialized components.

\paragraph{Write-Optimized Storage.}
The LSM-tree was introduced to make high-rate updates sequential and cheap
\cite{oneil1996}.  LSM-style sorted runs appear in large-scale storage systems and
key-value stores~\cite{bigtable2008,cassandra2010,myrocks2020,wisckey2017,pebblesdb2017}.
The systems literature maps the read/update/storage tradeoff~\cite{athanassoulis2016},
optimizes merge policies and filter allocation for point lookups
\cite{monkey2017,dostoevsky2018,growlsm2025}, and builds succinct or probabilistic
range filters and range-query mechanisms~\cite{surf2018,rosetta2020,remix2021}.
This work motivates the access discipline, but it does not prove a tight
query-by-query product classification under the same merge histories.

\paragraph{Fractional Cascading.}
Fractional cascading removes repeated binary searches across related catalogs in
static structures~\cite{chazelleguibas1986}; dynamic fractional cascading studies
how to maintain such links under updates~\cite{mehlhornnaher1990,afshani2021}.
Our obstruction is orthogonal.  The issue is not only the cost of updating a
cascade, but whether the bridge can appear before the component it accelerates is
created.  In the strongly materialized one-way model it cannot; \Cref{sec:nonmat}
examines what changes when such cross-component data are allowed.

\paragraph{Lower Bounds and Restricted Access.}
External-memory dictionary lower bounds study update/query tradeoffs under random
block access~\cite{brodalfagerberg2003,dineknich2022}.  Cell-probe lower bounds
for dynamic data structures, including dynamic range counting, use a more
powerful random-access memory model~\cite{fredmansaks1989,larsen2012,patrascu2011unify,ko2025}.
Our bounds are orthogonal: component encodings are static and transparent, but the
read and rewrite order is restricted.  The lifting theorem in \Cref{sec:lifting}
connects our merge-stack bounds to static predecessor and range-searching lower
bounds~\cite{patrascuthorup2006,patrascuthorup2007,patrascu2007,chanwilkinson2013}.
The merge-history accounting is closest in spirit to list labeling and file
maintenance~\cite{ikr1981,bks2012,bcfc2022,nearlyoptimal2024}; the select lower
bound is related to selection under restricted access~\cite{munropaterson1980,fredericksonjohnson1984,kkzz2019,borst2023}.

%% file: sections/model.tex
\section{Preliminaries}\label{sec:model}

This section fixes the model used in the main classification.  The objects are
merge histories of immutable sorted components, the costs are total rewritten mass
and prefix-worst read cost, and the key assumptions isolate the absence of
cross-component order information.

\subsection{Merge Histories and Components}

A \emph{merge stack} maintains a sequence of immutable sorted components
$C_1,\dots,C_W$, oldest at the bottom.  Each component stores a sorted block of
key-value samples together with any static index built from that block alone.
Three operations maintain it:
\begin{itemize}
  \item \emph{insert}$(x,y)$ pushes a new singleton component $\{(x,y)\}$ on top;
  \item \emph{merge} pops a suffix $C_i,\dots,C_W$ and pushes a single component
    holding the union of their samples in sorted order;
  \item \emph{query} makes one forward pass over $C_1,C_2,\dots$ in stack order,
    reading component data; after an $O(1)$ header it may skip a component's body,
    but it never returns to an earlier component.
\end{itemize}
Inserts and merges interleave according to a schedule of the algorithm's choice.
The main lower bounds consider schedules that are oblivious to the key values;
this condition is part of strong materialization below.

\subsection{Cost Measures}

The \emph{write cost} $E$ is the total component data created or rewritten over
the execution.  The read cost of a query is the data it reads, and $\Rhat$ is its
prefix-worst value: the maximum, over every time $t$ and every query, of the cost
of answering that query against the state after step $t$.  Making $\Rhat$
prefix-worst keeps the read cost from being amortized away by periodic global
rebuilds, since every intermediate state must remain queryable.  For a randomized
evaluator the cost of a query is its expected number of charged reads, and
$\Rhat$ is again prefix-worst.  The complexity measure throughout is the product
$E\Rhat$.

A token is a machine word of $w=\Theta(\log n)$ bits, and keys and values are drawn
from universes of size $\mathrm{poly}(n)$, so a sample occupies $O(1)$ tokens and a
component of $m$ samples occupies $\Theta(m)$.  We use the following notation.

\begin{center}
\small
\setlength{\tabcolsep}{5pt}
\renewcommand{\arraystretch}{1.15}
\begin{tabular}{@{}ll@{}}
\toprule
symbol & meaning \\
\midrule
$E$ & total rewritten component data \\
$\Rhat_Q$ & prefix-worst read cost for query family $Q$ \\
$\What$ & maximum number of live components over the execution \\
$\Dtil$ & average item write depth, including the initial write \\
$L$ & $\log_2 n$ \\
$\lambda$ & $\log_2\log_2 n$ \\
\bottomrule
\end{tabular}
\end{center}

\subsection{Query Families}

Three query types drive the paper.  \emph{Membership} is the exact-match point
query: given $x$, return its value if $x$ is live and a default otherwise.  It is
locally certifiable because equality can be settled inside each component by a
hash or perfect dictionary.  \emph{Predecessor} is the basic target-key order
query: given $x$, return the value of the largest live key at most $x$.  Successor,
rank, and exact range queries -- count, sum, minimum, maximum, and emptiness over
an interval -- are target-key order or range queries of the same kind, since each
must locate a key value or endpoint in the live order.  Membership and these order
queries are \emph{decomposable}: an answer combines fixed per-component
contributions.  \emph{Select} is different.  It is given a rank $K$ and returns
the $K$-th smallest live key; no target key value is supplied to the scan.  Inserted
keys are distinct, and the lower bounds use distinct-key instances.

\subsection{Key Models and Materialization}

Keys come in two models, and the order separation depends on which one is used.
An \emph{abstract key} is the model for string, tuple, or opaque record keys: it is
hashable and comparable, but carries no integer structure.  We read it through an
oracle.  A key has an opaque equality label and an unknown position in the total
order, and neither the label nor its hash reveals that position.  A single read
returns at most one stored key handle together with $O(1)$ control words, and the
handle's rank among those already read is learned only by comparison.  The encoder
cannot manufacture an order-bearing separator it has not read, and an answer must
be a key or value decoded from the words read, not a pointer into an unread cell.
The $O(1)$ control words therefore cannot stand in for the order of a whole block.
An \emph{integer key} is an integer in a $\mathrm{poly}(n)$ universe, with the
word-RAM available: a van Emde Boas tree or y-fast trie locates the predecessor
inside a component in $O(\lambda)$ reads, independent of its size, and the order
gap narrows accordingly (\Cref{sec:access}).

The main theorems hold in executions realized by size-based log-structured
merging.  Ordinary materialization, which constrains only a component's contents,
is not enough: once the merge schedule may depend on the data, the partition into
components itself carries order information across them, and an unread block is no
longer free for an adversary to complete.  We use the following stronger
condition.

\begin{definition}[Strong Materialization]\label{def:strongmat}
An execution is \emph{strongly materialized} if it satisfies the following.
\begin{itemize}
  \item \emph{Canonical updates}: after each insertion the active components
    partition the inserted samples into contiguous blocks of consecutive
    insertion times, and each insertion performs at most one merge, replacing a
    suffix of the active components and the new singleton by a single component.
  \item \emph{Oblivious schedule}: the merged suffix at each step depends only on
    the step index and the current component sizes, not on the keys, values,
    query answers, or any coins.
  \item \emph{Exact layout}: a component of $m$ samples occupies a publicly fixed
    length $\ell(m)\in[c_0 m, c_1 m]$, so component boundaries, addresses, and the
    merge structure are functions of the size sequence alone.
  \item \emph{Local encoding}: a component's tokens depend only on its own
    samples, their insertion order and values, and comparisons among its own keys;
    inputs that agree on every component's internal order, labels, values, and
    insertion history, and differ only in how keys of distinct components
    interleave, induce identical component encodings.
  \item \emph{No external state}: no sample-dependent state persists outside the
    components, the query evaluator starts from a data-independent configuration
    with coins independent of the data, and every dependence on the data passes
    through reads of component tokens.
\end{itemize}
\end{definition}

This is what an LSM run already is, a self-contained sorted file; \Cref{sec:nonmat}
removes the condition and lets a young component share or copy order information
from older ones.

The per-component arguments use one consequence.  Call an execution
\emph{no-cascade} if a read in one component cannot shrink another component's set
of candidate answers.  Strong materialization supplies it: changing one block
leaves the partition, every other component's encoding, and the layout untouched,
so a read elsewhere says nothing about that block.  A non-materialized global key
order, searchable by random access, would violate this, the components then
admitting a joint search in $O(\log n)$ reads; random access alone, with the
components still materialized, does not.  This locality is what \Cref{sec:nonmat}
gives up.

The canonical-updates clause is without loss of generality: the merges performed
between two insertions act on nested suffixes of the stack and coalesce into their
final suffix merge, and collapsing them removes the intermediate writes and
queryable states, so neither $E$ nor $\Rhat$ grows.  Under exact layout the write
cost is structural rather than informational: building a component of $m$ samples
costs $\Theta(m)$ tokens whatever it holds, a fact the edit floor of
\Cref{sec:membership} reads off directly.

\subsection{Serialization and Forward Reads}

The lower bounds are cleaner in a low-level view of the same execution.  Serialize
the components in stack order into a single token string, each component a
contiguous block preceded by a length-prefix header.  Edits become tail operations:
an insert pushes one block at the tail, and a merge of a stack suffix pops the most
recently pushed blocks and pushes one new block.  Writing $\mathrm{LCP}$ for the
longest common prefix, the edit $s_{t-1}\to s_t$ costs
\[
  d_{\mathrm{LIFO}}(s_{t-1},s_t)
   =\bigl(|s_{t-1}|-|\mathrm{LCP}|\bigr)+\bigl(|s_t|-|\mathrm{LCP}|\bigr),
\]
the number of tokens popped and pushed at the tail, so
$E=\sum_t d_{\mathrm{LIFO}}(s_{t-1},s_t)$; we call such tail-only edits
\emph{LIFO}.  A query is run by a \emph{forward, no-backseek} head: it starts at
position $0$ and at each step either reads the current token, at unit cost,
advancing one place, or skips forward at no cost by an amount it computes from its
finite control, the query, and the tokens read so far.  It never moves backward,
working memory is free, and only reads are charged.

\begin{lemma}[Serialization Equivalence]\label{lem:serial}
For strongly materialized executions, the merge-stack model and the LIFO-token,
forward-no-backseek model agree up to constant factors in $(E,\Rhat)$.
\end{lemma}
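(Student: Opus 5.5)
The plan is to build explicit simulations in both directions and check that each preserves the write cost $E$ and the prefix-worst read cost $\Rhat$ up to constant factors. Strong materialization is what makes both directions cheap. Because of exact layout, every component of size $m$ occupies a publicly known length $\ell(m)\in[c_0m,c_1m]$. So the serialized string $s_t$ is simply the concatenation, in stack order, of the component encodings, each preceded by an $O(1)$-token header that records its size. The canonical-updates clause says each step is one insertion followed by at most one suffix merge. I will use the equivalent single-step view: pop a suffix and push one block.

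\emph{Writes.} Take a step that pushes a singleton and merges a suffix $C_i,\dots,C_W$ together with it into a component of size $M$. The components $C_1,\dots,C_{i-1}$ are unchanged, so their encodings are a common prefix of $s_{t-1}$ and $s_t$. Hence
\[
d_{\mathrm{LIFO}}(s_{t-1},s_t)\le \sum_{j\ge i}(\ell(|C_j|)+O(1)) + \ell(M)+O(1)=O(M),
\]
and the merge-stack model charges $\Theta(M)$ for writing the new component. In the other direction, $d_{\mathrm{LIFO}}\ge |s_t|-|\mathrm{LCP}|$. The LCP cannot extend past the start of the new block by more than a header's worth, because the header encodes a size that differs from the size of the old block at that position. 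I must handle the degenerate case where the LCP accidentally extends into the new block's body; in that case I charge against $\ell(M)\ge c_0M$ minus the LCP overlap. The clean fix is to put the size header first, so that a size change forces a mismatch right at the block start. If the sizes happen to agree, then no merge occurred at that position. With this, summing over steps gives $E_{\mathrm{LIFO}}=\Theta(E)$. The $O(1)$ header per block is absorbed because every block has size at least $1$.

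\emph{Reads.} Given a merge-stack query, I simulate it with a forward head on $s_t$. Reading a component's $O(1)$ header corresponds to reading the block header. Skipping a body is a free forward skip by $\ell(m)$, which the head can compute from the header it has read. Reading component data is reading the same tokens in order. One catch is that a query may read tokens within a component in a non-monotone order, whereas the forward head cannot backseek. I plan to handle this by interpreting a merge-stack read of a component as a forward scan of that component's own static index. I will argue that the static index can be laid out, inside its block, so that any search reads a forward-monotone sequence of positions. For example, a root-to-leaf search path in a BFS or van Emde Boas layout visits increasing addresses, and the layout is permitted because local encoding allows any self-contained arrangement. Conversely, a forward head on $s_t$ is directly a merge-stack query: crossing a block boundary means moving to the next component, and the no-backseek condition matches the one-pass rule. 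Expected costs transfer verbatim for randomized evaluators, and prefix-worst maxima are preserved because the simulation is stepwise.

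The main obstacle is the read direction just described. The merge-stack model lets a query read a component's data without an explicitly forward order, so I must either show that the model already intends within-component reads to be forward, as the phrase ``one forward pass'' suggests, or exhibit a forward-readable layout of each static index that costs only $O(1)$ extra. I will first try to take the first route, reading the definition of \emph{query} in the merge-stack model as forward token access, which makes the equivalence essentially syntactic. Only if that reading fails will I fall back on the monotone-layout argument.
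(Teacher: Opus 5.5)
Your proposal is correct and follows essentially the paper's route: serialize the components in stack order behind size or length headers, observe that each canonical update pops a suffix of whole blocks and pushes one block whose mass is $\Theta$ of the rewritten data, and identify the one-pass query with a forward no-backseek head that meets blocks in birth order and skips bodies after headers. Your first read-side route is the one the paper uses: it takes the forward pass to mean forward token access inside components, which its forward-readable round-by-round layouts presuppose, so the layout fallback is unnecessary; your size-first header argument bounding the LCP is a detail the paper leaves implicit, and the only piece it adds is the one-line structural converse that a tail-only token execution with locally encoded pushed blocks splits by birth step into contiguous components with the same merge structure.
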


\begin{proof}
Serialize as above.  An insert pushes a block of $\Theta(1)$ tokens at the tail,
and a merge pops exactly the most recently pushed blocks and pushes one new block,
so every edit is tail-only; the pushed token mass equals the rewritten component
data up to the matching pops, a constant factor, so $E_{\mathrm{token}}=\Theta(E)$.
The components occupy the string in stack order, which is birth order and the order
a forward head reads them, and skipping a body after its header is a legal
no-backseek pass, so $\Rhat_{\mathrm{token}}=\Theta(\Rhat)$.  Conversely, a
tail-only token execution whose pushed blocks are local encodings of the unions of
the popped blocks splits into contiguous blocks in birth order, since the tokens
pushed at one step are consecutive and share a birth time, and is a merge-stack
execution with the same merge structure.
\end{proof}

One may therefore keep the merge-stack picture throughout and read the lower-bound
proofs as statements about the token string.

\subsection{Baseline Upper Bounds}

The serial Bentley--Saxe transform~\cite{bentleysaxe1980} supplies the upper
bounds the lower bounds will match.

\begin{proposition}[Bentley--Saxe Baseline]\label{prop:baseline}
A single schedule has $E=O(n\log n)$ with at most $\lceil\log_2(n+1)\rceil$
active components at all times, and answers membership in $\Rhat=O(\log n)$,
abstract-key predecessor in $\Rhat=O(\log^2 n)$, and integer predecessor in
$\Rhat=O(\lambda\log n)$. The corresponding products are
\[
  E\Rhat=O(n\log^2 n),\qquad O(n\log^3 n),\qquad O(n\log^2 n\,\lambda).
\]
\end{proposition}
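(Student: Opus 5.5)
The plan is to take the binary-counter schedule and analyze the three queries against a fixed state. After the $t$-th insertion, the live components are exactly the blocks of consecutive insertion times given by the binary expansion of $t$. When inserting item $t+1$, let $j$ be the number of trailing ones of $t$. We push the singleton and merge it with the top $j$ components, which have sizes $1,2,\dots,2^{j-1}$, into one component of size $2^j$.

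This is a single suffix merge per insertion. The merged suffix depends only on $t$, so the schedule is oblivious. Each component is built by merging its own samples, so it satisfies canonical updates, exact layout and local encoding from \Cref{def:strongmat}. The number of live components equals the popcount of $t$, which is at most $\lceil\log_2(n+1)\rceil$.

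For the write cost, we charge each merge's output to the items it contains. An item is rewritten only when it moves into a component of strictly larger size, and each size is a power of two at most $n$. So every item is written at most $\lfloor\log_2 n\rfloor+1$ times, each costing $O(1)$ tokens. Hence $E=O(n\log n)$.

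Each component carries a static index built from its own block only, written in $O(m)$ tokens with a forward-readable layout. This cost is absorbed in the $\Theta(m)$ build cost.

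\begin{itemize}
\item For membership, attach a static perfect hash (for example FKS) to each component. Its header lets the query jump forward to the relevant bucket and compare $O(1)$ keys. That is $O(1)$ reads per component, for a total $\Rhat=O(\log n)$.
\item For abstract-key predecessor, lay out each component's sorted keys as a search tree in BFS (Eytzinger) order. A root-to-leaf descent then visits strictly increasing positions, so it is a legal no-backseek pass costing $O(\log(m+1))$ comparisons. Summing over at most $\log_2(n+1)+1$ components of size at most $n$ gives $\Rhat=O(\log^2 n)$. The query keeps the largest candidate at most $x$ found so far in free working memory.
\item For integer keys, store a y-fast trie or van Emde Boas structure per component. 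Searching it costs $O(\lambda)$ reads, giving $\Rhat=O(\lambda\log n)$.
\end{itemize}

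Multiplying each read cost by $E=O(n\log n)$ gives the three stated products.

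The main obstacle is that the static searches must respect the forward, no-backseek discipline rather than random access. For the Eytzinger layout this holds because child indices exceed parent indices. For hashing, we place the bucket directory before the buckets. For integer structures, the difficulty is the recursive vEB search, which may probe a cluster summary after the clusters. I would first try a layout that stores each summary before its clusters, so that each recursive step moves forward. Failing that, I would use the y-fast trie: store the level hash tables in the order of the binary search over levels, flattened to a forward order by replicating each table along search paths. This costs an extra $O(\log w)$ factor in space, which is acceptable only if absorbed. Otherwise, the integer bound may need to be stated with the access-model caveat deferred to \Cref{sec:access}.
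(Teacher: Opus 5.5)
Your schedule, the per-item write count, the perfect-hash membership bound, and the Eytzinger (search-round) layout for abstract-key predecessor match the paper's proof. The gap is the integer line: you never establish $\Rhat=O(\lambda\log n)$. Replicating level tables along search paths would, by your own estimate, inflate each component by a $\Theta(\log w)=\Theta(\lambda)$ factor. That makes $E=O(n\log n\,\lambda)$ and the product $O(n\log^2 n\,\lambda^2)$ rather than $O(n\log^2 n\,\lambda)$. Your last fallback weakens the statement instead of proving it.

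The missing observation is that no replication is needed, for the same reason your Eytzinger argument works. The x-fast part of a y-fast trie binary-searches the prefix lengths $\{0,\dots,w\}$ for the longest prefix of $x$ that is present. That binary search is a root-to-leaf descent in one fixed balanced tree on the lengths, and each length is a single node of it. So store the per-length hash tables in breadth-first order of that tree, each table once, with each perfect-hash directory ahead of its slots. Pad each table to $O(m/w)$ slots, so offsets are public and the total is $O(m)$. Every probe sequence then moves strictly forward, at $O(1)$ reads per probe and $O(\log w)$ probes.

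For the final step, take bucket minima as representatives, so the predecessor of $x$ lies in the bucket of its predecessor representative. Let each one-child trie node store, in its table entry, the predecessor representative for queries leaving on the missing side, together with that bucket's address. The head reads this when the node is hit, since later probes may be deeper misses. Place the $\Theta(w)$-key buckets, each in Eytzinger order, after all tables. This gives $O(\lambda)$ forward reads and $O(m)$ tokens per component. It is exactly the ``round-by-round layout'' the paper's proof invokes for the y-fast trie.
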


\begin{proof}
Keep the components whose sizes are the set bits of the insert count, merging
equal-sized ones on each insert.  A component of size $2^k$ is built at most
$n/2^k$ times, each at $\Theta(2^k)$ tokens, so $E=O(n\log n)$, and at most
$\lceil\log_2(n+1)\rceil$ components are live at any time.  Each component carries
a forward-readable static structure, length-prefixed so the head skips its body
after one header read.  A static perfect hash answers membership in $O(1)$ reads
per component, even against an adversarial query, so $\Rhat=O(\log n)$.  For
abstract-key predecessor, a balanced search tree laid out in search-round order,
all depth-one nodes, then all depth-two, and so on, is traversed root to leaf as a
forward scan of $O(\log m_k)$ comparison reads in component $k$, summing to
$O(\log^2 n)$ over the active components.  For integer predecessor, a y-fast trie
answers in $O(\lambda)$ reads per component under the same round-by-round layout,
giving $\Rhat=O(\lambda\log n)$.
\end{proof}

The membership and predecessor products are the first the lower bounds match
(\Cref{sec:membership,sec:order}); select needs a different schedule and is
polynomially larger (\Cref{sec:select}).  How the predecessor bound responds when
these conditions are relaxed, and why the relaxations are not interchangeable, is
the subject of \Cref{sec:access}.

%% file: sections/membership.tex
\section{Membership}\label{sec:membership}

The Bentley--Saxe schedule of \Cref{sec:model} answers membership with product
$n\log^2 n$. We show this is best possible, and in doing so establish the three
structural facts the later lower bounds reuse: a level decomposition that the
tail-only discipline forces, an edit floor on the write cost, and a counting
bound that trades depth against frontier. Together they expose the mechanism the
paper turns on, that $E\Rhat$ factors into a write budget and a sum of per-level
read charges. Membership is the case where a hash holds every charge at $O(1)$;
order and rank raise it.

\subsection{Merge Forests}

Fix an execution. A live token in $s_t$ has a \emph{birth step}, the step at
which it was last pushed, and tokens with a common birth step form a
\emph{level}; by \Cref{lem:serial} a level is just a component of \Cref{sec:model}
in token-string form. Write $W(t)$ for the number of live levels at time $t$.

\begin{lemma}[Level Decomposition]\label{lem:leveldecomp}
At every time $t$ the live levels are contiguous and ordered by birth,
$s_t=L_{t_1}\cdots L_{t_W}$ with $t_1<\cdots<t_W$, and the edit $s_{t-1}\to s_t$
pops a suffix of them and pushes one new top level.
\end{lemma}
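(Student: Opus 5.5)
We prove \Cref{lem:leveldecomp} by induction on $t$, using only the tail-only (LIFO) form of each edit and the canonical-updates clause of \Cref{def:strongmat}. The inductive hypothesis at time $t-1$ is that $s_{t-1}=L_{t_1}\cdots L_{t_W}$ with the levels contiguous and $t_1<\cdots<t_W$. The base case is $s_0$, the empty string, which has no levels and holds vacuously.

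For the step, write the edit $s_{t-1}\to s_t$ as a pop of the tail of $s_{t-1}$ beyond $\mathrm{LCP}(s_{t-1},s_t)$, followed by a push of the tail of $s_t$ beyond that prefix. Every token pushed at step $t$ gets birth step $t$. Every token in the common prefix keeps its old birth step, since it was not re-pushed. So $s_t$ is the surviving prefix of $s_{t-1}$ followed by one contiguous block of birth step $t$. Because $t$ exceeds every earlier birth step, the new block is the youngest level and sits at the tail, which preserves birth order.

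The one real issue is whether the common prefix can cut through a level, leaving a fragment $L'_{t_j}$ of some $L_{t_j}$ live. Such a fragment would be a level that is not a whole component. This is where the model enters. By \Cref{lem:serial} and canonical updates, each step is either an insert, which pops nothing, or an insert plus one merge that pops exactly a suffix of whole component blocks, each carrying its length-prefix header. Under exact layout, component boundaries are fixed functions of the size sequence, so the pop length is a sum of whole block lengths. A partially popped block would break the correspondence with components, and the header and local encoding would then be inconsistent with the serialization. Hence the boundary $|\mathrm{LCP}|$, or more precisely the pop point, lies on a level boundary. I would phrase the edit as ``pop the blocks $L_{t_{i}},\dots,L_{t_W}$, push $L_t$'', and define the cost with respect to that pop point rather than the literal LCP. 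If the new block happens to share a prefix of tokens with the popped ones, this changes $d_{\mathrm{LIFO}}$ by at most a constant factor, which is harmless, and I would note this remark explicitly.

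With that settled, the surviving levels $L_{t_1},\dots,L_{t_{i-1}}$ are intact, contiguous and birth-ordered, and $L_t$ is appended with $t>t_{i-1}$. This closes the induction and gives both claims: the decomposition at time $t$, and that each edit pops a suffix of levels and pushes one new top level.
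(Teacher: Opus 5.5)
Your induction is the paper's argument written out step by step. The canonical-updates clause makes each step pop a suffix of whole components and push one merged block whose tokens all get birth step $t$, so the levels stay contiguous and birth-ordered. Placing the pop point at the component boundary rather than at the literal $\mathrm{LCP}$ is the convention the paper's one-line proof already uses implicitly. Your side remark that $d_{\mathrm{LIFO}}$ changes by only a constant factor is not needed here, since the lemma is purely structural; that remark in any case rests on the same component-based accounting as \Cref{lem:serial}, not on anything you prove.
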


\begin{proof}
A canonical update pops a suffix of the active components and pushes one merged
component (\Cref{sec:model}). The tokens pushed at one step are consecutive at
the tail and share a birth step, so each level is contiguous and older levels lie
earlier; an edit pops a suffix of whole levels and makes the pushed tokens the
new top level.
\end{proof}

The decomposition makes the execution a \emph{merge forest}: an internal node for
each step that merges, its children the levels popped there and the singleton
inserted then, and a leaf for each insert. Leaf $j$ takes part in $d_j$ merges,
one for each of its internal ancestors. The rewritten sample mass is therefore
\[
  M:=\sum_j d_j=\sum_P|P|,
\]
$|P|$ being the number of leaves below the internal node $P$. The forest is fixed by the
schedule alone, whatever the tokens mean, and the lower bounds read off it.

The hard instances carry their entropy in the values: each inserted value is
independent and uniform over a $\mathrm{poly}(n)$ universe, so it is
$\Theta(\log n)$ bits, the content of one token.

\begin{lemma}[Incompressibility]\label{lem:inc}
On the i.i.d.-value instance, a materialized level that is the sole source for $m$
of its samples occupies $\Omega(m)$ tokens.
\end{lemma}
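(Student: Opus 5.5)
The plan is a counting (entropy) argument: because the level is the only place these $m$ samples are stored, it must hold their full information content. Fix a materialized level $L$ that is the sole source for a set $S$ of $m$ of its samples, meaning no other live level and no external state carries the values of $S$. By strong materialization (\Cref{def:strongmat}), no sample-dependent state lives outside the components, and a query's dependence on the data passes only through reads of component tokens. Therefore every value in $S$ must be a function of the token string of $L$, once the rest of the configuration is fixed. The step I would establish first is this recoverability: for each $j\in S$, a membership query on key $x_j$ must return $y_j$ with bounded error. By local encoding, and because $L$ is the sole source, the only tokens whose distribution depends on $y_j$ are those of $L$.

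Next I would turn recoverability into a bound on entropy. Condition on all keys, on the insertion history, and on the values of samples outside $S$. Under this conditioning the values $(y_j)_{j\in S}$ are independent and uniform over a universe of size $n^{c}$, so together they carry $cm\log_2 n$ bits of entropy. The encoding of $L$ is a string of $\ell$ tokens with $w=\Theta(\log n)$ bits each. By exact layout its length $\ell=\ell(|L|)$ is publicly fixed, so the length itself reveals nothing about the values. If decoding were deterministic and exact, injectivity would give $\ell w\ge cm\log_2 n$, and hence $\ell=\Omega(m)$.

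The main obstacle is randomized, bounded-error decoding. To handle it, I would use Fano's inequality on each coordinate. With error at most $\delta<1/2$ per query, the mutual information satisfies $I(y_j;\mathrm{enc}(L))\ge (1-\delta)c\log_2 n-1$. Since the $y_j$ are independent, chain-rule subadditivity gives $I\bigl((y_j)_{j\in S};\mathrm{enc}(L)\bigr)\ge\sum_{j\in S} I(y_j;\mathrm{enc}(L))$. This total is at most $H(\mathrm{enc}(L))\le \ell w$. Combining the two bounds gives $\ell w=\Omega(m\log n)$, so $\ell=\Omega(m)$.

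Two points need care. First, the evaluator's coins are independent of the data, so they do not affect the mutual information. Second, reads of other levels contribute nothing about $S$; this is the sole-source hypothesis together with local encoding. If an exact-layout lower bound $\ell(m)\ge c_0m$ is already assumed, the lemma is immediate for levels storing all their samples. The content of the argument is that this holds informationally, even for a non-exact layout, restricted to the $m$ sole-source samples.
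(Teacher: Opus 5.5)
Your proposal is correct and is the paper's argument: the state must determine the $m$ independent uniform values, worth $\Theta(m\log n)$ bits, and materialization confines them to this one level, which therefore needs $\Omega(m)$ tokens. Your Fano step is a more careful rendering of ``the state determines the values.'' It is not strictly needed here, because a locally encoded level is a deterministic function of the data, and an evaluator with error below $\tfrac12$ cannot distinguish two inputs that produce identical states.

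One thing to tidy: you use exact layout to neutralize the length, but the paper invokes this lemma exactly when no layout is fixed (the second branch of \Cref{cor:editfloor}). In that case, let $\ell$ be the level's maximum length and note that fewer than $2^{w(\ell+1)}$ token strings have length at most $\ell$; this gives the same $\Omega(m)$ bound.
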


\begin{proof}
The state must determine those $m$ values, whose joint entropy is
$\Theta(m\log n)$ bits, or $\Theta(m)$ tokens. Materialization places them in this
level alone, so it occupies $\Omega(m)$ tokens.
\end{proof}

\begin{corollary}[Edit Floor]\label{cor:editfloor}
With \emph{augmented depth} $\Dtil:=(M+n)/n$, the write cost is
$E=\Omega(n\Dtil)$.
\end{corollary}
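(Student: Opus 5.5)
We must show $E=\Omega(M+n)$. We split $M+n$ into the mass written at merge steps ($M=\sum_P|P|$) and the mass written at insert steps ($n$), and bound each part separately by a constant fraction of $E$.

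\emph{Insertions.} Each insert pushes a singleton level holding one fresh i.i.d. value. At the moment of its push, that level is the sole source for this sample: it has just been created, and no other level contains it. \Cref{lem:inc} with $m=1$ then charges $\Omega(1)$ pushed tokens to this insert. Summing over the $n$ inserts gives $\Omega(n)$.

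\emph{Merges.} Fix an internal node $P$ of the merge forest, i.e.\ a step at which a suffix of levels together with the new singleton is popped and one new level is pushed. By \Cref{lem:leveldecomp} the popped levels disappear from $s_t$. The new level is therefore the only live level holding those $|P|$ samples. Strong materialization forbids external state, so the state's knowledge of their values resides in this level alone. Applying \Cref{lem:inc} with $m=|P|$, the pushed block has $\Omega(|P|)$ tokens. Under exact layout this even holds deterministically, since the level has length $\ell(|P|)\ge c_0|P|$. The cost $d_{\mathrm{LIFO}}(s_{t-1},s_t)$ counts every pushed token, so this step contributes $\Omega(|P|)$ to $E$.

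Summing over all steps (each step either inserts only, or inserts and merges) gives $E\ge c(n+\sum_P|P|)=c(n+M)=\Omega(n\Dtil)$.

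The one point requiring care is the \emph{sole source} hypothesis of \Cref{lem:inc}. A merged level must not be able to offload its values to surviving older levels. \Cref{lem:leveldecomp} rules this out: older levels are never rewritten at this step, and by local encoding they depend only on their own samples. The exact-layout fallback $\ell(m)\ge c_0m$ avoids the entropy argument altogether, so I expect no real obstacle.
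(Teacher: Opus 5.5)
Your proof is correct and follows the paper's route: apply \Cref{lem:inc} (or, under exact layout, the length bound $\ell(m)\ge c_0 m$) to each level at the moment it is built, when it is the sole source of its samples, and then sum over the built levels. One small accounting point, which the paper's own ``total size $n+M$'' shares: under canonical updates a merging step pushes no separate singleton, so your per-insert $\Omega(1)$ charge is really absorbed into that step's single block, and this is harmless because $|P|\ge 2$ gives $c|P|\ge \tfrac{c}{2}(|P|+1)$.
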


\begin{proof}
Under exact layout a level of $m$ samples costs $\Theta(m)$ tokens, so
$E=\Theta(n+M)=\Theta(n\Dtil)$. With no layout fixed, \Cref{lem:inc} applies to
each level when it is built, the sole source for its samples at that moment, and
summing over the levels built, of total size $n+M$, gives $E\ge c(n+M)=cn\Dtil$
for a constant $c\in(0,1]$.
\end{proof}

The third fact ties depth to frontier. Its counting argument is the
online-labeling lemma of Bul\'anek, Kouck\'y, and Saks~\cite{bks2012}, recast in
merge-forest terms. Write $\What:=\max_t W(t)$ for the \emph{frontier width}.

\begin{lemma}[Frontier Width]\label{lem:fw}
Every merge forest on $n$ leaves with augmented depth $\Dtil$ and frontier width
$\What$ has $\Dtil\What=\Omega(\log^2 n)$.
\end{lemma}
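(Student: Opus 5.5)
The plan is to prove a counting statement about shallow leaves and then convert it to the average-depth bound by Markov's inequality and an optimization over $\Dtil$ and $\What$. By \Cref{lem:leveldecomp} the forest is produced by suffix merges on a stack of at most $\What$ levels. Write $d_j$ for the depth of leaf $j$; the quantity $M=\sum_j d_j$ is fixed by the schedule alone.

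\textbf{Step 1: the counting lemma.} Let $f(d,w)$ be the maximum number of leaves of depth at most $d$ in a merge forest whose frontier width never exceeds $w$. The target is
\[
  f(d,w)\le \binom{d+w}{w}.
\]
I would prove this by induction on $d+w$. The structure to use is the bottom level, at stack position $1$, and the merges that reach it.

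Cut time into epochs at the steps where the merge pops the bottom level, or creates it from the whole stack. Inside one epoch the bottom level is frozen. All activity then happens on the stack above it, which has width at most $w-1$.

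Split the shallow leaves into two kinds. A leaf that is never absorbed into the bottom level during its epoch lives in a forest of width $w-1$, so these leaves are counted by $f(d,w-1)$. A leaf that is absorbed into the bottom level pays one merge for that absorption. Viewed through the subsequent history of the bottom level, it is a leaf of depth at most $d-1$ in a forest of width $w$, giving $f(d-1,w)$.

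This yields the Pascal recursion $f(d,w)\le f(d,w-1)+f(d-1,w)$, with base cases $f(0,w)\le w$ and $f(d,1)\le d+1$. In the base case $f(d,1)\le d+1$, the single level is rebuilt on every insert, so each leaf's depth equals the number of later inserts.

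\textbf{Step 2: from average to typical depth.} By Markov's inequality, at least $n/2$ leaves have $d_j\le 2\Dtil$. Therefore
\[
  \binom{2\Dtil+\What}{\What}\ge n/2, \qquad\text{that is,}\qquad \log\binom{D+W}{\min\{D,W\}}\ge \log n-1,
\]
with $D=2\Dtil$ and $W=\What$. Write $a=\max\{D,W\}$ and $b=\min\{D,W\}$. Then $\log\binom{a+b}{b}\le b\log(e(a+b)/b)$.

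If $a\le 4b$, this is $O(b)$, so $b=\Omega(\log n)$ and hence $DW\ge b^2=\Omega(\log^2 n)$.

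Otherwise set $r=a/b\ge 4$. Then $b\log(2er)\ge \log n-1$, so
\[
  DW=b^2r\ge \frac{r\,(\log n-1)^2}{\log^2(2er)}=\Omega(\log^2 n),
\]
because $r/\log^2(2er)$ is bounded below by a positive constant for $r\ge 4$. Since $\Dtil\ge 1$ always, constant factors are harmless, and $\Dtil\What=\Omega(\log^2 n)$ follows.

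\textbf{Main obstacle.} The hard step is making the recursion in Step 1 rigorous. Two things must be checked:
\begin{itemize}
  \item A leaf that joins the bottom level is never charged twice across epochs.
  \item The "absorbed" leaves really embed into a width-$w$ forest with one unit less depth.
\end{itemize}
The latter requires contracting each epoch's merge into the bottom level to a single insert, while preserving the suffix-merge structure.

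My first attempt is to follow the Bul\'anek--Kouck\'y--Saks online-labeling argument~\cite{bks2012} directly. Label each leaf by the nonincreasing sequence of stack positions it occupies at the merges it undergoes. Two shallow leaves with the same label must then be separated by a merge that reaches the lowest position on the path. Such label sequences of length at most $d$ over $[w]$ number $\binom{d+w}{w}$, which would give the bound. The delicate point is this injectivity.
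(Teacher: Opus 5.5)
Your architecture is the paper's: a binomial bound on leaves of bounded depth under a frontier cap, then Markov, then $\log\binom{a+b}{b}=O(\sqrt{ab})$, and your case split for that last estimate is correct. However, Step~1 is false as stated for unaugmented depth. Your own base case $f(0,w)=w$ already exceeds $\binom{w}{w}=1$.

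The Pascal recursion fails as well. Take $w=2$: push $1$, push $2$, merge $[1],[2]$ with the singleton $3$ into $[123]$, push $4$, and merge $[4]$ with the singleton $5$. The frontier never exceeds $2$ and every leaf has depth exactly $1$. So $f(1,2)\ge 5$, whereas $f(1,1)+f(0,2)=4$ and $\binom{3}{2}=3$.

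The leaf your split drops is the singleton inserted at the merge that pops the bottom level. Afterwards it is not above the bottom level, and it is not a leaf of any earlier forest. The embedding you propose for absorbed leaves also cannot work, since contracting an epoch to a single insert identifies many leaves. The labeling fallback is not injective either. Leaves $1,2,3$ share the label $(1)$ if you record the merged level's position, and leaves $2$ and $4$ share $(2)$ if you record the position before the merge.

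The repair is to split only at the \emph{last} merge $t^\star$ that pops the bottom level.
\begin{itemize}
\item Leaves inserted after $t^\star$ live above the fixed bottom level. They form a merge forest of width at most $w-1$ with unchanged depths.
\item Leaves inserted before $t^\star$ all take part in the merge at $t^\star$ and in no later merge. In the prefix history of steps $1,\dots,t^\star-1$, whose frontier is at most $w$, each has depth exactly $d_j-1$.
\item The singleton $t^\star$ is the one extra leaf. If no such merge exists, leaf $1$ is the extra leaf and everything else lives in width $w-1$.
\end{itemize}
Thus $f(d,w)\le f(d,w-1)+f(d-1,w)+1$ with $f(d,0)=f(-1,w)=0$, which solves to $f(d,w)\le\binom{d+w+1}{w}-1$. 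This is tight on the example above, and it is the paper's $\binom{h+W}{W}$ at augmented height $h=d+1$. Splitting at the last such merge also removes your double-charging worry, since earlier epochs are handled inside $f(d-1,w)$ by induction.

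Markov gives $d_j+1\le 2\Dtil$ for at least $n/2$ leaves. Hence $n/2\le\binom{\lceil 2\Dtil\rceil+\What}{\What}$, which is exactly the inequality of your Step~2, and the rest goes through.

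The paper reaches the same count differently. It decomposes the augmented tree at its root, whose $i$-th oldest child is built while $i-1$ older siblings hold slots. This gives $N(h,W)\le 1+\sum_{q=1}^{W}N(h-1,q)$ and then the hockey-stick identity. Your corrected split is the Pascal form of that recursion.
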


\begin{proof}
With a dummy root above the final frontier, a leaf in $d_j$ merges has augmented
depth $d_j+1$, of mean $\Dtil$. Let $N(h,W)$ be the most leaves an ordered suffix-merge
tree of height at most $h$ can have while its live frontier never exceeds $W$. Its
root pops a suffix of $r\le W$ children, each a completed subtree that holds a
frontier slot until the merge, together with the singleton inserted then, a
height-$0$ child holding no slot. With $N(0,W)=1$,
\[
  N(h,W)\ \le\ 1+\sum_{q=1}^{W}N(h-1,q)\ \le\ \sum_{q=0}^{W}\binom{h-1+q}{q}
   \ =\ \binom{h+W}{W},
\]
the singleton being the $q=0$ term and the last step the hockey-stick identity.
For $a\le b$, $\log_2\binom{a+b}{a}\le a\log_2\frac{e(a+b)}{a}=O(\sqrt{ab})$ using
$\log_2 u\le 2\sqrt u$. By Markov at least $n/2$ leaves have augmented depth at
most $2\Dtil$; keeping them and suppressing empty internal nodes leaves an ordered
tree with at least $n/2$ leaves, height at most $\lceil 2\Dtil\rceil$, and
frontier at most $\What$, so
\[
  \log_2\tfrac n2\ \le\ \log_2\binom{\lceil 2\Dtil\rceil+\What}{\What}
   \ =\ O\!\bigl(\sqrt{\Dtil\What}\bigr),
\]
that is $\Dtil\What=\Omega(\log^2 n)$.
\end{proof}

\subsection{The Membership Bound}

The read side needs one fact: a forward head must visit every live level.

\begin{lemma}[Sequential Access]\label{lem:seqaccess}
For an oblivious schedule, some strongly materialized realization forces
$\Rhat\ge\What$ on a fresh membership query against a deterministic evaluator, and
$\Rhat\ge\What/3$ against a bounded-error one.
\end{lemma}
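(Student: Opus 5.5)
We fix a time $t^\ast$ at which the oblivious schedule has $W(t^\ast)=\What$ live components $C_1,\dots,C_{\What}$. The idea is to choose a strongly materialized realization in which membership of a fresh key $x$ cannot be settled without reading inside every one of these components. Because the schedule is oblivious and the layout exact, the boundaries and addresses of the $C_i$ depend only on the size sequence, so they are fixed before any key is chosen. We use a realization whose local encoding is a static perfect-hash dictionary per component, as in \Cref{prop:baseline}. Then we let the adversary choose the key sets: $x$ is either absent everywhere, or planted as a sample of a single component $C_i$, and $i$ is left undetermined.

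\textbf{Deterministic case.} We run the evaluator on $x$ against the state in which $x$ is absent. Suppose the head skips the body of some $C_i$, so that it reads at most its $O(1)$ header, or a number of tokens of $C_i$ that does not include a token exposing $x$. By local encoding, replacing one sample of $C_i$ with $x$ changes only $C_i$'s tokens. By exact layout, the header and all other components are unchanged, and no external state exists. We choose the replaced sample so that the tokens the head reads in $C_i$ are unaffected. This is possible because of the incompressibility view of \Cref{lem:inc}: the head has not read the slot holding $x$, and with i.i.d.\ values and abstract labels an unread slot is free for the adversary. The evaluator's transcript is then identical on the two inputs, yet the correct answers differ, a contradiction. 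Hence the head must read at least one body token of every live component, which gives $\Rhat\ge\What$.

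\textbf{The main obstacle.} The delicate step is to show that a read of the body of $C_i$ is genuinely needed, and not just its header. To handle it, we take the realization so that the header contains only the length, which is size-determined, and carries no sample information. The "no external state" clause then ensures that the skip decisions and the transcript outside $C_i$ are data-independent.

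\textbf{Bounded-error case.} We use Yao's principle. Take the input distribution that puts probability $1/2$ on "absent" and spreads the remaining $1/2$ uniformly over "planted in $C_i$" for $i\le\What$. Fix a deterministic evaluator with error at most $1/3$, averaged over this distribution. On the absent input, let $S$ be the set of components whose bodies it reads. For each $i\notin S$, the planted-in-$C_i$ transcript coincides with the absent one, by the argument above, so the evaluator errs on one of the two inputs. Charging these errors gives
\[
\tfrac12\cdot\mathbf 1[\text{err on absent}]+\tfrac{1}{2\What}\,(\What-|S|)\cdot\mathbf 1[\text{no err on absent}]\le\tfrac13 .
\]
It follows that $|S|\ge\What/3$, so the expected read cost is at least $\What/3$. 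This bound transfers to the randomized evaluator, whose coins are data-independent. The constant $1/3$ needs only mild bookkeeping of the error split.
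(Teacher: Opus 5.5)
Your core mechanism matches the paper's. At a maximum-frontier time you plant the fresh key $x$ into one level and use local encoding, exact layout, and no external state to show the transcript cannot change if that level is never read. You then couple the all-absent input with each planted one. Two steps, however, do not go through as written.

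\textbf{The encoding is not yours to choose.} You fix a perfect-hash dictionary per component, with headers holding only the length. The encoding and the evaluator belong to the algorithm; the adversary controls only the realization, meaning the keys and values placed on the oblivious size skeleton. A bound proved for one chosen encoding gives nothing in \Cref{thm:membership}, which needs $\Rhat\ge\What$ for every strongly materialized algorithm. Your ``main obstacle'' disappears once the quantifiers are right. You need one charged read per level, not a body read, and a header read is itself charged. So argue only about a level of which the head reads no token at all. Then the head's position, control state, and skips are identical on the absent and planted-in-$C_i$ inputs, the answers coincide, and one of them is wrong. Your extra step for partially read levels should be dropped. Choosing the replaced sample so that the tokens read are unaffected, ``by \Cref{lem:inc}'', is unsupported for a general encoding: incompressibility says nothing about which tokens an edit changes. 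Under a hash encoding, one read of the right slot already settles membership.

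\textbf{The randomized transfer fails.} A randomized evaluator with error at most $\tfrac13$ on every input need not have every coin fixing with distributional error at most $\tfrac13$ under your distribution. For a fixing with larger error, your dichotomy (erring on the absent input already costs $\tfrac12>\tfrac13$) breaks. For example, a fixing that answers ``present'' without reading anything has distributional error $\tfrac12$ and $S=\emptyset$. It can occur with probability $\tfrac13$ inside a valid evaluator that otherwise reads everything. The standard Yao step would only guarantee fixings of error $\tfrac23$, which your inequality cannot absorb. The repair is not to fix coins. For each level $i$, on the coin outcomes where level $i$ goes unread on the absent input, the two transcripts agree, so the evaluator errs on one of the two inputs. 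Hence $\Pr[\text{level } i \text{ unread}]\le\Pr[\text{err on absent}]+\Pr[\text{err on planted}_i]\le\tfrac23$. Summing $\Pr[\text{level } i \text{ read}]\ge\tfrac13$ over the $\What$ levels gives $\Rhat\ge\What/3$; this per-level coupling is exactly the paper's argument. Equivalently, you may average your inequality linearly over the coins instead of applying it per fixing, after which $\mathbb E[\What-|S|]\le\tfrac23\What$ follows directly.
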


\begin{proof}
At a time of maximum frontier, take the realization in which every live level
straddles a fresh key $x$ matching none of its samples. Each live level must
be decided on its own, that $x$ is absent there, and under strong materialization
no other level constrains it, so a level left unread can be filled with $x$ by the
adversary, flipping the answer. A deterministic evaluator therefore reads all
$\What$ levels, so $\Rhat\ge\What$. For a bounded-error one, couple this input
with the one in which only level $i$ gains $x$: the transcripts agree until level
$i$ is read, so skipping it makes the evaluator answer both alike and err on one,
and the two error probabilities sum to at least the probability of skipping level
$i$. Bounded error keeps that sum below $\tfrac23$, so level $i$ is read with
probability at least $\tfrac13$ and $\Rhat\ge\What/3$. A length header does not
escape the charge, that read being the one counted.
\end{proof}

\begin{theorem}[Tight Membership Bound]\label{thm:membership}
Strongly materialized membership has $E\Rhat=\Theta(n\log^2 n)$, attained at
$(E,\Rhat)=(\Theta(n\log n),\Theta(\log n))$ by hashing on the Bentley--Saxe
schedule.
\end{theorem}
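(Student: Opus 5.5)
The proof has two halves: the upper bound comes directly from \Cref{prop:baseline}, and the lower bound combines the three structural facts of this section. For the upper bound, I would run the balanced Bentley--Saxe schedule, which has $E=O(n\log n)$ and at most $\lceil\log_2(n+1)\rceil$ live components. Each component carries a static perfect hash behind a length header, so a membership query spends $O(1)$ reads per component and $\Rhat=O(\log n)$. Their product is $E\Rhat=O(n\log^2 n)$. This schedule is oblivious, depending only on the insert count, and the per-component hash tables are local encodings, so the execution is strongly materialized and admissible.

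For the lower bound, I fix an arbitrary strongly materialized execution with an oblivious schedule. By \Cref{lem:serial} I can work with the token string, and by \Cref{lem:leveldecomp} its history is a merge forest on $n$ leaves with augmented depth $\Dtil$ and frontier width $\What$. I then chain three inequalities:
\begin{itemize}
\item \Cref{cor:editfloor} gives $E=\Omega(n\Dtil)$.
\item \Cref{lem:seqaccess} gives $\Rhat\ge\What/3$ against a bounded-error evaluator, using the realization at a time of maximum frontier in which every live level straddles a fresh absent key.
\item \Cref{lem:fw} gives $\Dtil\What=\Omega(\log^2 n)$.
\end{itemize}
Multiplying, $E\Rhat=\Omega(n\Dtil\What)=\Omega(n\log^2 n)$.

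The main obstacle is that the two halves of the lower bound must refer to the same execution. The merge forest, and hence $\Dtil$ and $\What$, is fixed by the oblivious schedule alone. The adversarial instance of \Cref{lem:seqaccess}, however, chooses keys so that each level straddles $x$, while \Cref{cor:editfloor} may invoke the i.i.d.-value instance when the layout is not fixed.

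I would reconcile these by building one instance: keys chosen so that every level at the maximum-frontier time straddles a fresh key, and values drawn i.i.d. uniform. Under exact layout the write cost is $\Theta(n+M)$ regardless of content, so the value distribution is irrelevant to $E$ in any case. Obliviousness guarantees the forest is the same on every input. Hence both $E=\Omega(n\Dtil)$ and $\Rhat\ge\What/3$ hold for this single execution, and the product bound follows, with the matching upper bound completing $\Theta(n\log^2 n)$.
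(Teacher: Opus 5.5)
Your proposal is correct and follows the paper's proof. The upper bound is \Cref{prop:baseline}, and the lower bound multiplies $E=\Omega(n\Dtil)$ from \Cref{cor:editfloor}, $\Rhat=\Omega(\What)$ from \Cref{lem:seqaccess}, and $\Dtil\What=\Omega(\log^2 n)$ from \Cref{lem:fw}; your reconciliation of the two instances only makes explicit what the paper leaves implicit, and since exact layout makes $E=\Theta(n+M)$ a function of the oblivious schedule alone, the i.i.d.-value instance is not actually needed.
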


\begin{proof}
The upper bound is \Cref{prop:baseline}. For the lower bound, \Cref{lem:seqaccess}
gives $\Rhat\ge\What$, \Cref{cor:editfloor} gives $E=\Omega(n\Dtil)$, and
\Cref{lem:fw} gives $\Dtil\What=\Omega(\log^2 n)$, so
$E\Rhat=\Omega(n\Dtil\What)=\Omega(n\log^2 n)$.
\end{proof}

Nothing in the lower bound used what membership computes, only that every live
level must be decided, so $n\log^2 n$ is a floor for the whole model, reached by
membership. It is robust to randomization through \Cref{lem:seqaccess}, and it
constrains the tradeoff and not just the product: $\Rhat=O(\log n)$ forces
$\What=O(\log n)$, hence $\Dtil=\Omega(\log n)$ and $E=\Omega(n\log n)$. Membership
is the hashable extreme, each level settled by one probe. The rest of the paper
keeps this write budget and asks whether the read side can stay at $O(1)$ per
level; once the query needs order, it cannot.

%% file: sections/order.tex
\section{Order Queries}\label{sec:order}

This section proves the middle line of \Cref{thm:main}.  The proof has three
steps.  First, strong materialization and one-way access rule out the
cross-component shortcut that fractional cascading would need.  Second, once the
components cannot share searches, a fixed state with component sizes $m_i$ forces
$\Omega(\sum_i\log(m_i+1))$ reads for a hard predecessor query.  Third, every
cheap suffix-merge history exposes some state where this summed log-size is
large, giving the product lower bound $\Omega(n\log^3 n)$.  The final subsection
transfers predecessor to rank and exact range queries.

\subsection{No Cascading Across Components}

Membership spends $O(1)$ per level because a hash names where the answer sits.
Order has no such shortcut and cannot build one by linking levels. Call a
\emph{forward bridge} from a level $L_i$ to a younger level $L_j$ any data stored
in $L_i$ that lets the head, once it has read $L_i$, reach the query's predecessor
in $L_j$ within $O(1)$ further reads.

\begin{lemma}[Causal-Bridge Obstruction]\label{lem:causalbridge}
Under forward, no-backseek access with LIFO edits, no forward bridge can be
consulted when it would help.
\end{lemma}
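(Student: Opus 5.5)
The proof is a causality argument built on \Cref{lem:leveldecomp} and \Cref{lem:serial}. Fix a state $s_t=L_{t_1}\cdots L_{t_W}$ with $t_1<\cdots<t_W$. Let $L_i$ be older than $L_j$, so $t_i<t_j$. A forward bridge from $L_i$ to $L_j$ must be data physically stored in $L_i$. Every token of $L_i$ was pushed at its birth step $t_i$. By the local-encoding clause of \Cref{def:strongmat}, those tokens are a function only of $L_i$'s own samples, their insertion order and values, and comparisons among its own keys. The samples of $L_j$ are inserted after time $t_i$, and in the i.i.d.-value instance (\Cref{lem:inc}) they are independent of everything in $L_i$. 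I will therefore couple two executions that agree up to step $t_i$ and differ only in the keys later inserted into $L_j$. Under the oblivious schedule, the tokens of $L_i$ are identical in both executions, while the predecessor of a suitable query in $L_j$ differs. Hence the data in $L_i$ cannot point to that predecessor, even in principle, and no forward bridge from older to younger levels exists at all.

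The remaining possibility is a bridge stored in the younger level: $L_j$ holds pointers into $L_i$, or in the other direction, copies of order information about $L_i$. I split this into two cases.

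The first case is that the bridge lives in $L_j$ and points back to $L_i$. The head reads levels in stack order, which is birth order by \Cref{lem:leveldecomp}. By the time it reads $L_j$ it has passed $L_i$ and cannot return, so the bridge cannot be consulted where it helps. Local encoding also forbids $L_j$ from encoding $L_i$'s interleaving in the first place.

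The second case is that the algorithm tries to fix the direction by rewriting $L_i$ to contain the bridge. LIFO edits only pop a suffix, and any suffix containing $L_i$ also contains $L_j$. After the rewrite, $L_i$ and $L_j$ are merged into a single new top level, so there is no longer a pair of distinct levels for the bridge to connect. At best it becomes an internal index of one component, which is allowed and already accounted for by the per-component search cost.

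I expect the main obstacle to be making the first paragraph rigorous for randomized, adaptive heads. The phrase ``lets the head reach the predecessor in $O(1)$ reads'' has to be converted into an information statement. My plan is to define the bridge's usefulness as reducing the candidate set of predecessor positions in $L_j$ after reading $L_i$. The no-cascade consequence of strong materialization then gives the conclusion directly: changing the keys of $L_j$ leaves $L_i$'s encoding and the layout unchanged, so the conditional distribution of the head's transcript through $L_i$ is the same under both couplings. Consequently the candidate set in $L_j$ is unchanged by anything read in $L_i$.
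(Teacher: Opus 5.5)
Your proposal is correct and follows the paper's own causal argument. A bridge consulted before $L_j$ must live in the older $L_i$, whose tokens were fixed before $L_j$'s keys existed; your coupling makes precise the paper's ``on adversarial independent keys no guess survives''. Rewriting $L_i$ cannot put a post-$L_j$ carrier ahead of $L_j$: you phrase this as the forced merge with $L_j$, the paper as the rebuilt $L_i$ landing behind $L_j$ in read order. A young-to-old pointer is reached only after its target has been passed.

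One small point: the older-level case needs only causality, not the local-encoding clause. The paper instead uses strong materialization for a separate route, a young level carrying a directory into older ones, and keeps it outside this lemma.
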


\begin{proof}
Levels are read oldest first (\Cref{lem:leveldecomp}). A bridge that speeds up a
younger level $L_j$ must be read before $L_j$, hence stored in an older level
$L_i$; but its content depends on $L_j$, fixed only when $L_j$ is built, so its
carrier must be younger than $L_j$. The two demands collide: when $L_i$ was built
$L_j$ did not yet exist, and on adversarial independent keys no guess survives.
Rebuilding $L_i$ after $L_j$ only pushes its birth past $L_j$, moving it behind in
read order, and a younger level pointing into an older one is reached only once
the older target has been passed.
\end{proof}

This is the obstruction behind Afshani's dynamic fractional-cascading lower
bound~\cite{afshani2021}, sharpened from a tax to an impossibility: the bridge a
cascade needs would have to be written before its target exists. The other route,
a young level carrying a directory into the older ones, is closed not by access
but by strong materialization, each component encoding only its own samples
(\Cref{sec:model}); \Cref{sec:nonmat} drops that assumption. What remains is the
cost of searching one level alone.

Within a level a hash still names one specific key, while the predecessor is
another, fixed by the level's order, that the hash of $x$ does not reach. Locating
it is a comparison search, and on a level of $m$ keys an adversarial query forces
$\Omega(\log m)$ reads, made precise against a hard gap distribution in
\Cref{lem:zerogap}.

\subsection{Per-State Search Lower Bound}

Fix a time and write $\Phi(t)=\sum_{\text{active }L}\log_2|L|$ for the summed
log-size of the live levels. Since the per-level searches do not combine
(\Cref{lem:causalbridge}), the read cost is their sum. This sum is not forced at
every state: with the live levels in disjoint key ranges a single header settles
all but one. What does hold is that each size skeleton, the live levels' sizes alone, admits
an adversarial key assignment making the sum necessary, which under an oblivious
schedule is all the lower bound needs.

\begin{lemma}[Zero-Side Ordered Gap]\label{lem:zerogap}
Fix stored keys $k_1<\cdots<k_m$ and handles $b<x$. For $0\le j\le m$ let $Z_j$
place $k_1,\dots,k_j<b<x<k_{j+1},\dots,k_m$, and for $0\le j<m$ let $H_j$ place
$k_1,\dots,k_j<b<k_{j+1}<x<k_{j+2},\dots,k_m$, so the predecessor of $x$ is $b$ on
$Z_j$ and $k_{j+1}$ on $H_j$. Any randomized comparison tree reads $\Omega(\log(m+1))$ keys in
expectation on $Z_J$, $J$ uniform, if it uses the known internal order of the
$k_r$, learns the order of an unread key only by reading it, and errs with
probability at most $\tfrac13$ over $\{Z_j,H_j\}$.
\end{lemma}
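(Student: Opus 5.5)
The plan is a Yao-style reduction to a deterministic search problem, followed by an information or adversary argument on the position $j$.

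By Yao's principle, it suffices to fix a deterministic comparison tree $T$ that errs on at most a $\tfrac13$ fraction, suitably weighted, of a hard distribution. That distribution puts $J$ uniform on $\{0,\dots,m\}$ and plays $Z_J$ with probability $\tfrac12$ and $H_{J'}$, $J'$ uniform on $\{0,\dots,m-1\}$, with probability $\tfrac12$. It then suffices to show that $T$ makes $\Omega(\log(m+1))$ reads on average over $Z_J$. The randomized bound follows by averaging over coins, after converting the $\tfrac13$ worst-case error into a constant average error on each half of the mixture.

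The central observation is what a single read reveals. A read of $k_r$, compared with the handles $b,x$ and with previously read keys, tells $T$ only on which side of $b$ (and of $x$) the key $k_r$ lies. On the instances $Z_j$, reading $k_r$ reveals exactly whether $r\le j$. So on the $Z$ family, $T$ is a binary search tree on the threshold $j$: every read is a one-bit question ``$r\le j$?''. This is because the internal order of the $k_r$ is known in advance and the unread keys are uninformative.

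The key step, and the main obstacle, is to show that a correct output on $Z_j$ essentially requires pinning down $j$ to $O(1)$ candidates. Instance $H_j$ agrees with $Z_j$ on every read except $k_{j+1}$: in $Z_j$ we have $k_{j+1}>x$, while in $H_j$ we have $b<k_{j+1}<x$. Likewise $H_j$ agrees with $Z_{j+1}$ except at $k_{j+1}$, which lies below $b$ in $Z_{j+1}$. Hence if $T$ on $Z_j$ never reads $k_{j+1}$, its transcript, and thus its answer $b$, is identical on $H_j$, where the correct answer is $k_{j+1}$. This is an error.

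I will pair each $j<m$ with $H_j$. The weighted error bound then forces $T$ to read $k_{j+1}$ on $Z_j$ for all but a constant fraction of $j$. Call these indices \emph{good}. For each good $j$, $T$ must also read $k_j$ (when $j\ge1$), since $Z_{j-1}$ and $H_{j-1}$ differ from $Z_j$ only there. I then handle $H_{j-1}$ symmetrically, or simply note that reading $k_{j+1}$ already suffices.

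To finish, consider the leaves of $T$ reached by the $Z_j$. The $Z$ instances form a chain of $m+1$ thresholds, and $T$'s path on $Z_j$ is determined by the answers ``$r\le j$?''. The set of $j$ reaching a given leaf is therefore an interval. If that interval contains two good indices $j<j'$, then $k_{j+1}$ is read on both paths. But its answer differs between the two (it is $>j$ versus $\le j'$), so the paths split, which is a contradiction. Hence each leaf holds at most one good index, plus bad ones. The number of distinct leaves reached is therefore $\Omega(m+1)$ over a constant fraction of a uniform $J$.

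Kraft's inequality (equivalently, the entropy bound for binary trees) then gives an expected depth of $\Omega(\log(m+1))$ under uniform $J$ restricted to the good indices. That in turn gives $\Omega(\log(m+1))$ expected reads on $Z_J$. The hardest bookkeeping is making the constant-fraction good set survive the average-error conversion from Yao's principle. I would handle it with Markov's inequality, using error $\le\tfrac13$ on each pair $\{Z_j,H_j\}$.
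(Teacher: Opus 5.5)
Your proposal is correct and is essentially the paper's argument: you fix a deterministic tree, then show that two zero instances sharing a leaf must leave unread a key whose hot alternative is then misanswered (your per-pair ``good index'' bookkeeping is the paper's count that a leaf covering $z$ zero instances forces $z-1$ hot errors), then apply Kraft to the distinct leaves, and finally pass to randomized trees by Markov over the coins. The only differences are cosmetic: you use binary Kraft on the $Z$-subtree where the paper uses ternary Kraft, and your aside that a good $j$ must also read $k_j$ does not follow from goodness, though you rightly note it is not needed.
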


\begin{proof}
Take a deterministic tree first. A read key has three relevant outcomes,
$k_r<b$, $b<k_r<x$, and $x<k_r$. Let $E_0$ count the zero instances not answered
$b$, $E_1$ the hot instances answered $b$, and $\Lambda$ the distinct $b$-leaves
reached by correctly answered zero instances. The zero instances reaching one leaf
form an interval: if $Z_a$ and $Z_c$ ($a<c$) share a $b$-leaf then no $k_r$ with
$a<r\le c$ was read, being above $x$ in $Z_a$ and below $b$ in $Z_c$, so every
$H_h$ with $a\le h<c$ follows the same transcript and is wrongly answered $b$. A
leaf covering $z$ zero instances thus forces $z-1$ hot errors, whence
$\Lambda\ge(m+1)-E_0-E_1$. Average error at most $\tfrac5{12}$ over the uniform
mixture of the $2m+1$ instances gives $E_0+E_1\le\tfrac5{12}(2m+1)$ and
$\Lambda\ge(m+1)/6$. With one correct zero instance per leaf, at depths
$d_1,\dots,d_\Lambda$, the ternary Kraft inequality $\sum_\ell 3^{-d_\ell}\le1$
forces mean depth $\ge\log_3\Lambda$, so
\[
  \mathbb E_J[\text{reads on }Z_J]\ \ge\ \tfrac{\Lambda}{m+1}\log_3\Lambda\ =\ \Omega(\log(m+1)).
\]
For a randomized tree the coin-fixed error averages at most $\tfrac13$, so by
Markov a $\tfrac15$ fraction of the fixings have error at most $\tfrac5{12}$, and
averaging over them keeps $\Omega(\log(m+1))$.
\end{proof}

\begin{lemma}[Anchor-and-Zones Floor]\label{lem:anchorzones}
Fix a public size vector $(m_1,\dots,m_W)$ of an oblivious schedule and an
evaluator of error at most $\tfrac13$. Some strongly materialized state with these
sizes admits a query $x$ on which the evaluator has
$\Rhat\ge c\sum_{i=1}^W\log_2(m_i+1)$, for an absolute $c>0$.
\end{lemma}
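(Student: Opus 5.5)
The plan is to build a product distribution over per-level instances, one independent copy of the Zero-Side Ordered Gap instance (\Cref{lem:zerogap}) in each level, and then apply a direct-sum argument that is legitimate because of strong materialization. The construction uses one global anchor $b$ and one query handle $x$, with $b<x$ adjacent in the global order apart from at most one stored key. Each level $i$ gets its own zone: an index $J_i$ uniform in $\{0,\dots,m_i\}$, placing $k^{(i)}_1,\dots,k^{(i)}_{J_i}<b<x<k^{(i)}_{J_i+1},\dots$. The zero instance of the whole state is the tuple $(Z^{(1)}_{J_1},\dots,Z^{(W)}_{J_W})$, so the predecessor of $x$ is $b$. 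The anchor $b$ itself is stored, say as the minimal key of a designated singleton or in level~1, with its value fixed. The hot instances are obtained by switching a single level $i$ to $H^{(i)}_{J_i}$, which makes the answer the value of $k^{(i)}_{J_i+1}$. Because values are i.i.d.\ and keys are abstract, each level's encoding depends only on its internal order, labels, and values. This is the local-encoding clause of \Cref{def:strongmat}, and the exact-layout clause fixes all addresses by $(m_1,\dots,m_W)$. So every choice of $(J_i)$ and every hot switch yields the same component encodings and layout, with only the interleaving changing.

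The main step is the direct-sum reduction. Fix a level $i$ and the other indices $J_{-i}$. I would simulate the global evaluator as a comparison tree for the single-level problem of \Cref{lem:zerogap} on level $i$. The simulator knows $J_{-i}$, so it can answer every comparison involving keys of other levels. Reads in other levels are free to simulate because their tokens do not depend on $J_i$ (no-cascade). Reads in level $i$ are charged, and their outcomes relative to $b,x$ are exactly the three-way outcomes of the lemma. The simulator outputs $b$ if the global answer is $b$, and otherwise it outputs the returned key. The ordered-forward scan and the oracle rule that order is learned only by comparison after reading ensure the simulator learns level-$i$ order only through level-$i$ reads.

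Error control needs care, and I expect it to be the main obstacle. The global evaluator has error $\le\tfrac13$ on each instance of the family, but the lemma needs error $\le\tfrac13$ over the mixture $\{Z_j,H_j\}$ for level $i$ with $J_{-i}$ fixed. Because the global family contains every $(J_{-i},$ zero or hot at $i)$ combination and the guarantee is worst-case per instance, the restriction inherits error $\le\tfrac13$ pointwise. So \Cref{lem:zerogap} applies for each fixed $J_{-i}$ and gives $\mathbb E[\text{reads in level }i]\ge c\log_2(m_i+1)$ on the zero instance. Averaging over $J_{-i}$ and summing over $i$ by linearity of expectation yields an expected total of at least $c\sum_i\log_2(m_i+1)$ on the random zero state. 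Some fixed $(J_i)$ then attains it, and that is the claimed state and query.

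Two subtleties must be checked. The first is that header reads and skip decisions in level $i$ must be counted as level-$i$ reads; the headers carry only the public length, so they reveal nothing about $J_i$ and can simply be simulated. The second is that the anchor $b$ must not leak $J_i$, which holds since its position relative to each level is fixed by the zone construction only through the interleaving, which no encoding sees.
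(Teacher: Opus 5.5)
\textbf{There is a gap: the level that stores $b$.} Your treatment of every level that does not contain $b$ matches the paper's. You use an independent uniform $J_i$ per level and expose all outside information, $b$ included, for free. Local encoding and exact layout keep the other levels' tokens independent of $J_i$, the error bound is inherited pointwise, and linearity plus averaging fixes one state.

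The problem is where $b$ lives. The zero-instance answer is $b$'s value, and answers must be decoded from words read, so $b$ is a key of some level $\ell$. That level cannot be one of your independent \Cref{lem:zerogap} instances with $b$ as an external free handle.
\begin{itemize}
\item If $b$ sits at a fixed public position, such as the minimum of level~1 as you suggest, the evaluator finds it in $O(1)$ reads.
\item Even at a $J_\ell$-dependent position, handing the simulator $b$'s label for free lets a perfect hash inside level $\ell$ locate $b$ and its neighbour in $O(1)$ reads.
\end{itemize}
Either way the term $\log_2(m_\ell+1)$ drops out of your sum. This matters: for $W=1$ your bound is $0$. Case~(ii) of \Cref{thm:predecessor} applies the lemma at the final skeleton and needs $\sum_k\log_2(m_k+1)\ge\log_2(n+1)$; with a single component of size $n$, as in binary Bentley--Saxe at $n=2^k$, your version gives nothing. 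The ``designated singleton'' option is not available in general either, because the size vector is prescribed and need not contain a $1$.

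\textbf{The missing idea is to make the anchor level a search problem in its own right.} The paper does this as follows.
\begin{itemize}
\item Fix level~1's internal order, so its encoding does not vary.
\item Draw the query gap uniformly among the $m_1$ gaps that have a stored lower endpoint, put $x$ there, and let $b$ be that endpoint.
\item Finding $b$ is then a uniform comparison search for $x$ among level~1's keys, costing $\Omega(\log(m_1+1))$ reads. Crucially, $b$ is not handed to the evaluator for this level.
\item The other levels' low slabs lie below every anchor key, so cross-level comparisons reveal nothing about $b$'s rank in level~1.
\item \Cref{lem:zerogap} is applied only to levels $i\ge2$, where exposing $b$ is now legitimately outside information.
\end{itemize}
A cruder patch also works: for $W\ge2$, store $b$ at a fixed position in a smallest level, which loses at most half the sum, and handle $W=1$ by the same uniform-gap search bound.

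\textbf{A smaller point to make explicit.} You claim the simulator can answer every cross-level comparison from $J_{-i}$. That needs a public interleaving rule, such as the paper's private slabs ordered by level index. Under such a rule, comparing a level-$i$ key with another level's key reveals only its three-way zone relative to $b$ and $x$, and nothing finer about $J_i$.
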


\begin{proof}
We average over a distribution of states, so some realization meets the
expectation, and obliviousness makes it carry the prescribed sizes. Take level $1$
as anchor: fix its internal order, draw the query gap uniformly among the $m_1$
gaps with a stored lower endpoint, and put $x$ there with $b$ that endpoint.
Locating $b$ is a randomized predecessor search among equally likely gaps, one
ternary split per read, so a bounded-error evaluator spends $\Omega(\log(m_1+1))$
reads, and an $O(1)$ header holds only constantly many handles and cannot help.
For $i\ge2$ fix three private zones, a low slab below every anchor key, a high
slab above $x$, and a hot subzone inside $(b,x)$, all ordered publicly by level
index. Draw $J_i$ uniform in $\{0,\dots,m_i\}$; the zero placement puts
$k_{i,1},\dots,k_{i,J_i}$ in the low slab and the rest in the high slab, and the
hot alternative, when $J_i<m_i$, moves $k_{i,J_i+1}$ into the hot subzone. Expose
all outside information for free, $b$ included. Reading $k_{i,r}$ then reveals only
its zone, comparisons within the level giving the known index order and
comparisons across levels only the public zone order, so nothing finer about
$J_i$ is learned without reading level $i$. The task is \Cref{lem:zerogap} with
$m=m_i$ and inherits its error guarantee, costing $\Omega(\log(m_i+1))$. Summing
over the anchor and the levels gives the bound.
\end{proof}

The floor is existential over the states realizing a skeleton, not a property of
every state; specialized to one read per level it is the randomized membership
floor $\Rhat\ge\What/3$ of \Cref{thm:membership}. On the hard realization
$\Rhat\ge c\sum_k\log_2(m_k+1)\ge c\max\{\What,\Phi(t)\}$, and the separation turns
on whether a schedule can hold $\Phi$ down throughout while editing for only
$O(n\log n)$. It cannot.

\subsection{Merge-History Size Diversity}

A logarithm above membership needs $\Phi(t)=\Omega(\log^2 n)$ at some moment, and
no schedule holds $\Phi$ at the membership scale $O(\log n)$ throughout. The
size-diversity law prices the time-averaged potential from below by the edit
budget. It is an entropy-prefix argument on the merge forest, budget-free, the
budget entering only through the augmented depth $\Dtil$.

Add a dummy root above the final frontier, with $S_{v_0}=n$. For an internal node
$v$ write $S_v$ for the leaves below it, $q_v=S_v/n$, and $z_v=\log_2 S_v$; its
children in birth order have sizes $s_1,\dots,s_r$, weights $p_i=s_i/S_v$, entropy
$h_v=\sum_i p_i\log_2(1/p_i)$, and \emph{ordered prefix load}
$a_v=\sum_j p_j\sum_{i<j}\log_2 s_i$. A merge coalesces an active suffix, so
$r\le\What+1$.

\begin{lemma}[Four Identities]\label{lem:fourid}
For every merge forest, exactly and with no budget hypothesis,
\[
  \mathrm{(I)}\ \tfrac1n\!\int_0^n\!\Phi=\sum_v q_v a_v,\quad
  \mathrm{(II)}\ \sum_v q_v h_v=L,\quad
  \mathrm{(III)}\ \sum_v q_v h_v z_v\ge\tfrac12 L^2,\quad
  \mathrm{(IV)}\ \sum_v q_v=\Dtil.
\]
\end{lemma}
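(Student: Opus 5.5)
The four identities are bookkeeping on the merge forest with its dummy root, so the plan is to verify each directly from \Cref{lem:leveldecomp}, with time measured in insertions. Two pieces of notation serve throughout. Read $\int_0^n\Phi$ as $\sum_{t=1}^n\Phi(t)$, the potential sampled once per insertion. For a uniformly random leaf $j$, write $v_0,v_1,\dots,v_d$ for its root-to-leaf path, where $v_0$ is the dummy root and $v_d$ is the leaf itself, so $S_{v_d}=1$ and $z_{v_d}=0$. Leaves never appear among the internal nodes $v$ in the sums, and singleton children contribute $\log_2 1=0$ everywhere, so they cause no trouble.

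\textbf{Identity (IV).} This is a one-line count. $\sum_v q_v=\frac1n\sum_v S_v$, the dummy root contributes $S_{v_0}=n$, and the real internal nodes contribute $\sum_P|P|=M$. Hence $\sum_v q_v=(n+M)/n=\Dtil$.

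\textbf{Identity (II).} This is the chain rule for entropy. The uniform leaf has entropy $L=\log_2 n$. Revealing its path one child at a time, node $v$ is reached with probability $q_v$, and the next branch then has entropy $h_v$. Equivalently, use $h_v=\sum_i p_i(z_v-z_{c_i})$ and telescope $z$ along each path from $z_{v_0}=L$ down to $z_{v_d}=0$.

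\textbf{Identity (III).} The same rewriting gives
\[
  \sum_v q_v h_v z_v=\frac1n\sum_{j}\sum_{k<d}\bigl(z_{v_k}-z_{v_{k+1}}\bigr)\,z_{v_k}.
\]
Each summand dominates $\int_{z_{v_{k+1}}}^{z_{v_k}} z\,dz$, because the integrand never exceeds the upper endpoint $z_{v_k}$. The intervals tile $[0,L]$ along each path, so each leaf contributes at least $\int_0^L z\,dz=\tfrac12L^2$, and averaging over leaves gives (III). This is a left-endpoint Riemann-sum comparison and needs no hypothesis on the forest.

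\textbf{Identity (I).} This is the only identity where care is needed, and I expect it to be the main obstacle. Fix the insertion at time $t$. I will claim that the live levels at time $t$, other than the newest one, correspond bijectively to pairs $(v,i)$ such that $v$ is an ancestor of leaf $t$, the leaf lies in some child $c_j$ of $v$, and $i<j$. The correspondence sends $(v,i)$ to the completed child $c_i$, which sits in the stack waiting for the merge that forms $v$. Two facts from \Cref{lem:leveldecomp} support the claim. First, each subtree covers a contiguous block of insertion times, which is canonical updates. Second, a completed child stays live exactly until its parent's merge. Together they show that every live level is such a sibling-to-the-left of some ancestor branch, and that each such sibling appears exactly once.

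The step-boundary convention has to be checked at this point. The level just pushed at step $t$ is either a singleton or the merged node $v$ itself. In the first case it contributes $\log_2 1=0$. In the second case it is counted as child $i$ of its own parent at the next insertion step, so nothing is double-counted or lost. With the bijection in hand, I sum over $t$ by grouping by $v$ and $j$. The $s_j$ insertions inside child $c_j$ each see $\sum_{i<j}\log_2 s_i$ from node $v$. Therefore
\[
  \int_0^n\Phi=\sum_v\sum_j s_j\sum_{i<j}\log_2 s_i=\sum_v S_v a_v,
\]
and dividing by $n$ yields (I). The children of the dummy root are exactly the final frontier levels, which covers the levels that are never merged.
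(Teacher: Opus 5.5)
Your proposal is correct and is essentially the paper's proof. Identities (II)--(IV) use the same root-to-leaf chain rule, telescoping, and ancestor count; your left-endpoint Riemann comparison is the paper's identity $\sum_k Z_kX_k=\tfrac12(\sum_k X_k)^2+\tfrac12\sum_k X_k^2$ in integral form. For (I), you and the paper compute the same double sum $\sum_{i<j}s_j\log_2 s_i$: the paper groups it by the level $c_i$ and its younger-sibling mass (the lifetime identity), while you group it by the child $c_j$ containing the insertion.

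One point should be stated cleanly. Your bijection describes the stack as insertion $t$ arrives, before its merge, i.e.\ the state $s_{t-1}$ on $[t-1,t)$, so $\Phi(t)$ in your sum must mean that value. If you sampled the post-step state $s_t$ instead, the total would exceed $\sum_v S_v a_v$ by $\Phi(n)$.
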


\begin{proof}
For (I), read the integral as a sum over the intervals between insertions; it is
the lifetime identity $\int_0^n\Phi=\sum_L\tau_L\log_2|L|$, where $\tau_L$, the
inserts during $L$'s life, equals the mass of $L$'s younger siblings at its
parent, since a LIFO merge carries off exactly the levels younger than $L$. As
$S_v a_v=\sum_i(\sum_{j>i}s_j)\log_2 s_i=\sum_i\tau_{v_i}\log_2 s_i$, summing over
$v$ and dividing by $n$ gives (I). For (II) and (III), sample a leaf uniformly and
let $Z_0=L>Z_1>\cdots\ge0$ be the log-sizes along its root-to-leaf path, with
$X_k=Z_k-Z_{k+1}$ and $\sum_k X_k=L$. At a node $v$ on the path
$\mathbb E[X_k\mid v]=h_v$ and $\Pr[v\text{ on path}]=q_v$, giving (II); and
$\sum_k Z_k X_k=\tfrac12(\sum_k X_k)^2+\tfrac12\sum_k X_k^2\ge\tfrac12 L^2$ with
$\mathbb E[Z_k X_k\mid v]=z_v h_v$, giving (III). For (IV),
$\sum_v q_v=\tfrac1n\sum_{\text{leaves}}(d_j+1)=\Dtil$, since each leaf has
$d_j+1$ internal ancestors.
\end{proof}

Identity (III) is the engine: entropy weighted by current log-size unavoidably
reaches $\Omega(L^2)$, exactly. Turning it into a bound on (I) is a local
computation.

\begin{lemma}[Local Prefix Bound]\label{lem:localprefix}
Fix $\varepsilon\in(0,\tfrac14)$, $\gamma=\varepsilon^2$, $c_0=L^{-7}$, and write
$\mu=G^{-1}$ for $G(m)=(m+1)\log_2(m+1)-m\log_2 m$. Uniformly over internal nodes
with $z_v\ge\varepsilon L$, $h_v\ge c_0$, and $r\le n^\gamma+1$,
\[
  a_v\ \ge\ (1-\kappa)\,z_v\,\mu(h_v),\qquad \kappa\le\varepsilon+o(1).
\]
\end{lemma}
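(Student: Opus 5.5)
The plan is to split $a_v$ into a size factor $z_v$ times a purely combinatorial ``expected number of predecessors'' term, and then bound that term by an entropy-versus-mean inequality.

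\textbf{Step 1: the entropy-versus-mean inequality.} For a distribution $(p_j)$ on $\{1,\dots,r\}$ with entropy $h$, the quantity $\sum_j p_j(j-1)$ is the mean of a random variable on the nonnegative integers. Among all such variables with mean $m$, the entropy is maximized by the geometric law, whose entropy is exactly $G(m)=(m+1)\log_2(m+1)-m\log_2 m$. Since $G$ is increasing, this gives $\sum_j p_j(j-1)\ge\mu(h)$ regardless of the birth order of the children.

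\textbf{Step 2: replacing $\log_2 s_i$ by $z_v$.} Write $\log_2 s_i=z_v-\log_2(1/p_i)$. We lower-bound rather than subtract, using $\log_2 s_i\ge0$.
\begin{itemize}
\item Fix a threshold $\theta=S_v^{-(\varepsilon+\delta)}$, with $\delta=\delta(n)\to0$ chosen below.
\item Call child $i$ \emph{big} if $p_i\ge\theta$ and \emph{tiny} otherwise.
\item For a big child, $\log_2 s_i\ge(1-\varepsilon-\delta)z_v$.
\item For a tiny child, use only $\log_2 s_i\ge0$.
\end{itemize}
Keeping only big children $j$, whose weights are the renormalized $\hat p_j=p_j/(1-\eta)$, this gives
\[
a_v\ \ge\ (1-\varepsilon-\delta)\,z_v\sum_{j\ \mathrm{big}}p_j\,\#\{i<j:\ i\ \mathrm{big}\}\ \ge\ (1-\varepsilon-\delta)(1-\eta)\,z_v\,\mu(\hat h).
\]
Here $\eta$ is the total tiny mass and $\hat h$ is the entropy of $(\hat p_j)$ on the big children. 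The last inequality is Step 1 applied to the big children in birth order.

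\textbf{Step 3: the tiny mass is negligible.} Because $z_v\ge\varepsilon L$, we have $S_v\ge n^{\varepsilon}$, so $\theta\le n^{-\varepsilon^2-\varepsilon\delta}=n^{-\gamma-\varepsilon\delta}$. With $r\le n^\gamma+1$ this gives $\eta\le r\theta=O(n^{-\varepsilon\delta})$.

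\textbf{Step 4: entropy perturbation, the main obstacle.} We must show $\mu(\hat h)\ge(1-o(1))\mu(h_v)$ uniformly down to $h_v\ge c_0=L^{-7}$. This is where $c_0$ matters.
\begin{itemize}
\item By the grouping rule, $h_v\le H(\eta)+(1-\eta)\hat h+\eta\log_2 r$.
\item Also $h_v\le \hat h+H(\eta)+\eta\cdot O(L)$, and $H(\eta)=O(\eta L)$ for $\eta\ge n^{-O(1)}$.
\item Choose $\delta=9\log_2 L/(\varepsilon L)=o(1)$. Then $\eta=O(L^{-9})$, so the additive entropy loss is $O(L^{-8})=o(c_0)\le o(h_v)$.
\end{itemize}
Hence $\hat h\ge(1-o(1))h_v$.

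It remains to check that $\mu$ changes only by a factor $1-o(1)$ under a relative $o(1)$ change of its argument, uniformly on $[c_0,\infty)$. I would do this from the asymptotics of $G$:
\begin{itemize}
\item for small $m$, $G(m)\sim m\log_2(e/m)$, so $\mu(h)\sim h/\log_2(1/h)$, which is regularly varying of index $1$;
\item for large $m$, $G(m)=\log_2 m+O(1)$, so $\mu$ grows exponentially.
\end{itemize}
The large-$h$ regime needs care, because exponential growth of $\mu$ could amplify an additive entropy loss. Since the loss is $O(L^{-8})$ in absolute terms and $\mu'(h)/\mu(h)=O(1)$ there, the relative change is still $o(1)$.

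Collecting the factors gives $\kappa\le\varepsilon+\delta+\eta+o(1)=\varepsilon+o(1)$, uniformly over the stated range of nodes.
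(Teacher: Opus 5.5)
Your proposal is correct and follows the paper's proof essentially step for step. It uses the same good/bad split of the children at threshold $2^{-(\varepsilon+o(1))z_v}$, the same $O(L^{-9})$ bound on the bad mass, the same $O(L^{-8})$ entropy shift measured against $c_0=L^{-7}$, and the same maximum-entropy (geometric) bound on the expected number of earlier good children. The differences are cosmetic: the paper builds $\log_2 r$ into its threshold $\theta$, and it controls $\mu$ with the single estimate $(\ln\mu)'=O(1+1/h)$, which covers your two asymptotic regimes in one line.
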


\begin{proof}
Put $\theta=(\log_2 r+9\log_2 L)/(\varepsilon L)$, so $\theta=\varepsilon+o(1)$
under the hypotheses. Call a child \emph{good} if $p_i\ge2^{-\theta z_v}$, so
$\log_2 s_i=z_v+\log_2 p_i\ge(1-\theta)z_v$. A bad child has
$p_i<2^{-\theta\varepsilon L}=(rL^9)^{-1}$, so the bad mass is at most $L^{-9}$,
and renormalizing onto the good children shifts the entropy by $O(L^{-8})$ and
hence $\mu$ by a factor $1-o(1)$: since $(\ln\mu)'=O(1+1/h)$ and $h_v\ge c_0$, the
shift in $\ln\mu$ is $O((1+L^7)L^{-8})=O(L^{-1})$. The maximum-entropy bound, that
a law on $\{0,1,2,\dots\}$ of entropy $h$ has mean at least $\mu(h)$, applied to a
child drawn by $p_j$, gives at least $(1-o(1))\mu(h_v)$ expected earlier good
children; each adds $\ge(1-\theta)z_v$ to $a_v$, so
$a_v\ge(1-\theta-o(1))z_v\mu(h_v)$.
\end{proof}

\begin{theorem}[Size-Diversity Law]\label{thm:scalediv}
Every LIFO merge forest with $\What\le n^\gamma$ and $\Dtil\le L^6$ satisfies
\[
  \tfrac1n\!\int_0^n\!\Phi(t)\,dt\ \ge\ (1-\varepsilon-o(1))\,\Dtil\,L\,
   \mu\!\Bigl(\tfrac{\beta L}{\Dtil}\Bigr),\qquad \beta=\tfrac12-\varepsilon-o(1),
\]
so $\max_t\Phi(t)$ is at least the same value.
\end{theorem}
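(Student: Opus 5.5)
Identity (I) of \Cref{lem:fourid} expresses the time-average of $\Phi$ as $\sum_v q_v a_v$. So the plan is to bound $\sum_v q_v a_v$ from below through \Cref{lem:localprefix}, and then to use the remaining three identities to control the weighted averages of $z_v$ and $h_v$.

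\textbf{Restricting to good nodes.} Call an internal node $v$ \emph{good} if $z_v\ge\varepsilon L$ and $h_v\ge c_0$. The bound $r\le\What+1\le n^\gamma+1$ holds at every node automatically. On good nodes \Cref{lem:localprefix} gives $a_v\ge(1-\kappa)z_v\mu(h_v)$. Since $a_v\ge0$ everywhere, we may discard all other nodes, and it suffices to lower-bound $\sum_{v\text{ good}}q_v z_v\mu(h_v)$.

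\textbf{Losses from bad nodes.} Next I show that the bad nodes take only a small share of identity (III).
\begin{itemize}
\item Nodes with $z_v<\varepsilon L$ contribute at most $\varepsilon L\sum_v q_vh_v=\varepsilon L^2$ to $\sum_v q_vh_vz_v$, by (II).
\item Nodes with $h_v<c_0$ contribute at most $c_0L\sum_v q_v=c_0L\Dtil\le L^{-7}\cdot L\cdot L^6=O(1)=o(L^2)$, by (IV) and $\Dtil\le L^6$. This is exactly why $c_0=L^{-7}$ and the cap $\Dtil\le L^6$ were chosen.
\end{itemize}
Hence the good nodes satisfy
\[
\sum_{\text{good}}q_vh_vz_v\ge(\tfrac12-\varepsilon-o(1))L^2=\beta L^2,
\qquad
\sum_{\text{good}}q_vz_v\le L\,\Dtil,
\]
where the second inequality uses $z_v\le L$ and (IV).

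\textbf{Jensen step.} Define a probability measure on good nodes with weights $w_v=q_vz_v/Z$, where $Z=\sum_{\text{good}}q_vz_v\le L\Dtil$. Then
\[
\sum_{\text{good}} q_vz_v\mu(h_v)=Z\,\mathbb E_w[\mu(h)]
\quad\text{and}\quad
\mathbb E_w[h]\ge\beta L^2/Z .
\]
The function $\mu=G^{-1}$ is increasing and convex: $G$ is increasing and concave, since $G(m)\approx\log_2(em)$ is the entropy-maximum function. Jensen therefore gives $Z\,\mathbb E_w[\mu(h)]\ge Z\mu(\beta L^2/Z)$.

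It remains to show that $Z\mapsto Z\mu(\beta L^2/Z)$ is nonincreasing in $Z$. Writing $u=\beta L^2/Z$, this amounts to $\mu(u)/u$ being nondecreasing, which holds because $\mu$ is convex with $\mu(0)=0$. We may then substitute the upper bound $Z=L\Dtil$, which yields
\[
\tfrac1n\int_0^n\Phi\ \ge\ (1-\kappa)\,\Dtil\,L\,\mu\!\left(\beta L/\Dtil\right).
\]
This is the claimed bound, with $\kappa\le\varepsilon+o(1)$ from \Cref{lem:localprefix}. Finally, the maximum of $\Phi$ is at least its time-average.

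\textbf{Main obstacle.} I expect the main difficulty to be verifying the convexity of $\mu$ and the monotonicity of $\mu(u)/u$ cleanly, including near $u=0$ where $G$ is singular. If convexity is awkward to prove directly, my fallback is to lower-bound $\mu$ by an explicit convex minorant such as $2^{u}/e-1$, up to $(1-o(1))$ factors, and run the same Jensen and monotonicity argument on that minorant. Even in that case, the distinguishing check is the bad-node accounting above, since it is where the hypotheses $\Dtil\le L^6$ and $\What\le n^\gamma$ are actually used.
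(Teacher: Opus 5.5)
Your proof is correct and follows the paper's argument. It uses the same discards of nodes with $z_v<\varepsilon L$ or $h_v<c_0$ via (II) and (IV), the same application of (I) with the local prefix bound on the survivors, and the same Jensen-plus-monotonicity close; the only difference is that you apply Jensen to $\mu$ under weights proportional to $q_vz_v$, whereas the paper applies it to the perspective $\varphi(u)=u\mu(1/u)$ under weights proportional to $q_vh_vz_v$, and both versions rest on the same two facts, convexity of $\mu$ and monotonicity of $\mu(u)/u$.

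The obstacle you flag is not one. Since $G'(m)=\log_2(1+1/m)$ is positive and decreasing, $G$ is increasing and concave with $G(0^+)=0$, so $\mu=G^{-1}$ is increasing and convex with $\mu(0)=0$, and the fallback minorant is unnecessary.
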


\begin{proof}
Discard from (III) the nodes with $z_v<\varepsilon L$, contributing at most
$\varepsilon L\sum_v q_v h_v=\varepsilon L^2$ by (II), and those with $h_v<c_0$,
contributing at most $c_0 L\Dtil=o(L^2)$ by (IV). The survivors have
$z_v\ge\varepsilon L$, $h_v\ge c_0$, $r\le\What+1\le n^\gamma+1$, and carry
$\sum^{\mathrm s}q_v h_v z_v\ge\beta L^2$. Write $Y_v=q_v h_v z_v/L^2$, so
$m_Y:=\sum^{\mathrm s}Y_v\ge\beta$ and
$\sum^{\mathrm s}Y_v/h_v=L^{-2}\sum^{\mathrm s}q_v z_v\le\Dtil/L$ by (IV). Set
$\varphi(u)=u\,\mu(1/u)$, convex and decreasing, and note
$q_v z_v\mu(h_v)=L^2 Y_v\varphi(1/h_v)$. Identity (I) and the local prefix bound
give
\[
  \tfrac1{nL^2}\!\int\!\Phi\ \ge\ (1-\kappa)\,m_Y\,\mathbb E[\varphi(1/h)],
\]
the expectation under $Y/m_Y$. Jensen and then the monotonicity of $\varphi$ at
$\mathbb E[1/h]\le\Dtil/(Lm_Y)$ give
\[
  m_Y\,\mathbb E[\varphi(1/h)]\ \ge\ m_Y\,\varphi\!\Bigl(\tfrac{\Dtil}{Lm_Y}\Bigr)
   \ =\ \tfrac{\Dtil}{L}\,\mu\!\Bigl(\tfrac{m_Y L}{\Dtil}\Bigr)
   \ \ge\ \tfrac{\Dtil}{L}\,\mu\!\Bigl(\tfrac{\beta L}{\Dtil}\Bigr),
\]
using $m_Y\ge\beta$ and $\mu$ increasing. Multiply by $L^2$.
\end{proof}

The frontier cap is necessary for the law, since the never-merge schedule has
$\Phi\equiv0$, but it does not constrain the predecessor bound: the assembly below
routes wide-frontier executions through the read floors instead. The cap aside,
the law is the right general form rather than a relaxation, asymptotically tight
along the lazy base-$q$ counter, so the uniform reading $\max_t\Phi\ge c\log^2 n$
for all schedules is false, that family sending $\max_t\Phi/\log^2 n\to0$ as $q$
grows. Since $\Phi$ depends only on the forest, the bound holds for every
realization of a randomized algorithm.

\begin{lemma}[Calculus Floor]\label{lem:calcfloor}
$\min_{h>0}\mu(h)/h^2=\tfrac14$, attained only at $h=2$; equivalently
$G(m)\le2\sqrt m$ with equality iff $m=1$. Hence
$\Dtil^2\,\mu(\beta L/\Dtil)\ge(\beta^2/4)L^2$.
\end{lemma}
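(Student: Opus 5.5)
The plan is to reduce everything to the one-variable inequality $G(m)\le 2\sqrt m$ through the bijection $m\leftrightarrow h$, and then to prove that inequality by a sign analysis of the derivative.

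\textbf{Step 1: reduction to $G$.} First I record that $G(m)=(m+1)\log_2(m+1)-m\log_2 m$ is continuous and strictly increasing on $[0,\infty)$, with $G(0)=0$ and $G(m)\to\infty$. Indeed $G'(m)=\log_2(1+1/m)>0$, and $G(m)\ge\log_2(m+1)$. Hence $\mu=G^{-1}$ is a well-defined increasing bijection $(0,\infty)\to(0,\infty)$. Writing $h=G(m)$, so that $m=\mu(h)$, the inequality $\mu(h)/h^2\ge\tfrac14$ is the same as $m\ge G(m)^2/4$, that is, $G(m)\le 2\sqrt m$. Equality cases also correspond under the bijection. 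Since $G(1)=2\log_2 2=2$, the point $m=1$ corresponds to $h=2$. So the two formulations in the statement are equivalent, and it suffices to prove $G(m)\le2\sqrt m$ with equality only at $m=1$.

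\textbf{Step 2: the inequality.} Set $f(m)=2\sqrt m-G(m)$, so that $f(0)=f(1)=0$ and
\[
f'(m)=\frac{1}{\sqrt m}-\log_2\!\Bigl(1+\frac1m\Bigr).
\]
Substitute $t=1/\sqrt m$, so that the sign of $f'(m)$ is the sign of $-g(t)$, where $g(t)=\log_2(1+t^2)-t$. The second derivative of $\ln(1+t^2)$ is $2(1-t^2)/(1+t^2)^2$. Therefore $g$ is convex on $[0,1]$ and concave on $[1,\infty)$.
\begin{itemize}
\item On $[0,1]$, convexity together with $g(0)=g(1)=0$ gives $g<0$ on $(0,1)$. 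Hence $f'>0$ for $m>1$.
\item On $[1,\infty)$ we have $g(1)=0$, $g'(1)=1/\ln 2-1>0$, and $g(t)\to-\infty$. Concavity then gives exactly one zero $t^*>1$, with $g>0$ on $(1,t^*)$ and $g<0$ beyond $t^*$.
\end{itemize}
Translating back with $m^*=1/t^{*2}<1$: $f$ increases on $(0,m^*)$, decreases on $(m^*,1)$, and increases on $(1,\infty)$. Because $f(0)=0$ and $f$ increases at first, $f>0$ on $(0,m^*]$. On $[m^*,1)$, $f$ decreases to $f(1)=0$, so $f>0$ there. On $(1,\infty)$, $f$ increases from $f(1)=0$, so $f>0$ there as well. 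Thus $f\ge0$ everywhere, with equality only at $m=1$.

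The main obstacle is exactly this sign analysis. A naive argument fails because $f'$ changes sign twice, so $f$ is not convex. The convex/concave split of $\ln(1+t^2)$ at $t=1$ is what makes the analysis clean.

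\textbf{Step 3: the consequence.} Apply $\mu(h)\ge h^2/4$ at $h=\beta L/\Dtil>0$. This gives $\mu(\beta L/\Dtil)\ge \beta^2L^2/(4\Dtil^2)$, and multiplying by $\Dtil^2$ yields $\Dtil^2\,\mu(\beta L/\Dtil)\ge(\beta^2/4)L^2$.
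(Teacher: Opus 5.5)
Your proof is correct and follows the paper's route: both substitute $m=\mu(h)$ to reduce to $F(m)=2\sqrt m-G(m)$ with $F(0)=F(1)=0$, and use the sign of $F'(m)=m^{-1/2}-\log_2(1+1/m)$ to show that $F$ rises, falls back to $F(1)=0$, then rises again. The only difference is how you get that sign pattern: the paper shows $u(m)=\sqrt m\,\log_2(1+1/m)$ is unimodal via the monotonicity of $(1+1/m)^{m+1}$, while you substitute $t=1/\sqrt m$ and use the convex/concave split of $\log_2(1+t^2)-t$ at $t=1$, together with $g'(1)>0$; the convexity and concavity are strict because $g''\neq0$ away from $t=1$, which gives the strict inequalities you need.
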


\begin{proof}
Substituting $m=\mu(h)$, the claim $\mu(h)\ge h^2/4$ reads $G(m)\le2\sqrt m$. Let
$F(m)=2\sqrt m-G(m)$, so $F(0^+)=F(1)=0$ and $F'(m)=m^{-1/2}-\log_2(1+1/m)$, of the
sign of $1-u(m)$ for $u(m)=\sqrt m\,\log_2(1+1/m)$. Here $u$ is unimodal, with
$u'(m)=0$ at a single point since $(1+1/m)^{m+1}$ decreases strictly to $e$, and
with $u\to0$ at both ends and $u(1)=1$ it exceeds $1$ on exactly one interval
$(m_1,1)$. So $F$ rises on $(0,m_1)$, returns to $F(1)=0$, then rises, giving
$F\ge0$ with equality only at $m\in\{0,1\}$. The lone interior contact $m=1$ gives
$h=G(1)=2$ and $\mu(2)=1$, so the minimum is $\tfrac14$; substituting
$h=\beta L/\Dtil$ gives the last inequality.
\end{proof}

\subsection{The Predecessor Bound}

\begin{theorem}[Tight Predecessor Bound]\label{thm:predecessor}
Every strongly materialized algorithm with a bounded-error evaluator, under no
hypothesis on the frontier $\What$ or the budget $M$, has predecessor
$E\Rhat=\Omega(n\log^3 n)$, which binary Bentley--Saxe attains. Predecessor is
$\Theta(n\log^3 n)$, one clean logarithm above membership.
\end{theorem}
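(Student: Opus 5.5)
The upper bound is \Cref{prop:baseline}: binary Bentley--Saxe has $E=O(n\log n)$ and abstract-key predecessor $\Rhat=O(\log^2 n)$, so $E\Rhat=O(n\log^3 n)$. For the lower bound I would combine three ingredients: the edit floor $E=\Omega(n\Dtil)$ (\Cref{cor:editfloor}), the read floors, and the size-diversity law. I would split into cases so that the hypotheses of \Cref{thm:scalediv}, namely $\What\le n^\gamma$ and $\Dtil\le L^6$, are only used when they hold.

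The read floors are the following. Since the schedule is oblivious, every size skeleton occurring in the execution is public. By \Cref{lem:anchorzones}, for each time $t$ there is a strongly materialized realization and a query on which a bounded-error evaluator reads at least $c\sum_i\log_2(m_i+1)\ge c\max\{W(t),\Phi(t)\}$. Because $\Rhat$ is prefix-worst, this gives $\Rhat\ge c\max\{\What,\max_t\Phi(t)\}$. Specialized to one read per level it also gives $\Rhat\ge\What/3$, which is the bound of \Cref{lem:seqaccess}.

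The cases are as follows.
\begin{enumerate}[label=(\roman*)]
\item If $\What>n^\gamma$, then $\Rhat\ge c n^\gamma$, and $E\ge n$ trivially, so $E\Rhat=\Omega(n^{1+\gamma})\gg n\log^3 n$.
\item If $\Dtil>L^6$, then $E=\Omega(nL^6)$, and $\Rhat\ge1$ already suffices. (Indeed $\Rhat\ge\What/3\ge 1/3$.)
\item Otherwise $\What\le n^\gamma$ and $\Dtil\le L^6$. \Cref{thm:scalediv} gives
\[
\max_t\Phi(t)\ge(1-\varepsilon-o(1))\,\Dtil\,L\,\mu(\beta L/\Dtil),
\]
hence
\[
E\Rhat\ \ge\ c'\,n\,\Dtil\cdot\Dtil\,L\,\mu(\beta L/\Dtil)\ =\ c'\,nL\cdot\Dtil^2\mu(\beta L/\Dtil)\ \ge\ c'\,nL\cdot\tfrac{\beta^2}{4}L^2
\]
by \Cref{lem:calcfloor}. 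This is $\Omega(n\log^3 n)$, with $\varepsilon$ a fixed constant such as $\tfrac18$ so that $\beta$ is bounded away from $0$.
\end{enumerate}

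Case (iii) is the key step. It uses the product $\Dtil\cdot\Dtil\mu$: the write cost contributes one factor of $\Dtil$ and the read floor, through $\Phi$, contributes the other. The calculus floor then makes the product independent of $\Dtil$. Randomized algorithms need no extra work: $\Phi$ depends only on the forest, which is fixed by the oblivious schedule, so the bound holds for every realization of the schedule's coins. If coins were allowed to affect the schedule, I would apply the argument pointwise and average.

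The main obstacle I expect is justifying that the read floor of \Cref{lem:anchorzones} applies at the time maximizing $\Phi$ within a single execution. The adversarial key assignment at that time must be consistent with a full insertion sequence realizing that skeleton. Obliviousness is what supplies this. The size sequence is determined by the step index, so any assignment of distinct keys to the insertion times realizes the same forest. We can therefore choose the keys inserted up to time $t$ to be the hard zone placement, and the later insertions are irrelevant to the prefix-worst $\Rhat$. A second, lesser point is that $\Phi$ uses $\log_2|L|$ while the read floor uses $\log_2(m_i+1)$. Since the latter dominates the former, the inequality goes the right way.
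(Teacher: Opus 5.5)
Your proposal is correct and follows the paper's own three-case assembly: wide frontier, deep budget, and the size-diversity case closed by the edit floor and \Cref{lem:calcfloor}. The differences are cosmetic: in case (ii) you use $\Rhat\ge\What/3\ge 1/3$ where the paper invokes \Cref{lem:anchorzones} at the final skeleton to get $\Rhat=\Omega(L)$ (either suffices against $E=\Omega(nL^6)$), and you fix $\varepsilon=\tfrac18$ instead of $10^{-2}$, which equally keeps $\beta$ bounded away from $0$.
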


\begin{proof}
The upper bound is \Cref{prop:baseline}. Fix the algorithm; its oblivious schedule
fixes the size trajectory and $E$ independently of the keys. Fix
$\varepsilon=10^{-2}$, $\gamma=\varepsilon^2$, $\Dtil=(M+n)/n$, and split on the
schedule.

\emph{(i) $\What>n^\gamma$.} \Cref{lem:anchorzones} at a maximum-frontier skeleton
gives $\Rhat=\Omega(\What)>\Omega(n^\gamma)$, and $E\ge n$, so
$E\Rhat=\Omega(n^{1+\gamma})\gg n\log^3 n$.

\emph{(ii) $\What\le n^\gamma$, $\Dtil>L^6$.} The edit floor gives
$E=\Omega(n\Dtil)>\Omega(nL^6)$ and \Cref{lem:anchorzones} at the final skeleton,
where $\sum_k\log_2(m_k+1)\ge\log_2(n+1)$, gives $\Rhat=\Omega(L)$, so
$E\Rhat=\Omega(nL^7)\gg n\log^3 n$.

\emph{(iii) $\What\le n^\gamma$, $\Dtil\le L^6$.} The size-diversity law gives a
time $t^\star$ with $\Phi(t^\star)\ge(1-\varepsilon-o(1))\Dtil L\,\mu(\beta L/\Dtil)$;
\Cref{lem:anchorzones} there gives $\Rhat=\Omega(\Phi(t^\star))$, the edit floor
gives $E=\Omega(n\Dtil)$, and the calculus floor closes it:
\[
  E\Rhat\ \ge\ \Omega\bigl(nL\cdot\Dtil^2\mu(\beta L/\Dtil)\bigr)
   \ \ge\ \Omega\bigl(\tfrac{\beta^2}4\,nL^3\bigr)=\Omega(nL^3).
\]
The cases are exhaustive. Obliviousness makes the size trajectory the same on
every input, so the skeleton each case invokes occurs along the execution, and
\Cref{lem:anchorzones} exhibits a hard realization there; the product is the
algorithm's worst case over inputs, in expectation over its coins.
\end{proof}

The bound $\Rhat\ge\What$ alone yields only $n\log^2 n$, one logarithm short;
the extra logarithm needs $\Rhat=\Omega(\Phi)$ at a high-$\Phi$ time, which the
size-diversity law supplies. Predecessor pays for size diversity, not merely
frontier width: membership is priced by $\What$, order by $\Phi$.

\subsection{Extensions to Rank and Range Queries}

Predecessor is one order query among many. Whether a decomposable query pays the
extra logarithm is read off the combiner, from how much of a single block's
contribution can reach the global answer, and for natural exact queries this takes
one of two sizes. It is $O(1)$ when an $O(1)$-read witness settles a block's
contribution, as for membership, point-emptiness, and the global minimum or
maximum. It is $\Theta(\log m)$ when the contribution turns on $q$'s position
among a block's $m$ sorted keys, as for predecessor, successor, rank, range-count,
range-sum, range-minimum, range-maximum, and bounded-interval emptiness. The bridge
to a read floor in the second case is a per-level direct sum, supplied for
predecessor and successor by \Cref{lem:anchorzones} and for the rest by the two
lemmas below.

Call $Q$ \emph{additive} if $Q(S,q)=\sum_k\varphi(L_k,q)$ in an abelian group,
each contribution $\varphi(B,q)$ a function of $q$'s position in $B$; rank,
range-count, and unit-weight range-sum qualify. Let $g_Q(m)$ be the worst-case
reads to compute $\varphi(B,q)$ on one static block of $m$ samples under a hard
distribution $\mathcal D_m$, and call the family $\{\mathcal D_m\}$
\emph{embeddable} when every active size vector is realized by independent private
blocks and one shared query whose marginal at level $i$ is $\mathcal D_{m_i}$.

\begin{lemma}[Additive Direct Sum]\label{lem:additivesum}
For an additive $Q$ with an embeddable hard family of per-level complexity $g_Q$
under an oblivious schedule, the hard realization at time $t$ forces
$\Rhat_Q\ge\Omega\bigl(\sum_k g_Q(m_k)\bigr)$, deterministically and, with $g_Q$
the randomized static complexity, against bounded-error evaluators.
\end{lemma}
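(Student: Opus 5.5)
The plan follows the template of \Cref{lem:anchorzones}: build an independent product distribution over per-level instances, argue that strong materialization decouples them under a one-way scan, and sum the per-level costs. Fix the time $t$ and its public size vector $(m_1,\dots,m_W)$; by obliviousness and exact layout, the layout, addresses and merge structure are the same on every input. Embeddability supplies a joint distribution: each level $i$ carries an independent private block $B_i$ and one shared query $q$ whose marginal at level $i$ is $\mathcal D_{m_i}$. By local encoding, the tokens of level $i$ depend only on $B_i$, so the level encodings are independent given the public outside information.

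Next I reduce the global additive query to the $W$ independent per-level tasks. Since $Q(S,q)=\sum_k\varphi(L_k,q)$ in an abelian group, I expose for free every contribution $\varphi(L_k,q)$ with $k\ne i$, together with all outside data, exactly as \Cref{lem:anchorzones} exposes $b$. Subtracting these known terms, any evaluator for $Q$ then computes $\varphi(L_i,q)$ with the same error. To get the direct sum rather than just a max, I use the standard simulation argument. Given a single static block drawn from $\mathcal D_{m_i}$, sample the other blocks privately from their independent distributions (possible because the hard family is a product over levels given $q$). Run the evaluator and answer reads in level $i$ from the real block and all other reads from the simulated blocks. The head reads level $i$ in a contiguous interval of its forward pass, so the simulation is a legal static algorithm for $\varphi(\cdot,q)$ on level $i$. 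Its cost is the expected number of reads the evaluator makes inside level $i$, and its error is at most that of the global evaluator. Therefore $\mathbb E[\text{reads in level }i]\ge g_Q(m_i)$.

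Linearity of expectation over levels then gives $\mathbb E[\text{reads}]\ge\sum_i g_Q(m_i)$. Some realization in the support attains at least this expectation, and it has the prescribed sizes by obliviousness, so $\Rhat_Q\ge\Omega(\sum_k g_Q(m_k))$. The deterministic case is immediate, since the simulation is deterministic given the sampled blocks. In the randomized case I fold the evaluator's data-independent coins (\emph{no external state}) into the simulation, and I use the bounded-error guarantee per level exactly as in \Cref{lem:zerogap}. If $g_Q$ is defined for error $\tfrac13$ against a distribution, a Markov step over coin fixings absorbs the constant-factor loss.

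I expect the main obstacle to be justifying that the simulated blocks leak nothing about $B_i$ through the shared query. The worry is that $q$ correlates the levels, so the private sampling of the other blocks must be conditioned on $q$ alone. This needs the embeddable family to make the blocks conditionally independent given $q$, and it needs comparisons across levels to reveal only publicly fixed zone orders, as in the zone construction of \Cref{lem:anchorzones}. I would first check this conditional independence carefully for rank (reusing the low/high/hot zones with $\varphi$ equal to the count below $q$), then state it as the content of embeddability. A secondary point is that $O(1)$-word headers of level $i$ are read inside level $i$ and are thus charged to it, so they cannot carry information for free.
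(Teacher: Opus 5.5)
Your proposal is correct and follows essentially the same route as the paper: draw the embedded product instance, simulate every level other than $i$ from fresh coins conditioned on $q$ while charging only the reads inside $L_i$, recover $\varphi(L_i,q)$ by subtracting the simulated contributions via the group inverse, invoke the static floor $g_Q(m_i)$, and sum over the disjoint levels. The conditional-independence issue you flag as the main obstacle is the one the paper settles by appeal to strong materialization and no-cascade, which make the outside reads advice independent of $B_i$. Treating it as part of what embeddability means is consistent with how the paper uses the hypothesis.
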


\begin{proof}
Draw the embedded product instance and fix a level $i$. Run the evaluator from
$B_i$'s draw alone, charging its reads inside $L_i$ and answering each read outside
$L_i$ for free from a block drawn on independent coins from its law given $q$,
which strong materialization and no-cascade make independent of $B_i$. The
evaluator returns $Q(S,q)$, and subtracting the $\sum_{j\ne i}\varphi(L_j,q)$ just
sampled leaves $\varphi(L_i,q)$ by the group inverse. This is a single-block
evaluator for $\mathcal D_{m_i}$ whose outside answers are advice independent of
$B_i$, so it cannot read $L_i$ below the static floor $g_Q(m_i)$. The blocks are
disjoint, so the reads sum.
\end{proof}

For abstract keys $g_Q(m)=\Theta(\log(m+1))$: computing $q$'s position in a block
is the comparison search of \Cref{lem:zerogap}, and a balanced tree matches it.
The lemma then gives $\Rhat_Q\ge c\sum_k\log_2(m_k+1)$ for rank, range-count, and
unit-weight range-sum, the floor predecessor pays; \Cref{sec:lifting} reuses it
with the integer and two-dimensional complexities.

\begin{lemma}[Idempotent Direct Sum]\label{lem:idempsum}
Fix a public size vector of an oblivious schedule and a bounded-error evaluator.
For bounded-interval emptiness, range-minimum, and range-maximum, each with a
fixed empty-interval default, some strongly materialized state with these sizes admits an interval $I$, empty
in every level, on which the evaluator has $\Rhat_Q\ge c\sum_i\log_2(m_i+1)$, for
an absolute $c>0$.
\end{lemma}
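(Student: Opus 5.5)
The plan follows the anchor-and-zones construction of \Cref{lem:anchorzones}, but with an interval $I=[x,y]$ in place of a single handle $x$. The zero instance will be \emph{empty in every level}, while each hot alternative places exactly one key of a single level inside $I$.

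The three queries reduce to one task per level. On a zero instance, emptiness returns ``empty'' and range-minimum and range-maximum return the fixed default. On a hot instance, emptiness returns ``nonempty'' and range-minimum and range-maximum return the hot key's value. The values are drawn i.i.d.\ and distinct from the default, so any bounded-error evaluator for any of the three queries distinguishes zero from hot with error at most $\tfrac13$. It is therefore enough to lower-bound the cost of deciding, with bounded error, whether some level meets $I$.

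The construction uses two public handles $b<x$ with $I=(b,x)$, read as the interval strictly between them, or closed on handles chosen outside every level. For each level $i$ we fix three private zones, ordered publicly by level index:
\begin{itemize}
\item a low slab below $b$,
\item a high slab above $x$,
\item a hot subzone inside $(b,x)$.
\end{itemize}
We draw $J_i$ uniformly in $\{0,\dots,m_i\}$. The zero placement puts $k_{i,1},\dots,k_{i,J_i}$ in the low slab and the rest in the high slab, so level $i$ misses $I$. The hot alternative $H_{i,J_i}$ moves $k_{i,J_i+1}$ into the hot subzone. No anchor level is needed, since unlike predecessor the handles are supplied by the query, so the term $\log(m_1+1)$ comes from level $1$ treated like the others. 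Strong materialization (local encoding and oblivious layout) makes every level's encoding independent of how the levels interleave. Reading $k_{i,r}$ therefore reveals only its zone, and we may expose all other levels' placements for free.

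For the direct sum, fix $i$ and simulate the evaluator on a single block, as in the proof of \Cref{lem:additivesum}. We draw every other level from its zero placement on independent coins, which strong materialization keeps independent of $B_i$, and charge only reads inside $L_i$. With all other levels empty, the global answer, whether emptiness or extremum versus default, equals level $i$'s local answer. Since $I$ meets at most the hot key, this is exactly the distinguishing task of \Cref{lem:zerogap}, up to relabeling outcomes as ``nonempty'' in place of ``predecessor $k_{j+1}$''.

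\Cref{lem:zerogap} needs every level to be zero while level $i$ toggles. Its proof only uses the interval structure of zero instances sharing a leaf and the Kraft bound on the reads made on $Z_J$. So we can evaluate the cost on the global all-zero instance, with each $J_i$ uniform, and count the hot errors level by level. The error guarantee transfers because each $H_{i,j}$ is a legitimate input whose transcript coincides with that of $Z$ until a key of level $i$ strictly between the two boundary indices is read. This gives expected reads in $L_i$ of $\Omega(\log(m_i+1))$ on the zero instance. The levels are disjoint, so these reads add, and some realization meets the expectation, giving $\Rhat_Q\ge c\sum_i\log_2(m_i+1)$ on an interval $I$ empty in every level.

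The main obstacle is the error bookkeeping. The evaluator's guarantee is $\tfrac13$ per input, but \Cref{lem:zerogap} needs an average-error bound over the uniform mixture of $\{Z_j,H_j\}$ for each level separately, while the other levels are held at random zero placements. Two things must be checked here. First, the Markov step over coin fixings in \Cref{lem:zerogap} has to survive the additional averaging over the other levels' draws. Second, the $O(1)$ header words, which can hold at most constantly many key handles, must not shortcut the search. I would handle the first by conditioning on the other levels' placements and applying Markov jointly over coins and outside draws. I would handle the second by the same argument used in \Cref{lem:anchorzones}: a constant number of handles only prunes the search by a constant factor.
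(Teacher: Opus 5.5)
Your proposal is correct and is essentially the paper's proof: the same per-level low slab, high slab, and hot subzone ordered publicly by level index, the same uniform $J_i$ with a single key moved into $I$, the other levels exposed in their empty state, and a per-level reduction to \Cref{lem:zerogap} with the endpoints of $I$ as the two handles, summed over the disjoint levels, with a realization then chosen that meets the expectation. The bookkeeping you flag is lighter than you expect: treating the outside placements as extra coins of the single-block evaluator keeps its error at most $\tfrac13$ on every input, so \Cref{lem:zerogap} applies verbatim, and header reads are already charged reads that each reveal at most one handle.
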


\begin{proof}
Give each level $i$ a low slab below $I$, a high slab above $I$, and a hot
subinterval inside $I$, the zones disjoint and ordered publicly by level index.
Draw $J_i$ uniform in $\{0,\dots,m_i\}$, placing the first $J_i$ keys of level $i$
in its low slab and the rest in its high slab, so $I$ holds no key and $Q$ returns
its default; the hot alternative, when $J_i<m_i$, moves $k_{i,J_i+1}$ into the hot
subinterval, making $I$ nonempty and changing the answer while every other
encoding stays fixed. Expose every level but $i$ in its empty state for free;
deciding the global answer is then deciding whether level $i$ leaves $I$ empty,
the task of \Cref{lem:zerogap} with $m=m_i$ and the endpoints of $I$ as the two
handles, at $\Omega(\log(m_i+1))$ reads. By no-cascade, reads elsewhere do not help,
the per-level costs sum, and some all-empty realization meets the bound.
\end{proof}

\begin{theorem}[Order/Range Dichotomy]\label{thm:dichotomy}
A decomposable query has $E\Rhat_Q=\Theta(n\log^2 n)$ when an $O(1)$-read per-level
certificate settles it and some unread live level can change its answer, as with
membership, point-emptiness, and the global extrema. When instead a level's
contribution turns on $q$'s position among its keys, as for predecessor and
successor, the additive rank, range-count, and range-sum, and the idempotent
range-minimum, range-maximum, and bounded-interval emptiness, the cost is
$E\Rhat_Q=\Theta(n\log^3 n)$, with no hypothesis on frontier or budget.
\end{theorem}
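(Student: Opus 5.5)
The theorem has two lines, and I would prove each by pairing an upper-bound schedule with a lower bound. In both cases the lower bound reuses the write-side accounting already established (the edit floor \Cref{cor:editfloor}, the frontier-width lemma \Cref{lem:fw}, and the size-diversity law \Cref{thm:scalediv}). The only query-specific part is a per-state read floor.

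\emph{Certificate line.} For the upper bound I would run the binary Bentley--Saxe schedule of \Cref{prop:baseline}. Each component stores an $O(1)$-read certificate: a perfect hash for membership and point-emptiness, and its minimum and maximum in the header for the global extrema. This gives $E=O(n\log n)$ and $\Rhat=O(\log n)$.

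For the lower bound I would generalize \Cref{lem:seqaccess}. The hypothesis says some unread live level can change the answer. At a maximum-frontier skeleton, I would build a realization in which every live level holds a neutral contribution. For each level $i$ there is an alternative filling of that level alone that flips the answer; for the extrema this means placing a key below all others in level $i$. Strong materialization leaves every other level's encoding unchanged under this swap. The coupling argument of \Cref{lem:seqaccess} then shows each level is read with probability at least $\tfrac13$, so $\Rhat\ge\What/3$. Combining this with \Cref{cor:editfloor} and \Cref{lem:fw} gives $\Omega(n\log^2 n)$, exactly as in \Cref{thm:membership}.

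\emph{Order/range line.} For the upper bound I would again use binary Bentley--Saxe. Each component stores a search tree laid out in search-round order, annotated with subtree counts (for rank, range-count and unit-weight range-sum) and subtree minima and maxima (for range extrema and emptiness). Each endpoint then costs $O(\log m_k)$ forward reads per component. A range query needs two root-to-leaf paths. I would descend both simultaneously round by round, so that the head still only moves forward. This gives $\Rhat=O(\log^2 n)$ and product $O(n\log^3 n)$.

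For the lower bound I would repeat the three-case proof of \Cref{thm:predecessor} verbatim. The only change is the source of the per-state floor $\Rhat_Q\ge c\sum_i\log_2(m_i+1)$ at any public size vector:
\begin{itemize}
\item predecessor and successor: \Cref{lem:anchorzones}, with successor obtained by reversing the order;
\item rank, range-count and range-sum: \Cref{lem:additivesum} with $g_Q(m)=\Theta(\log(m+1))$;
\item range-minimum, range-maximum and bounded-interval emptiness: \Cref{lem:idempsum}.
\end{itemize}
The case analysis then runs unchanged:
\begin{itemize}
\item wide frontier, $\What>n^\gamma$: the floor is at least $\What$, giving $E\Rhat=\Omega(n^{1+\gamma})$;
\item deep budget, $\Dtil>L^6$: the floor at the final skeleton is $\Omega(L)$, giving $E\Rhat=\Omega(nL^7)$;
\item otherwise: the size-diversity law together with \Cref{lem:calcfloor} gives $\Omega(nL^3)$.
\end{itemize}

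The main obstacle is checking that the additive hard family is embeddable, as \Cref{lem:additivesum} requires. One shared query must realize, at every level at once, an independent $\Omega(\log m_i)$-hard position distribution. I would build it with the private low/high/hot zone construction of \Cref{lem:anchorzones}. The query endpoint sits inside every level's hot window, and the rank contribution $J_i$ or $J_i+1$ of each level is uniform and independent. The zero-side argument of \Cref{lem:zerogap} then lower-bounds recovering $J_i$ itself, since rank determines $J_i$. I would also confirm that the combined floor dominates $\What$, because every $m_i\ge1$ contributes at least a constant; the wide-frontier case depends on this.
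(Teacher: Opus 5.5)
Your proposal is correct and follows the paper's proof essentially step for step. The certificate line uses the membership coupling floor $\Rhat\ge\What/3$ together with \Cref{cor:editfloor} and \Cref{lem:fw}. The order line feeds the per-level floors of \Cref{lem:anchorzones}, \Cref{lem:additivesum}, and \Cref{lem:idempsum} into the three-case assembly of \Cref{thm:predecessor}, with binary Bentley--Saxe carrying annotated trees for the upper bound. Two of your additions fill in details the paper only asserts: the explicit zone-based check that the additive hard family is embeddable, and the round-by-round simultaneous descent of the two range paths.
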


\begin{proof}
On the cheap side an $O(1)$-read witness per level answers $Q$ in
$\Rhat=O(W)=O(\log n)$, hence $E\Rhat=O(n\log^2 n)$, matched below by the
membership floor of \Cref{sec:membership}, whose argument needs only that an unread
live level can flip the answer. On the order-localizing side the direct sums give
$\Rhat_Q\ge c\sum_k\log_2(m_k+1)\ge c\max\{\What,\Phi(t)\}$ on the hard skeleton,
predecessor's own floor: \Cref{lem:anchorzones} for predecessor and, reversing the
order, successor; \Cref{lem:additivesum} for rank, range-count, and unit-weight
range-sum; and \Cref{lem:idempsum} for the idempotent range queries, where an
empty query interval keeps every level pivotal. The three-case assembly of
\Cref{thm:predecessor} then gives $\Omega(n\log^3 n)$, attained by binary
Bentley--Saxe carrying the matching static structure, a balanced tree with subtree
counts, prefix sums, or subtree extrema, at $O(\log m_k)$ reads per level.
\end{proof}

The line is hashing against within-level order localization: membership and
point-emptiness fall to a hash and the global extrema to one stored value per
level, all staying at the membership scale, while predecessor, successor, rank,
and the range queries turn on $q$'s position in each block and pay the logarithm.
Total count is cheaper still, fixed by the public size sequence at $\Theta(n)$ and
lying outside the trichotomy. What the logarithm buys is exactness of the order. The idempotent range queries
carry no group inverse, yet an empty interval makes
every level pivotal, a single hidden key overturning the empty default
(\Cref{lem:idempsum}). The logarithm is specific to abstract keys, and integer
keys shrink it (\Cref{sec:access}).

\begin{remark}
Approximation refunds the logarithm. A constant additive error $\pm\varepsilon n$
in rank or range-count, or a constant-precision quantile, separates only $O(1)$
buckets and returns the query to $\Theta(n\log^2 n)$; a multiplicative error keeps
the $\Theta(\log n)$ scales and does not.
\end{remark}

%% file: sections/select.tex
\section{Select}\label{sec:select}

Select proves the third line of \Cref{thm:main}.  Rank is a key-to-count query: the
query names a key, and each component contributes a count.  Select is the inverse,
count-to-key query: the query names only a rank $K$.  In a one-way scan the key
that will become the answer is not known while the early components are being
read, and those components cannot be revisited once later components reveal the
threshold.  The proof first turns this target-free difficulty into a cut lower
bound, and then shows that any structure with small read cost must repeatedly
rebuild a large component.

\subsection{The Cut Lower Bound}

Comparisons among keys already read are free, so the lower bound is on token reads
and the adversary attacks the unread keys. Two lemmas carry it: a bound on what a
single read can learn, and a cut floor built from it.

\begin{lemma}[Layout Channel]\label{lem:layoutchannel}
Fix the query $q$, its coins $\rho$, and the data-independent initial
configuration $\sigma_0$. Condition on the public layout $G$, the size sequence
together with the boundaries and addresses it determines. Let $X$ be the
stored keys and values. For one run of the forward head let $T$ be its number of
charged reads and $\Pi$ its transcript, the words it reads together with its
working memory. Then
\[
  I(X;\Pi\mid G,q,\rho,\sigma_0)\ \le\ b\,\mathbb E[T\mid G,q,\rho,\sigma_0]
   \ =\ O\bigl(\mathbb E[T]\log n\bigr),\qquad b=\Theta(\log n).
\]
\end{lemma}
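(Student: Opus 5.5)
The plan is a chain-rule argument over the reads, in which the transcript is revealed one read at a time. With the conditioning $(G,q,\rho,\sigma_0)$ fixed, the head is a deterministic procedure. Its next action, whether to read the current token or to skip forward by a computed amount, is a function of the transcript so far. Write $\Pi=(\Pi_1,\Pi_2,\dots)$, where $\Pi_s$ is the content returned by the $s$-th charged read, with $\Pi_s=\bot$ once the run has halted. Working memory is then a deterministic function of $(q,\rho,\sigma_0,\Pi_{<s})$, so it contributes no information beyond the read contents. By the chain rule,
\[
  I(X;\Pi\mid G,q,\rho,\sigma_0)\ =\ \sum_{s\ge1} I(X;\Pi_s\mid \Pi_{<s},G,q,\rho,\sigma_0)\ \le\ \sum_{s\ge1} H(\Pi_s\mid \Pi_{<s},G,q,\rho,\sigma_0).
\]

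Next I bound each term by splitting on the event $\{T\ge s\}$. Because the conditioning fixes the deterministic control, this event is itself a function of $\Pi_{<s}$. On $\{T<s\}$ the value $\Pi_s=\bot$ is determined, so that part contributes zero. On $\{T\ge s\}$ the read returns one token, meaning $w=\Theta(\log n)$ bits. Under the abstract-key oracle of \Cref{sec:model}, it returns at most one key handle together with $O(1)$ control words. The handle's equality label lies in a $\mathrm{poly}(n)$ universe, and its order position is not revealed by the read but only by later comparisons among keys already read. Those comparisons are free computations on handles already in the transcript, so they add nothing to $\Pi$ beyond a function of it.

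The main obstacle is exactly this point, namely that the read outcome has support of size $2^{b}$ with $b=\Theta(\log n)$. I must justify that the relative order of a new handle against previously read ones, which the paper says is "learned only by comparison," does not count as extra transcript entropy outside the read. I would handle it by folding the comparison outcomes into $\Pi_s$. At most $s-1$ earlier handles exist, so a new handle's position among them adds at most $\log_2 s\le\log_2 n^{O(1)}$ bits, assuming without loss of generality that $T\le\mathrm{poly}(n)$, since the state has only $\mathrm{poly}(n)$ tokens and a forward head never rereads a token. This keeps $b=\Theta(\log n)$. So $H(\Pi_s\mid\Pi_{<s},\dots)\le b\Pr[T\ge s\mid G,q,\rho,\sigma_0]$.

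Summing over $s$ and using $\sum_s\Pr[T\ge s]=\mathbb E[T]$ gives the first inequality. The second equality follows by averaging over the conditioning and substituting $b=\Theta(\log n)$. The argument never uses which keys are stored, only the per-read output alphabet, so it holds for every layout $G$, as the statement requires.
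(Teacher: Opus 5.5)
Your proposal is correct and takes essentially the paper's route: a chain rule over the reads, where each read's outcome (the word plus the new handle's rank among the handles already read) carries $b=\Theta(\log n)$ bits, the event $\{T\ge s\}$ is determined by the earlier reads so that the terms sum to $b\sum_s\Pr[T\ge s]=b\,\mathbb E[T]$, and the working memory is absorbed as a function of the reads. The differences are cosmetic. The paper records the address of each read explicitly and cites exact layout to show it carries no information about $X$, and it truncates the chain rule at $K$ before letting $K\to\infty$. You instead fold the address into the head's determinism and use the forward head's bound $T=O(n)$ to keep the sum finite.
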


\begin{proof}
Assume $\mathbb E[T]<\infty$, the bound being vacuous otherwise, and condition
throughout on $G,q,\rho,\sigma_0$. Record the $k$-th read as $V_k=(A_k,Y_k)$ when
$T\ge k$ and $V_k=\bot$ otherwise, where $A_k$ is the address read and $Y_k$ the
returned word together with the rank of its handle among those already read. A
query reads at most the $O(n)$ live tokens, so that rank fits in $O(\log n)$ bits
and $Y_k$ carries $b=\Theta(\log n)$ bits. The head is deterministic given the
conditioning and the reads so far, so $\{T\ge k\}$ is a function of
$V_1,\dots,V_{k-1}$, and on that event so is $A_k$, by \Cref{def:strongmat}; hence
$I(X;A_k\mid V_{<k})=0$ and
\[
  I(X;V_k\mid V_{<k})\ \le\ H(Y_k\mid V_{<k},A_k)\ \le\ b .
\]
Since $V_k=\bot$ when $T<k$, chain rule over $k\le K$ gives
$I(X;V_1,\dots,V_K)\le b\,\mathbb E[\min\{T,K\}]$, and $K\to\infty$ gives
$I(X;V^\infty)\le b\,\mathbb E[T]$. The transcript $\Pi$ is a function of the
conditioning and $V^\infty$ by the no-external-state clause, so data processing
closes it.
\end{proof}

For a randomized evaluator the coins $\rho$ are independent of the data, so the
bound holds with $\Pi$ replaced by $(\Pi,\rho)$ and averaged over $\rho$.

\begin{lemma}[Cut Floor]\label{lem:cutfloor}
At a boundary between two active levels holding $P$ samples before it and $U$
after, with $r=\min\{P,U\}$, some input family with the fixed rank $K=r$ forces
every bounded-error forward evaluator to read $\Omega(r)$ words before the head
crosses the boundary. Hence $\Rhat(t)\ge c\min\{P,U\}$ for an absolute $c>0$.
\end{lemma}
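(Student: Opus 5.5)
The plan is an information-theoretic cut argument built on \Cref{lem:layoutchannel}. Fix the boundary, with prefix levels holding $P$ samples and suffix levels holding $U$, and set $r=\min\{P,U\}$. The input family will make the answer to select$(K=r)$ one of $r$ designated prefix samples, but the index of that sample will be fixed only by suffix data. The scan then has to carry enough information about the $r$ candidate answers across the boundary, and the only way it can get that information is by reading prefix tokens.

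\textbf{Construction.} Take $r$ designated prefix samples with keys $a_1<\dots<a_r$, and give each an independent uniform value $v_i$ over a $\mathrm{poly}(n)$ universe, so each carries $\Theta(\log n)$ bits. Every other prefix sample gets a key above all of these keys and above all suffix keys, which puts it out of the way. In the suffix, draw a hidden index $J$ uniform in $\{1,\dots,r\}$. Then place exactly $r-J$ suffix keys below $a_1$ and the remaining $U-(r-J)\ge0$ keys above everything relevant. The live keys below or at $a_J$ now number $J+(r-J)=r$, so the $r$-th smallest key is $a_J$, and its value is $v_J$.

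Because the layout is oblivious, \Cref{def:strongmat} guarantees that the sizes, addresses and boundary are the same on every input in the family. Local encoding guarantees that the prefix encodings depend only on the $a_i,v_i$ and the filler keys, never on $J$, since $J$ changes only how the two sides interleave and the suffix's own contents.

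\textbf{Counting argument.} Let $\Pi_{\mathrm{pre}}$ be the transcript at the moment the head crosses the boundary. After the crossing, every read is of a suffix token, and the suffix is independent of $(v_1,\dots,v_r)$. So the final answer is a function of $\Pi_{\mathrm{pre}}$, the suffix (which determines $J$), and the coins. With probability at least $2/3$ that answer equals $v_J$. By Fano's inequality applied to each $J=j$, and then averaging over $J$ (which is independent of $\Pi_{\mathrm{pre}}$ and of the $v$'s), the average over $j$ of $H(v_j\mid\Pi_{\mathrm{pre}},\rho)$ is at most $\tfrac13\log|\mathcal V|+1$. Summing over $j$, and using that the $v_j$ are independent, gives
\[
I(v_1,\dots,v_r;\Pi_{\mathrm{pre}}\mid\rho)\ \ge\ \textstyle\sum_j I(v_j;\Pi_{\mathrm{pre}}\mid\rho)\ \ge\ r\bigl(\tfrac23\log|\mathcal V|-1\bigr)=\Omega(r\log n).
\]
Now apply \Cref{lem:layoutchannel} to the head truncated at the boundary, with the suffix data fixed and the layout $G$ conditioned on. It bounds this mutual information by $O(\mathbb E[T_{\mathrm{pre}}]\log n)$, and therefore $\mathbb E[T_{\mathrm{pre}}]=\Omega(r)$. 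Since $\Rhat(t)$ is the prefix-worst expected read count, it follows that $\Rhat(t)\ge c\min\{P,U\}$.

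\textbf{Main obstacle.} The hard step is showing that the head's behaviour before the boundary cannot exploit $J$, and that nothing learned after the boundary can recover the $v_j$. The first part follows from local encoding together with the no-external-state clause: the pre-boundary reads never see suffix tokens, so $J$ is independent of $\Pi_{\mathrm{pre}}$ given the coins. The second part requires that suffix tokens carry no information about prefix values. This is where the abstract-key oracle is needed: an answer must be decoded from words actually read, not given as a pointer into an unread prefix cell, so $v_J$ has to appear in $\Pi_{\mathrm{pre}}$ itself. I will also have to check the degenerate case $U<r-J$. Since $r\le U$, the choice above always leaves $U-(r-J)\ge0$, so this case does not arise.
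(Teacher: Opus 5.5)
Your skeleton is the paper's. You use the same cut and the same suffix encoding of $J$: sending $r-J$ suffix keys low makes the $r$-th smallest live key the $J$-th designated prefix item. You make the same observation that the pre-crossing transcript is independent of $J$ while the suffix carries nothing about the planted entropy. You close the same way, via per-$j$ Fano, superadditivity over independent coordinates, and \Cref{lem:layoutchannel}, and your averaged Fano accounting is sound.

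The gap is where you plant the entropy. In this paper select returns the $K$-th smallest live \emph{key}, not its value; membership and predecessor are the queries that return values. So the step ``with probability at least $2/3$ that answer equals $v_J$'' is not available. With fixed keys $a_1<\dots<a_r$, the correct output $a_J$ is a function of $J$ alone. The evaluator is never required to emit $v_J$, and Fano applied to the $v_j$ gives nothing.

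The repair is to put the entropy into the key identities, which is exactly why the paper uses slabs. Draw each $a_i$ uniformly from a public slab $S_i$ of $\mathrm{poly}(n)$ candidate keys, with $S_1<\dots<S_r$ separated by fixed guards. The relative order is then fixed, but $Z_i$, the index of $a_i$ inside $S_i$, carries $\Theta(\log n)$ independent bits, and the answer $a_J$ reveals them.

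This change forces a second adjustment that your write-up does not need but the repaired version does. Your ``$r-J$ suffix keys below $a_1$'' must sit in a fixed guard below the whole slab $S_1$, not at a data-dependent value such as $a_1-1$. Otherwise reading such a suffix handle after the crossing would leak $Z_1$, breaking the independence of the suffix from the planted entropy that your Fano step relies on.

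With those two changes your chain goes through verbatim with $Z_j$ in place of $v_j$. Alternatively, you could keep your fixed-key family and argue from the oracle rule that an answer handle must be one actually read. Then $a_J$ must appear among the pre-crossing reads, which are independent of $J$ and return at most one handle each, so $\mathbb E[T]\ge\tfrac23 r$. That is a different argument from the information-theoretic one you wrote.
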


\begin{proof}
Place $r$ disjoint rank slabs $S_1<\cdots<S_r$ in the prefix, separated by fixed
guards, each slab a public range of $Q=\mathrm{poly}(n)$ candidate positions. Draw
$a_i$ uniformly and independently from $S_i$ and let $Z_i$ name which candidate it
is, so the $a_i$ carry $\Theta(\log n)$ independent bits while their order is fixed
by the slabs; put the remaining $P-r$ prefix keys in a guard above $S_r$. The
suffix encodes an index $j$: place $r-j$ of its keys in a guard below $S_1$ and the
rest in a guard above every slab. Then exactly $r-j$ suffix keys and the $j-1$
slab keys $a_1,\dots,a_{j-1}$ lie below $a_j$, so $a_j$ is the $r$-th smallest of
the union, while the prefix and the rank $K=r$ are the same for every $j$.

Draw $J$ uniform on $\{1,\dots,r\}$, independent of $Z=(Z_1,\dots,Z_r)$, and let
$\Pi$ be the transcript and coins at the crossing. The prefix is read before the
crossing and does not depend on $J$, so $(Z,\Pi)$ is independent of $J$, while the
suffix determines $J$ and carries nothing about $Z$. A correct answer is $a_J$,
which the head must decode from $\Pi$ and $J$ since it cannot return to the prefix;
so by Fano $I(Z_j;\Pi\mid J=j)=\Omega(\log Q)$, equal to $I(Z_j;\Pi)$ by that
independence. As the $Z_i$ are independent,
$I(Z;\Pi)\ge\sum_i I(Z_i;\Pi)=\Omega(r\log n)$. Since $Z$ is a function of $X$ and
$\Pi$ records the $T$ prefix reads, \Cref{lem:layoutchannel} gives
$\Omega(r\log n)\le I(X;\Pi)\le O(\mathbb E[T]\log n)$, whence $\mathbb
E[T]=\Omega(r)$ and $\Rhat(t)\ge\mathbb E[T]=\Omega(r)$.
\end{proof}

The guards keep the suffix clear of $Z$: its keys sit in fixed slabs, never at a
data-dependent value such as $a_1-1$, so reading them reveals only $j$. The floor
rests on a single boundary rather than a sum over levels, but that is enough: a
low read cost forces almost all the live mass into one level, and building such a
level again and again is the write cost select cannot escape.

\subsection{The Quadratic Product Bound}

\begin{theorem}[Quadratic Select Bound]\label{thm:select}
Strongly materialized select has $\min E\Rhat=\Theta(n^2)$, with no hypothesis on
the frontier or the budget, and the lower bound holds against bounded-error query
randomization.
\end{theorem}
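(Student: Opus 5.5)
The proof has two halves: a matching upper bound, and a lower bound that reduces everything to the cut floor of \Cref{lem:cutfloor} plus an immutability charge.

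\emph{Upper bound.} I aim for $E\Rhat=O(n^2)$ at every point of the tradeoff, parametrized by a read budget $R$.

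\begin{itemize}
  \item For $R=\Theta(n)$, keep every component and never merge. Then $E=O(n)$, and a query reads all $\le n$ tokens, sorts them in free working memory, and returns the $K$-th key. So $\Rhat=O(n)$ and $E\Rhat=O(n^2)$.
  \item More generally, for $R\le n$, keep one big sorted base component plus a buffer of at most $R$ singletons. When the buffer fills, merge everything into a new base.
  \item The base is rebuilt $n/R$ times at cost $O(n)$ each, so $E=O(n^2/R)$.
  \item A query first reads the whole buffer; a reversed layout or header is not even needed, since the buffer lies after the base. This is the wrong direction, so instead I store the base as a forward-readable order-statistics tree.
\end{itemize}

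The direction problem is real: the query must know the buffer's keys before descending the base, but the base comes first. I would handle it by descending the base for each of the $\le R$ candidate rank offsets simultaneously. With $R=1$, i.e.\ merging on every insert, a single component answers select by an $O(\log n)$ root-to-leaf descent with subtree counts. This gives $E=O(n^2)$, $\Rhat=O(\log n)$, and product $O(n^2\log n)$, which is off by a log. Pure no-merge gives exactly $O(n^2)$ with $\Rhat=n$, and that suffices for $\Theta(n^2)$. I would state the upper bound via the never-merge schedule, plus the remark that full rebuilding gives the other extreme.

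\emph{Lower bound.} Fix the oblivious schedule, so the size trajectory is input-independent, and let $R=\Rhat$.

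\begin{itemize}
  \item If $R\ge cn/4$, then $E\ge n$ already gives $E\Rhat=\Omega(n^2)$.
  \item Otherwise, apply \Cref{lem:cutfloor} at every boundary of every state. At each boundary, $\min\{P,U\}\le R/c$.
  \item Take the boundary where the prefix mass first exceeds half the live count $N_t$. Both sides of the boundary before the median component, and of the one after it, have small minimum. So the median-straddling component $C^\star(t)$ holds at least $N_t-2R/c$ samples.
  \item Hence, for $t\ge 4R/c$, all but $O(R)$ live samples lie in one component.
\end{itemize}

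\emph{Charging.} Consider times $t_0=4R/c$ onward, where $N_t=t$. Between consecutive instants where the dominant component is replaced, at most $O(R)$ new insertions can sit outside it. Suppose the dominant component were not rebuilt over an interval of $\Theta(R)$ insertions. Then those $\Theta(R)$ new samples would lie outside it, since components are immutable and new samples arrive as younger singletons. This contradicts the concentration bound once the constant is chosen large.

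So in every window of $O(R)$ insertions after $t_0$, a component of size at least $t-O(R)=\Omega(t)$ is built. Summing over the $\Omega(n/R)$ windows in $[n/2,n]$ gives $E=\Omega(n\cdot n/R)$, and therefore $E\Rhat=\Omega(n^2)$. Randomization enters only through \Cref{lem:cutfloor}, which already handles bounded error. Because the schedule is oblivious, each cut state occurs on the hard input.

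\emph{Main obstacle.} The delicate step is the concentration argument. It has to be carried out on the actual schedule, with care in two places. First, the dominant component must be shown to be the same object until it is rebuilt: an immutable component keeps its size, so new mass can only accumulate in younger components, which forces a rebuild. Second, the constants have to be fixed so that each window provably contains one rebuild of size $\Omega(t)$.
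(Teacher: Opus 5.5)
Your lower bound is the paper's argument. The cut floor at every boundary forces a single level of size at least $t-2R/c$. Immutability means a level alive at time $t_i$ has size at most $t_i$. Hence dominant levels at times spaced by more than $2R/c$ are distinct, newly built components of size $\Omega(n)$, and $E=\Omega(n^2/(R+1))$.

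The step you flag as delicate is handled in the paper by exactly this size comparison, $|B_{i+1}|\ge t_{i+1}-2s>t_i\ge|B_i|$, so you never need to track the identity of the dominant component. Your split at $R\ge cn/4$ is too generous, though. When $2R/c$ is close to $n/2$, the bound $t-2R/c$ is not $\Omega(n)$ on $[n/2,n]$, and $4R/c$ may exceed $n/2$. The paper instead takes $R<cn/8$, so the dominant level exceeds $n/4$ throughout $[n/2,n]$.

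Where you genuinely differ is the upper bound. You certify $\min E\Rhat=O(n^2)$ by the never-merge schedule, with $E=O(n)$ and $\Rhat=O(n)$. This is valid in the model, since zero merges is a canonical, oblivious, strongly materialized schedule, and it is the simplest possible witness. The paper instead builds the base-plus-buffer family and shows the product is $\Theta(n^2)$ at every buffer size $s$. That makes the lower bound tight along the whole tradeoff curve rather than at one endpoint.

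The direction problem that made you abandon that family has a simple fix. The buffer can shift the answer's rank inside the base by at most $s$. So the head can skip, by an offset computed from $K$ alone, to the window of base ranks $[K-s,K]$, read those $O(s)$ keys, then read the buffer and select among the candidates. Nothing about the buffer needs to be known before the base.

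The same computed-offset skip applies to full rebuilding into a sorted array: a query then costs $O(1)$ reads, not $O(\log n)$. So that extreme also has product $\Theta(n^2)$, not $O(n^2\log n)$.
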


\begin{proof}
Write $R=\Rhat$ and $s=R/c$ with $c$ the constant of \Cref{lem:cutfloor}. The live
mass at time $t$ is $t$, and at every boundary $\min\{P,U\}\le R/c=s$, so each
boundary mass is $\le s$ or $\ge t-s$ and the level spanning $(s,t-s)$ has size
$\ge t-2s$. Suppose $R<cn/8$, so $s<n/8$ and this level exceeds $n/4$ throughout
$t\in[n/2,n]$. Take times $t_0<\cdots<t_K$ spaced by $\Delta=\lceil 2s\rceil+1$
across $[n/2,n]$, so $K=\Omega(n/(R+1))$, and let $B_i$ be such a level at $t_i$.
Being materialized, $B_i$ is static after birth, so $|B_i|\le t_i$, while
$|B_{i+1}|\ge t_{i+1}-2s>t_i\ge|B_i|$; the $B_i$ are therefore distinct components,
each costing $\Omega(n)$ tokens. Hence $E=\Omega(n^2/(R+1))$, and as $R\ge1$,
$E\Rhat=\Omega(n^2)$. If instead $R\ge cn/8$, then $E\ge n$ gives
$E\Rhat\ge nR=\Omega(n^2)$.

For the upper bound, keep one large level together with a suffix of at most $s$
singletons, merging the whole stack into a fresh large level every $s$ inserts.
Rebuilding the large level at sizes $s,2s,\dots,n$ costs $E=\Theta(n^2/s)$. The large
level is a sorted array at the bottom of the stack, so the head jumps by a computed
offset to its rank window $[K-s,K]$, reads those $O(s)$ keys and the $\le s$
singletons above it, and selects the global $K$-th among these candidates; the
suffix shifts the answer's rank in the large level by at most $s$, so the window
holds it, and $\Rhat=O(s)$. The product is $\Theta(n^2)$ at every $s$.
\end{proof}

The three query classes now stand in a line, governed by what the query hands the
forward head:
\[
  \underbrace{\Theta(n\log^2 n)}_{\text{hashable witness}}\ <\
  \underbrace{\Theta(n\log^3 n)}_{\text{known key value}}\ \ll\
  \underbrace{\Theta(n^2)}_{\text{rank, no value}} .
\]
The first gap is one logarithm, order against hashing; the second is a polynomial. Rank and select are inverse operations, value to count and
count to value, and select's answer is itself only $O(\log n)$ bits, yet the two
lie polynomially apart: the gap is made by the access discipline, not by the
information in the answer. Select gives the head no value to aim at, and that is
what carries it past predecessor. The cost is a property of the forward head, not of
the key model, since the lower bound forces its reads by completing cells the head
has not visited and never inspects a key's representation; it therefore persists
for integer keys, where the word-RAM that narrows the order gap (\Cref{sec:access})
still leaves select at $\Theta(n^2)$.

\begin{remark}
Constant-precision approximate select returns to $\Theta(n\log^2 n)$: $O(1)$
samples at evenly spaced ranks per level form a forward quantile sketch that needs
no refinement of the exact threshold. Exactness is again what the cost buys.
\end{remark}

\begin{remark}
A cross-level rank catalog would answer select in $O(1)$ reads, but maintaining one
under LIFO appears to cost $\Omega(n^2)$ edits, which would carry the $\Theta(n^2)$
past strong materialization, as \Cref{sec:nonmat} carries the predecessor bound. We
do not pursue this here.
\end{remark}

%% file: sections/lifting.tex
\section{Static-to-Dynamic Lifting}\label{sec:lifting}

Every query that pays the order logarithm lands at the same $n\log^3 n$, and a
single reduction explains why. Each pays, per level, the static cost of locating
the query there; the model forbids sharing those costs across levels, so the read
cost is their sum. For an additive query this turns the dynamization optimum into
the edit budget times a sum of static per-level complexities, a dynamic lower bound
read off a static one. The logarithm of order is then no artifact of maintenance
but the faithful lift of a static search cost.

Recall the additive queries of \Cref{sec:order}: $Q(S,q)=\sum_k\varphi(L_k,q)$ in
an abelian group, with per-level static complexity $g_Q(m)$, the reads to compute
one block's contribution $\varphi(B,q)$, and embeddable hard families that plant an
independent
instance in every level (\Cref{lem:additivesum}).

\begin{theorem}[Static-to-Dynamic Lifting]\label{thm:lifting}
Let $Q$ be additive with an embeddable hard family and per-level static complexity
$g_Q$: a forward-readable static structure attains $O(g_Q(m))$ reads on a block of
$m$ samples, and the hard family forces $\Omega(g_Q(m))$. Suppose $g_Q$ is
nondecreasing with $g_Q(1)\ge1$ and of linear rate $g_Q(2^k)\ge\zeta_Q k$ for some
$\zeta_Q\in(0,1]$, and $\Sigma_Q:=\sum_{k=0}^{L}g_Q(2^k)=\Theta(\zeta_Q L^2)$. Then
every strongly materialized algorithm with a bounded-error evaluator has
$E\Rhat_Q=\Omega(nL\,\Sigma_Q)$, which binary Bentley--Saxe attains, so
$\min E\Rhat_Q=\Theta(nL\,\Sigma_Q)$.
\end{theorem}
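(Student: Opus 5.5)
The plan is to follow the assembly of \Cref{thm:predecessor}, replacing the abstract-key floor $\log_2(m+1)$ by $g_Q(m)$ throughout.

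\textbf{Upper bound.} On binary Bentley--Saxe, the live levels at any time have distinct sizes $2^k$ with $k\le L$. Each level carries the forward-readable static structure attaining $O(g_Q(2^k))$ reads, so $\Rhat_Q=O(\sum_{k\le L}g_Q(2^k))=O(\Sigma_Q)$. Combined with $E=O(nL)$ from \Cref{prop:baseline}, this gives $E\Rhat_Q=O(nL\Sigma_Q)$.

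\textbf{Lower bound: read side.} \Cref{lem:additivesum} gives, at every skeleton of the oblivious schedule, a hard realization with $\Rhat_Q\ge c\sum_i g_Q(m_i)$. The first step is to convert this into the size-diversity currency. Since $g_Q$ is nondecreasing with $g_Q(2^k)\ge\zeta_Q k$, every level of size $m\ge1$ satisfies
\[
  g_Q(m)\ \ge\ \max\{1,\ \zeta_Q\lfloor\log_2 m\rfloor\}\ \ge\ \tfrac{\zeta_Q}{2}\log_2(m+1).
\]
Here $g_Q(1)\ge1$ handles small $m$, and $\zeta_Q\le1$ is used. Hence $\Rhat_Q\ge c'\zeta_Q\sum_i\log_2(m_i+1)\ge c'\zeta_Q\max\{\What,\Phi(t)\}$, and in particular $\Rhat_Q\ge c\What$, using $g_Q\ge1$ alone.

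\textbf{Lower bound: case split.} I then rerun the three cases of \Cref{thm:predecessor} with $\varepsilon=10^{-2}$ and $\gamma=\varepsilon^2$.
\begin{itemize}
\item \emph{Case (i), $\What>n^\gamma$.} Here $E\Rhat_Q\ge n\cdot c\,n^\gamma$. This dominates $nL\Sigma_Q$, because $\Sigma_Q\le(L+1)g_Q(n)$ and the static upper bound forces $g_Q(m)=O(\text{poly}\log)$. Concretely, $\Sigma_Q=\Theta(\zeta_Q L^2)$ gives $\Sigma_Q=O(L^2)$.
\item \emph{Case (ii), $\Dtil>L^6$.} The edit floor gives $E=\Omega(nL^6)$, and $\Rhat_Q\ge c$, so $E\Rhat_Q=\Omega(nL^6)\gg nL\cdot L^2\ge nL\Sigma_Q$.
\item \emph{Case (iii), $\What\le n^\gamma$ and $\Dtil\le L^6$.} The size-diversity law (\Cref{thm:scalediv}) gives a time $t^\star$ with $\Phi(t^\star)\ge(1-\varepsilon-o(1))\Dtil L\mu(\beta L/\Dtil)$. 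The read conversion then gives $\Rhat_Q=\Omega(\zeta_Q\Phi(t^\star))$, and the edit floor with \Cref{lem:calcfloor} yields $E\Rhat_Q=\Omega(\zeta_Q nL^3)$. Since $\Sigma_Q=\Theta(\zeta_QL^2)$, this is $\Omega(nL\Sigma_Q)$.
\end{itemize}

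\textbf{Main obstacle.} The delicate point is the uniformity in $\zeta_Q$. If $\zeta_Q$ may depend on $n$ (for instance $\zeta_Q\sim\lambda/L$ in the integer case), the conversion $g_Q\ge\tfrac{\zeta_Q}{2}\log_2(m+1)$ loses nothing only because the hypothesis $\Sigma_Q=\Theta(\zeta_Q L^2)$ pins $\Sigma_Q$ to exactly that scale. I would check that the constants in cases (i) and (ii) do not secretly need $\zeta_Q$ bounded below. They do not: both cases beat $nL^3\ge nL\Sigma_Q/\Theta(1)$ outright, using $\zeta_Q\le1$.

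The embeddability of the hard family at an arbitrary skeleton is supplied by hypothesis, so \Cref{lem:additivesum} applies verbatim, including against bounded-error evaluators.
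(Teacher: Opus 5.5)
Your proposal is correct and follows the paper's proof essentially step for step. It uses the same binary Bentley--Saxe upper bound and the same conversion $g_Q(m)=\Omega(\zeta_Q\log_2(m+1))$, obtained from monotonicity, $g_Q(1)\ge1$ and the linear rate. It then runs the same three-case assembly of \Cref{thm:predecessor}, closed by \Cref{lem:calcfloor}, to reach $\Omega(\zeta_Q nL^3)=\Omega(nL\,\Sigma_Q)$.

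Your treatment of cases (i) and (ii) differs only slightly: you use $g_Q\ge1$ and $\Sigma_Q=O(L^2)$ instead of the $\zeta_Q$-scaled floor, which is harmless and a bit cleaner. One aside should go: the claim that the static upper bound forces $g_Q$ to be polylogarithmic is not a valid justification. The $\Sigma_Q=O(L^2)$ argument you give right after is the correct one.
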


\begin{proof}
For the upper bound, the schedule of \Cref{prop:baseline} carries that static
structure at each level, a balanced search tree with subtree aggregates for the
additive queries. Then $E=O(nL)$, and the active sizes being distinct powers of
two, $\Rhat=\sum_{\text{active}}g_Q(m_k)\le\Sigma_Q$.

For the lower bound, \Cref{lem:additivesum} gives
$\Rhat_Q(t)=\Omega(\sum_k g_Q(m_k))$; the linear rate and the floor $g_Q(1)\ge1$
make $g_Q(m)=\Omega(\zeta_Q\log_2(m+1))$, so
$\Rhat_Q(t)=\Omega(\zeta_Q\sum_k\log_2(m_k+1))$. Since $\sum_k\log_2(m_k+1)$ is at
least $\max\{W(t),\Phi(t)\}$ at all times and at least $\log_2(n+1)$ at the final
skeleton, this is the read floor of \Cref{thm:predecessor} scaled by $\zeta_Q$. Its
three-case assembly then applies: the wide-frontier and deep cases exceed
$nL\,\Sigma_Q$, while the size-diversity case multiplies $E=\Omega(n\Dtil)$ by
$\Rhat_Q=\Omega(\zeta_Q\Phi(t^\star))$ and closes with the calculus floor, giving
$E\Rhat_Q=\Omega(\zeta_Q nL^3)$. By $\Sigma_Q=\Theta(\zeta_Q L^2)$ this is
$\Omega(nL\,\Sigma_Q)$.
\end{proof}

The middle row of the dichotomy is now one line: abstract-key rank, range-count,
and range-sum have $g_Q(m)=\Theta(\log(m+1))$, hence $\zeta_Q=\Theta(1)$,
$\Sigma_Q=\Theta(L^2)$, and $E\Rhat_Q=\Theta(nL^3)$. Beyond recovering the additive
row of \Cref{thm:dichotomy}, the lift names the structure of the cost. The read
bound is a sum of per-level static search complexities, one term per live level,
because maintenance does not inflate a single level and cascading cannot amortize
across them. That this sum cannot stay below $L^2$ is the size-diversity of
\Cref{sec:order}: that section forces the sum up to $L^2$, and this one explains why
the cost is that sum to begin with. The two are the lower-bound face and the
structural face of the same $n\log^3 n$.

\begin{remark}
Because $g_Q$ is a static complexity, the lift inherits its robustness. Integer keys
shrink the per-level cost and carry rank, range-count, and range-sum to
$\Theta(n\log^2 n\,\lambda)$, while in two dimensions the separation survives the
word-RAM, with range-counting lifting to $\Theta(n\log^3 n/\lambda)$ and no van Emde
Boas collapse. Both are taken up in \Cref{sec:access}; the integer bounds there are
deterministic and Las Vegas cell-probe bounds, and a bounded-error version would
need a randomized static one. Integer predecessor, not an additive
query, stays open between $1$ and $\lambda$.
\end{remark}

%% file: sections/accessmodel.tex
\section{Access-Model Variants}\label{sec:access}

The classification of \Cref{thm:main} is stated for abstract keys, strong
materialization, and one-way scans.  This section records what happens when one
restriction at a time is relaxed.  These variants are not needed for the main
lower bounds; their role is to identify which assumption supports which
separation.  Integer keys reduce the cost of searching inside one component,
random access removes the no-backseek obstruction for select, and
non-materialized cascade data can remove the abstract-key predecessor gap only
when it can be placed in the right scan direction.

\subsection{Integer Keys}

The order logarithm is a property of abstract keys. An integer key in a
$\mathrm{poly}(n)$ universe is searchable by the word-RAM: a y-fast
trie~\cite{willard1983} finds a run's predecessor in $O(\lambda)$ reads and a
fusion tree~\cite{fredmanwillard1993} in $O(\log_w m)$, both forward-readable under
the round-by-round layout of \Cref{prop:baseline}. Predecessor then costs
$O(n\log^2 n\,\lambda)$, its gap over membership shrinking from $\log n$ to
$\lambda$, the way radix sort undercuts the comparison-sorting bound without
refuting it.

The collapse is not uniform, and the reason is what makes abstract keys hard. An
additive query reads a contribution from every run, so each must still be searched,
now at integer cost: a rank among $m$ integers is a predecessor search of static
complexity $g_Q(m)=\Theta(\max\{1,\min\{\log_w m,\lambda\}\})$, the y-fast or fusion layout
above, matched from below by P\u{a}tra\c{s}cu and
Thorup~\cite{patrascuthorup2006,patrascuthorup2007}, deterministic and Las Vegas
alike. Its binary sum is $\Sigma_Q=\Theta(L\lambda)$, so \Cref{thm:lifting} places
rank, range-count, and range-sum at $\Theta(n\log^2 n\,\lambda)$, exactly
$\Theta(\lambda)$ above membership.

Predecessor behaves differently, because it is settled by one run and the rest need
only be ruled out. Ruling run $i$ out asks whether its keys meet the open interval
between the running best and the query, a range-emptiness test the word-RAM answers
in $O(1)$~\cite{abr2001} but abstract keys answer only by localizing the query
inside the run, the $\Omega(\log m)$ step behind the separation. The per-run floor
the additive queries enjoy has no integer counterpart here, so the exact integer
cost of predecessor stays open between $\Theta(n\log^2 n)$ and
$\Theta(n\log^2 n\,\lambda)$: a lower bound would need a per-run test that stays
$\Omega(\lambda)$-hard under a shared query, and the natural one, whether a run
holds the global predecessor, is the range-emptiness integers trivialize. The key
models part exactly here.

The collapse is one-dimensional. Two-dimensional range-counting has no van Emde
Boas speedup, its cell-probe complexity $\Omega(\log m/\lambda)$ at linear
space~\cite{patrascu2007} matched by a range tree~\cite{chanwilkinson2013}, so
lifting $g_Q(m)=\Theta(\max\{1,\log m/\lambda\})$ gives $\Theta(n\log^3 n/\lambda)$, above
membership by $\Theta(\log n/\log\log n)$. The separation is no artifact of
unstructured keys; it survives the word-RAM as soon as the queries are
two-dimensional, the regime of multi-attribute range search.

\subsection{Random Access}

Select's polynomial cost is the signature of the no-backseek head, and it vanishes
the moment the head may return. The obstacle of \Cref{sec:select} was an answer
threshold fixed by all runs jointly, unknown while the early runs are read and out
of reach once they are passed. Random access dissolves it: selecting the $K$-th
smallest among $W$ sorted runs is classical, $O(W+\sum_i\log(k_i+1))$ comparisons
with $k_i$ run $i$'s share of the $K$ smallest~\cite{fredericksonjohnson1984,kkzz2019},
which is $O(\log^2 n)$ at a diverse frontier. Random access brings select to
$O(n\log^3 n)$, the value-query scale, with the exact optimum open.

It does no more. Predecessor stays at $n\log^3 n$ under random access, since every
run must still be searched and no run bridges another; what random access removes
is not the per-run search but the forward head's inability to revisit an internal
threshold. The polynomial gap was never in the answer, only $O(\log n)$ bits, but
in that head. Integer keys touch only the order gap and random access only select's,
each isolating the restriction its cost depends on.

\subsection{Deque Edits}

Two restrictions guard the cascade together: a merge stack writes only at the tail,
behind the forward head, and its runs are strongly materialized, so no run holds
order information about another. Relax both, writing at the head's entrance and
letting a young run carry a bridge into an older one, and the separation falls.
Model the relaxation by \emph{deque} edits, popping and pushing at both ends of the
token string while keeping the same forward, no-backseek head, so a component may be
placed at position $0$, ahead of all the head will read; dropping materialization
lets a young component store forward handles into the older ones, charged to itself
and leaving them unedited.

\begin{theorem}[Deque Collapse]\label{thm:dequecollapse}
Under deque edits with non-materialized runs, abstract-key predecessor admits
$E=O(n\log n)$ and $\Rhat=O(\log n)$, so $E\Rhat=O(n\log^2 n)$ and the order
separation does not survive.
\end{theorem}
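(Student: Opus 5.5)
Since this is an upper bound, I will exhibit an explicit construction: a binary Bentley--Saxe schedule in which each newly built component also carries a fractional-cascading catalog into the older components, placed at position $0$ so that the head reads it first. The components themselves follow \Cref{prop:baseline}: sizes are the set bits of the insert count, equal-sized components are merged, and $E=O(n\log n)$ for the sorted blocks. The new ingredient is a \emph{cascade front} $F$. It is a single structure stored at the head of the token string and is rebuilt whenever the stack changes. It is laid out as a forward-readable search structure over the \emph{union} of all live keys, and each entry carries forward handles, that is, addresses, into every live component, recording the predecessor position of that gap in each component. A predecessor query for $x$ then runs as follows. It reads $F$ root to leaf in round-by-round layout, which takes $O(\log n)$ comparison reads. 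It lands in the union gap containing $x$; the union predecessor of $x$ is then just the left endpoint of that gap, so the answer is decoded directly from $F$ without visiting any component. This already gives $\Rhat=O(\log n)$. The handles are needed only if values are stored in the components, and in that case the head skips forward to the one addressed sample, paying $O(1)$ more reads. This is legal because all components lie after $F$ in the string.

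The main obstacle is the write cost. Rebuilding a union structure of size $n$ on every insert costs $\Theta(n^2)$, so $F$ cannot be monolithic. My plan is to make the front itself logarithmic-method shaped, but in \emph{reverse} order at the head. Using deque edits at position $0$, I keep front pieces $F_k$ aligned with components. Piece $F_k$ is built exactly when component $C_k$ is built, stores the keys of $C_k$ together with every older component's keys in the cascade sense, and is read before the pieces for younger components. That ordering is still wrong for reaching younger components, and this is the step I expect to be delicate. I therefore switch to the cleaner variant: each front piece $F_k$ is a full standalone predecessor structure over $C_k$ alone, pushed at the \emph{head} end and popped there when $C_k$ is merged away. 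This needs the head-end pop of a middle piece, so I would instead store pieces in the head region sorted so that the pieces that merge first lie outermost, mirroring the tail stack. Under binary Bentley--Saxe the youngest components merge first, so the head region becomes a stack of pieces with the youngest at position $0$. Merges then pop a prefix of pieces at the head end and push one new piece, and each piece costs $O(|C_k|)$ exactly like its component. The total front write cost is therefore again $O(n\log n)$.

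The remaining task is to get $\Rhat=O(\log n)$ instead of $O(\log^2 n)$ across the pieces. Here non-materialization is essential. When piece $F_k$ is built, every component older than $C_k$ already exists. So $F_k$ can be a fractional-cascading catalog: it merges $C_k$ with a $1/2$-sample of the catalog of the next older piece, and it stores bridges into that older piece, which sits \emph{later} in the head region and hence ahead of the reader. This is precisely the time direction that \Cref{lem:causalbridge} forbids at the tail and that deque edits permit. The head searches the youngest piece in $O(\log n)$ reads, follows bridges with $O(1)$ reads per piece for $O(\log n)$ pieces, and keeps the best predecessor found. The catalog sizes form a geometric sum, so each piece has size $O(|C_k|+|C_{k+1}|/2+\cdots)$. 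Because component sizes double going older under binary Bentley--Saxe, that tail is not geometric in the helpful direction. To repair this, I would sample at rate $1/4$ and charge the catalog of piece $k$ to the older components. Each component is then copied into younger catalogs with geometrically decaying weight per rebuild, keeping total writes $O(n\log n)$. Checking this charging is the calculation I expect to need the most care. With it, $E\Rhat=O(n\log^2 n)$.
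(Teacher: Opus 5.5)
Your read-side plan works, but the write bound fails, and more careful charging cannot repair it. With constant-rate cascading, each catalog $A_k$ takes a $1/c$-sample of $A_{k+1}$, so $|A_k|\ge c^{-(j-k)}|C_j|$ for every older $j$. Nothing bounds the ratio of adjacent sizes. At insert count $n/2+1$ the freshly built singleton piece already copies about $n/(2c)$ keys of the size-$n/2$ component.

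In the mirrored binary counter every insert rebuilds the youngest piece. During one life of a size-$2^k$ component, the piece built when the younger count reaches $t\in\{1,\dots,2^k-1\}$ sits at distance $\mathrm{popcount}(t)$ from it. So that component alone is copied at least $\sum_t 2^k c^{-\mathrm{popcount}(t)}=2^k\bigl((1+1/c)^k-1\bigr)$ tokens into younger catalogs per life. The geometric decay per step is beaten by the $\binom{k}{j}$ rebuilds at distance $j$. Summing over the $\Theta(n/2^k)$ lives at each size gives $E=\Omega\bigl(n^{1+\log_2(1+1/c)}\bigr)$, roughly $n^{1.32}$ at $c=4$. No constant sampling rate, and no redistribution of the charge to older components, gives $E=O(n\log n)$. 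The $O(1)$-reads-per-bridge guarantee of standard fractional cascading is exactly what the write budget cannot pay for.

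The missing idea is a sparse, size-adaptive cascade, which is what the paper uses. It trades $O(1)$ reads per bridge for $O(1+\log(m_{k+1}/m_k))$.
\begin{itemize}
\item Lay the components themselves youngest first at the entrance; your separate tail stack is unnecessary.
\item Let a component of $m_k$ keys take only $O(m_k)$ separators from the next-older component. These cut that component's order into $O(m_k)$ intervals of span $O(m_{k+1}/m_k)$.
\item Store, per gap, a forward handle to the precomputed subtree root of its interval. This can be a node at depth about $\log_2 m_k$ in the older component's round-by-round balanced tree. That tree is built over the older component's own keys and its own separators, so the subtree search also yields the next handle.
\end{itemize}
Such nodes already exist, so the older component is never edited. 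The size $|C_k|=O(m_k)$ is charged to the component itself, and $E=O(n\log n)$ exactly as in \Cref{prop:baseline}. The per-bridge searches telescope: $\Rhat=O\bigl(W+\log m_1+\sum_k\log(m_{k+1}/m_k)\bigr)=O(W+\log m_W)=O(\log n)$.

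The rest of your setup is right: the youngest piece sits at position $0$, and bridges go only into the next-older piece. That piece stays fixed for the younger piece's whole life, because front merges absorb only younger components. Store handles as offsets so that later front pushes do not invalidate them.
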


\begin{proof}
A useful bridge from $U$ to $V$ needs $U$ read before $V$ and $V$ born before $U$
(\Cref{lem:causalbridge}), and tail-only editing forces birth order to follow read
order and kills it. Deque editing breaks the coupling: lay the active components
youngest first, $C_1,\dots,C_W$ read from the entrance with sizes increasing, so
$C_i$ is read before $C_{i+1}$ while $C_{i+1}$ is the older of the two, the very
order a cascade needs.

Run the Bentley--Saxe counter mirrored at the entrance: each insert pushes a
singleton at the front and, while the two front components share a size, pops both,
merges, and pushes one new front component. A component of $m_i$ keys stores its
block as a forward-readable balanced tree together with a sparse cascade into the
next-older $C_{i+1}$, cutting $C_{i+1}$'s order into $O(m_i)$ intervals of span
$O(m_{i+1}/m_i)$ and keeping, per gap, its own predecessor and a forward handle to
the precomputed subtree root for that interval, a node already inside $C_{i+1}$'s
own tree once $C_{i+1}$ is serialized. Each older component is serialized once with
its roots in place, so a younger one adds only its $O(m_i)$ handles and never edits
it; hence $|C_i|=O(m_i)$, and a size-$2^k$ component is built $O(n/2^k)$ times at
$O(2^k)$ tokens, giving $E=O(n\log n)$.

A query searches $C_1$ in $O(\log m_1)$ reads, then at each step follows the stored
handle into a span-$O(1+m_{i+1}/m_i)$ interval of $C_{i+1}$, searches its
precomputed subtree, updates the best candidate, and moves on without a backseek.
The reads telescope,
\[
  \Rhat=O\Bigl(W+\log m_1+\sum_{i\ge2}\log\tfrac{m_i}{m_{i-1}}\Bigr)
   =O(W+\log m_W)=O(\log n).\qedhere
\]
\end{proof}

The collapse is one of orientation, not of two-endedness. A queue, pushing at the
tail and popping at the head, still builds its components at the scan's exit and
does not collapse; what predecessor pays for under tail-only editing is exactly
that new connectivity must be written behind the forward scan. The gap sits at the
missing cascade, and the cascade the deque builds is non-materialized, a young run
pointing into an older one, so the collapse leaves the materialized model and the
tail-only one together. This is why random access alone, the runs still
materialized, does not collapse predecessor. The matching $\Omega(n\log^2 n)$ would
follow from a membership floor under deque edits, a minor open point since the
universal floor of \Cref{sec:membership} uses the tail-only frontier-width bound; the
collapse out of the $n\log^3 n$ class is unconditional. Afshani's dynamic
fractional-cascading lower bound~\cite{afshani2021} forces either fully dynamic
updates or worst-case update time, while the construction here is insertion-only
with amortized rebuilding, and escapes both.

\subsection{LSM Read Amplification}

The read gap reaches past edit histories to the runs in place at query time, though
only under a hypothesis. A run can hold its median and successor in an $O(1)$ header
and certify a query's predecessor between them in two reads, so overlap by itself
does not force $\Omega(\log m)$. What forces it is a query distribution that, after
all outside information is exposed, still leaves $\Omega(\log m_i)$ of conditional
gap entropy in each overlapped run, exactly the hard family of \Cref{lem:anchorzones}.

\begin{proposition}[Range Read Amplification]\label{prop:lsmreadamp}
Let $R_1,\dots,R_W$ be active sorted runs, pairwise no-cascade and carrying no
cross-run order catalog. Under a query distribution leaving $\Omega(\log m_i)$
conditional gap entropy in each run it crosses, an exact range or order query has
read cost $\Omega(\sum_{i\,\text{crossed}}\log m_i)$, while a point query is settled
in $O(1)$ reads per run.
\end{proposition}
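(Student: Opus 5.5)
The plan is to split the statement into its upper and lower halves. The lower half I would obtain by re-running the per-level direct-sum machinery of \Cref{sec:order} at a single fixed state. The upper half for point queries is a direct construction.

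For the point-query half, I would equip each run $R_i$ with a static perfect hash built from its own keys, as in \Cref{prop:baseline}. Each run is then settled by an $O(1)$ header read plus one probe. This uses no cross-run information, so it is consistent with the no-cascade hypothesis. The total cost is $O(1)$ per run.

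For the lower half, I would fix the query distribution $\mathcal D$ from the hypothesis. Let $\mathcal C$ be the set of runs the query crosses. For each $i\in\mathcal C$ the hypothesis gives $H(\text{gap of }q\text{ in }R_i\mid \text{all data outside }R_i)=\Omega(\log m_i)$. I would then argue run by run, as in the proof of \Cref{lem:additivesum}. Expose everything outside $R_i$ for free. Because the runs are pairwise no-cascade and no cross-run catalog exists, reads outside $R_i$ cannot shrink the candidate gaps in $R_i$. The evaluator restricted to $R_i$ is therefore a single-block comparison search under the conditional distribution.

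Exactness of the answer is what makes that search necessary. For an order query, the predecessor in $R_i$ (or, for a range, the endpoint positions) must be determined whenever it can affect the output. The hard-family structure guarantees this, via the zero/hot alternatives of \Cref{lem:anchorzones} and \Cref{lem:idempsum}. Under such alternatives, the argument of \Cref{lem:zerogap} runs as follows: distinct correctly answered gap classes need distinct leaves, and the ternary Kraft inequality then converts $\Omega(\log m_i)$ bits of residual gap entropy into $\Omega(\log m_i)$ expected reads inside $R_i$. The same holds for bounded-error evaluators by the Markov averaging over coin fixings used there. Since the runs are disjoint token blocks, the per-run expected reads add, giving $\Omega(\sum_{i\in\mathcal C}\log m_i)$.

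The step I expect to be the main obstacle is converting ``$\Omega(\log m_i)$ conditional gap entropy'' into ``$\Omega(\log m_i)$ reads.'' Entropy alone does not force the reads: the median/successor header shows a run can certify a low-entropy answer cheaply. So the gap must be \emph{pivotal}, meaning the final answer depends on it. My first attempt would be to read the hypothesis as requiring the distribution to contain, for each gap value, a hot alternative that changes the answer while fixing all other runs, exactly the structure of \Cref{lem:anchorzones}. If that reading is too strong, the fallback is an information argument via \Cref{lem:layoutchannel}. There the transcript restricted to $R_i$ must carry $\Omega(\log m_i)$ bits, because the answer together with the outside data determines the gap. However, \Cref{lem:layoutchannel} charges $\Theta(\log n)$ bits per read, so this route only yields $\Omega(\log m_i/\log n)$ reads. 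To recover the full bound I would replace that per-read accounting with the comparison-outcome accounting (three outcomes per read) used in \Cref{lem:zerogap}, which is legitimate for abstract keys.
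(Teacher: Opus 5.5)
Your proposal is correct and matches the paper's proof. The paper likewise isolates each crossed run by no-cascade, applies the per-run argument of \Cref{lem:anchorzones} (and through it \Cref{lem:zerogap}) to get $\Omega(\log m_i)$ reads, sums because no cross-run catalog lets the runs share the cost, and settles point queries by a per-run hash or filter in $O(1)$. Your reading of the entropy hypothesis as the pivotal zero/hot hard family is also the paper's: the paragraph before the proposition makes the same caveat via the median/successor header and names the hard family of \Cref{lem:anchorzones} as what the hypothesis means.
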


\begin{proof}
The runs being mutually no-cascade, a read in one cannot shrink another's candidate
set, so the per-run argument of \Cref{lem:anchorzones} applies to each crossed run
on its own: the hypothesis leaves $\Omega(\log m_i)$ gap entropy there, and
localizing the endpoint costs $\Omega(\log m_i)$ reads. Carrying no cross-run
catalog, the runs do not share the cost, so it sums. A point query is hashable,
settled per run by a filter in $O(1)$.
\end{proof}

The bound is on the runs in place, however built or read, so it holds
in any leveled sorted-run store. In a random-access store the per-run filters let a
point query skip the runs that miss it, leaving $O(1)$; a range or order query has
no such escape and localizes in every run it spans,
$\Theta(\log_T n)$ at size-ratio $T$, each localization irreducible by
\Cref{prop:lsmreadamp}. This
range-over-point read amplification closes only two ways: a word-RAM integer index,
which serves one dimension but not string keys or multi-attribute ranges, where the
two-dimensional separation persists; or a cross-run order catalog, which appears to
need $\Omega(n^2)$ edits to maintain under append-only merging (\Cref{sec:select}).
For write-optimized, string-keyed, multi-attribute engines in the insertion-only
regime the model captures, the range read amplification is intrinsic, not an
implementation artifact~\cite{oneil1996,monkey2017,dostoevsky2018}.

\subsection{Intermediate Queries}

Order at $n\log^3 n$ and select at $n^2$ are the two ends of one scale, set by how
much the query reveals about its answer's value. A $w$-windowed select, the $K$-th
smallest under the promise that the answer lies in a bracket of at most $w$ live
keys, interpolates between them: at $w=O(1)$ the bracket pins the value and the
query is an order query, at $w=N$ the promise is empty and it is select. The cut
floor restricted to the bracket gives $\Rhat\ge c\min\{w,P,U\}$, only the $\le w$
keys inside it carrying the ambiguity. The matching write tradeoff across the
interior is open, and we leave it.

%% file: sections/nonmaterialized.tex
\section{Cascades Beyond Materialization}\label{sec:nonmat}

Strong materialization (\Cref{def:strongmat}) makes each component a self-contained
sorted file, the locality the lower bounds above turn on: a read in one component
carries no order information about another. This section removes exactly that. A
young component may now keep a directory, copied separators, or a search summary for
older keys. The rest of the serialized model stays: LIFO edits, exact layout, no
sample-dependent state outside the token string, oracle abstract keys, and a forward
no-backseek head.

Here the separation turns genuinely dynamic. The earlier proof priced the write
budget against the frontier width and the live search potential, but a
non-materialized suffix can fuse many old searches into one directory lookup, so the
frontier is no longer the right primitive. What survives is sharper: while a token
sits in the longest common prefix of two consecutive states, the transcript by which
a query reaches it is fixed. This yields a canonical alphabetic search tree laid out
in stack order, and the reduction it supports proves the $n\log^3 n$ bound at
near-optimal read cost and isolates the rest as a single suffix-recourse conjecture.

Write $s_t$ for the token string after update $t$. A LIFO edit has the form
\[
  s_t=P_tA_t,\qquad s_{t+1}=P_tB_t,
\]
with $P_t$ the longest common prefix and charged cost $|A_t|+|B_t|$; set
$R=\lceil\Rhat\rceil$.

\begin{lemma}[Prefix-Transcript Invariance]\label{lem:nonmat-prefix}
Fix a deterministic evaluator and a query $q$. On the states $s=PA$ and $s'=PB$, the
two executions agree until the head first leaves $P$.
\end{lemma}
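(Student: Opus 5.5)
We argue by induction on the number of steps the forward head takes, using only the definition of the head. We maintain the invariant that after $k$ steps, provided the head has not yet left $P$ on either input, the two executions agree in three respects: they have the same finite-control state, the same head position (a position strictly inside $P$), and the same transcript of tokens read so far. The base case is immediate. The head starts at position $0$ from the data-independent initial configuration, and this configuration is identical on $s$ and $s'$ by the no-external-state clause. If $P$ is empty, the head leaves $P$ before any step and the claim is vacuous.

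For the inductive step, suppose the invariant holds after $k$ steps and the head is at a position $p<|P|$ on both inputs. The head's next action is either a read of the current token or a skip by a computed amount. Either way, the action is a function of the finite control, the query $q$, and the tokens read so far. All three coincide on the two inputs. The evaluator is deterministic, so no coins intervene.

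We treat the two actions separately.
\begin{itemize}
  \item \emph{Read.} The token read is $s[p]=P[p]=s'[p]$, so both executions append the same token, update the control identically, and advance to $p+1$.
  \item \emph{Skip.} The skip length is computed from identical data, so both executions land at the same position $p'$.
\end{itemize}
In either case the new position is either still inside $P$, in which case the invariant is restored, or it is at least $|P|$ on both inputs simultaneously. In the latter case the head leaves $P$ at the same step, from the same configuration, on both inputs.

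The only point needing care is what "agree until the head first leaves $P$" includes. It includes the identical exit step and the position reached by it, since that step was computed inside $P$. It does not include anything read at or beyond $|P|$, where $A$ and $B$ may differ. One should also note that the executions never depend on $|s|$ versus $|s'|$ before exit. Any length information the head uses must come from tokens it has read, and those lie in $P$. Headers of components lying in $P$ are themselves part of $P$. So no global quantity leaks the difference between $A$ and $B$, and the statement follows.
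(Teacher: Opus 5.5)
Your proof is correct and uses the same induction as the paper: the initial configuration, the head position, the tokens read, and hence every computed read or skip coincide on $PA$ and $PB$ while the head stays in $P$, so the head exits $P$ at the same step from the same configuration. The paper's version is terser and names the oracle's comparison and equality outcomes explicitly; your proof covers them implicitly, since they are determined by the handles read in $P$ and the fixed query $q$.
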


\begin{proof}
Both start in the same control state, query, and head position. While the head stays
in $P$, each step reads the same token at the same position, exact layout included,
runs the same computation, sees the same comparison and equality outcomes, moves to
the same control state, and computes the same forward skip, so the executions agree
inductively. A token of the new suffix $B$ is not reached before the head leaves $P$,
so it cannot alter any earlier branch.
\end{proof}

This is stronger than the causal-bridge obstruction of \Cref{lem:causalbridge}: not
only can a later directory not point backward in time, a later suffix cannot change
the transcript by which a query reaches any retained token.

\begin{lemma}[Alphabetic-Tree Normal Form]\label{lem:nonmat-tree}
Fix a state of $N$ distinct abstract keys with distinct values and a deterministic
exact-predecessor evaluator of read cost at most $R$. Placing a single nonmember
handle in each of the $N+1$ open gaps in turn, the evaluator induces an alphabetic
comparison tree of $N+1$ leaves and height at most $R$, each internal separator
witnessed by a token occurrence that reveals the corresponding stored key, with the
witness positions
strictly increasing along every root-to-leaf path.
\end{lemma}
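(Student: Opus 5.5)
The plan is to read the tree off the evaluator's executions on the $N+1$ gap queries. Because the evaluator is deterministic and the keys are abstract, we will group these executions by the longest common prefix of their transcripts. Fix the state $s$. For each gap $g\in\{0,\dots,N\}$ let $x_g$ be the handle placed in gap $g$. Since the handle is fresh, the execution on $x_g$ is determined by $s$ and by the outcomes of comparisons between $x_g$ and the key handles read so far. Equality labels of $x_g$ match nothing, so equality tests are constant across gaps. Under the oracle model the head's skips and control are functions of the tokens read and these outcomes, exactly as in the proof of \Cref{lem:nonmat-prefix}. Hence all executions start identically. Two executions first diverge at a step where they compare their handle against a stored key $k$ that is read at some token position $\pi$ and fall on opposite sides of it. I will take $k$ as a separator and the token occurrence at $\pi$ as its witness.

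\emph{Building the tree.} First I argue that the set of gaps sharing a transcript prefix is an interval. The outcome of comparing $x_g$ with a stored key $k$ is monotone in $g$: it is below $k$ exactly when $g$ is at most the rank of $k$. So each branching step splits the current interval of gaps into two contiguous subintervals, and steps where all gaps in the interval agree are contracted. This gives a binary alphabetic tree. Its leaves are the classes of gaps sharing a full transcript. Correctness then forces singleton leaves: gaps $g\neq g'$ have different exact predecessors (the keys $k_g\ne k_{g'}$, or the default for gap $0$), and with distinct values the returned value, decoded from words read, must differ. So the two executions cannot share a transcript, and there are $N+1$ leaves. The height is at most $R$, because each branching node on the path to gap $g$ consumes at least one charged read of the execution on $x_g$, and that execution has at most $R$ reads.

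\emph{Witnesses and monotonicity.} A separator comparison at a node refers to a stored key handle, which under the oracle model was obtained by reading a token, namely the single key handle returned with that read. I take as witness the position of the read that delivered the handle; if the same handle was read several times, I take the latest read before the comparison. The head never moves backward, and along one execution later nodes are later steps. So I expect witness positions to be nondecreasing along each root-to-leaf path.

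\emph{Main obstacle: strictness.} The hardest part is that two consecutive separators on a path could be witnessed by the same read. I will argue this is impossible: one read returns at most one key handle, so two distinct separators cannot share a witness. A separator also cannot repeat on a path, because once the interval is split at $k$ the key $k$ is no longer interior to either side, and any comparison with it is contracted. Distinct keys therefore come from distinct reads, and forward motion makes those reads occur at strictly increasing positions. The subtle point is that an equal position might recur because the head reads the same cell, but a single cell is read at most once in a forward pass, so the increase is strict. I would check carefully that a handle cannot be "manufactured" without reading it; the key model in \Cref{sec:model} rules this out. This also shows that each witness reveals its stored key.
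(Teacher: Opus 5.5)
\textbf{Gap: witness monotonicity.} Your interval structure, the singleton-leaf argument from distinct values, and the height bound are fine. The height bound works because distinct separators on a path are distinct handles, hence distinct reads. The monotonicity and strictness step, however, fails as written.

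The nodes of your tree are ordered by \emph{comparison} time. You place each witness at the position of the \emph{read} that delivered the handle. Nothing forces these two orders to agree: a deterministic evaluator may read several handles before comparing the query against any of them, and then compare them in any order.

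Here is a concrete case. Store $k_1<k_2$ at positions $0$ and $1$. Take an evaluator that reads both tokens, then compares the query with $k_2$, and, if the query lies below, compares it with $k_1$. The induced tree has root $k_2$ (witness $1$) above $k_1$ (witness $0$), so witness positions \emph{decrease} along the path to gap $0$. Taking the latest read before the comparison does not help. Your sentence ``forward motion makes those reads occur at strictly increasing positions'' holds in read order, not in path order. So the tree read off the raw transcripts need not have the claimed property.

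\textbf{The fix.} The missing idea is to normalize the evaluator before extracting the tree, as the paper does:
\begin{itemize}
\item Make it compare the nonmember handle with each stored handle the moment that handle is first read.
\item Keep the outcome in free working memory, and answer the original evaluator's later comparisons from memory.
\end{itemize}
This adds no charged reads and leaves the read sequence and the output unchanged. It only refines the transcripts, so your interval, leaf and height arguments go through verbatim on the normalized tree.

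Now the order-sensitive comparisons happen in read order. Along any root-to-leaf path, which is a single execution, the separators are delivered by successive reads. Each read advances the forward head by one cell, so their positions strictly increase, taking the first read of a handle as its witness. This also gives the paper's canonical witness: the first divergence of the two gaps flanking a stored key is the first read revealing that key.

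\textbf{Minor point.} Insist explicitly that the $N+1$ alternatives use one handle with a fixed equality label \emph{and} a fixed hash address. Your notation $x_g$ suggests distinct handles, and then the executions could branch on the hash rather than on comparisons.
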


\begin{proof}
Let the $N+1$ alternatives be one nonmember handle $\xi$ with a fixed equality label,
distinct from every stored one, that the oracle drops into each gap in turn; across
the alternatives $\xi$'s equality tests stay false and its hash address stays fixed,
so only a comparison against a revealed stored key can branch by gap. Unfold the
evaluator into its decision tree and normalize it to compare $\xi$ with each handle
the moment that handle is read, which adds no charged reads and puts the
order-sensitive comparisons in read order. Order-independent reads then give the same
outcome in every gap and are quotiented out. Each gap's answer is the value of its
predecessor, distinct across the $N+1$ gaps, so their transcripts carry distinct
leaves. For the two gaps flanking a stored key $x$ the
predecessors differ; with $\xi$ placed close enough to $x$, every comparison against
a $y\ne x$ agrees on both sides and no hash or equality label reveals an order
position, so the first order-sensitive divergence is a comparison against a token
revealing $x$, taken as the canonical witness of the boundary at $x$. These
comparisons separate contiguous intervals of query order, so after suppressing unary
nodes they form an alphabetic tree of $N+1$ leaves; each charged read reveals at most
one stored handle, so a root-to-leaf path holds at most $R$ of them; and the handles
being compared in read order under a head that never moves back, the witness
positions strictly increase along each path.
\end{proof}

The oracle-key hypothesis carries this. A directory entry can say where to skip, and
a young component can copy an old handle, but a token that reveals no stored key
cannot act as an order-bearing separator: an old handle denotes the same abstract
key, its hash leaks no position, and by \Cref{lem:nonmat-prefix} the transcript
reaching its token is fixed, so a later directory cannot reinterpret it as a fresh
separator before it is read. This shuts the loophole by which metadata might
manufacture separators for free.

A retained canonical witness keeps its address in the tree. If its token lies in
$P_t$, so does every earlier witness on its query path, and \Cref{lem:nonmat-prefix}
fixes the incoming transcript and the ancestor chain; a suffix witness occurs later
in the string, so it may refine a retained leaf but cannot become an ancestor of a
retained witness.

\begin{theorem}[Labeling Reduction]\label{thm:nonmat-labeling}
Every deterministic non-materialized exact-predecessor structure with write cost $E$
and prefix-worst read cost $\Rhat$ induces an online order-preserving labeling of the
inserted keys, with label range at most $2^{R+1}$ and total relabeling at most $E$,
in which changing an old label means deleting that key's canonical witness from the
physical suffix and appending replacements.
\end{theorem}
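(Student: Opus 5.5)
The plan is to read the labeling off the alphabetic-tree normal form, one state at a time, and then use prefix-transcript invariance to show that labels change only when the edit's charged suffix is touched.

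\textbf{Labels at a fixed state.} Fix the state $s_t$ after $t$ insertions. \Cref{lem:nonmat-tree} gives an alphabetic comparison tree $T_t$ with $t+1$ leaves and height at most $R$. Each stored key $x$ is the separator at exactly one internal node, and that node is witnessed by a canonical token position $\pi_t(x)$. Label $x$ by the root-to-node path of its witness, encoded in-order: left branches and right branches are written as binary digits, and the path is padded to depth $R$, giving an integer in $[0,2^{R+1})$. Because the tree is alphabetic, in-order position agrees with key order. So $x\mapsto\ell_t(x)$ is order-preserving with label range at most $2^{R+1}$.

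\textbf{Stability under retained prefixes.} Write the edit as $s_t=P_tA_t\to s_{t+1}=P_tB_t$. Suppose the witness of $x$ lies in $P_t$. Strict increase of witness positions along paths (\Cref{lem:nonmat-tree}) puts every ancestor witness in $P_t$ as well. \Cref{lem:nonmat-prefix} then fixes the transcript reaching that witness, so the ancestor chain and branch directions of $x$'s node are identical in $T_t$ and $T_{t+1}$. One caveat is that the normal form is built by sweeping a handle through gaps, and a new key splits one gap; I will take care with that below. The paragraph after \Cref{lem:nonmat-tree} already states that a suffix witness may refine a retained leaf but never becomes an ancestor of a retained witness. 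Hence $\ell_{t+1}(x)=\ell_t(x)$ for every such $x$.

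\textbf{Charging relabels.} A key's label can change only if its canonical witness lay in $A_t$, the popped part, and a new witness is then chosen in $B_t$. Distinct keys have distinct witness tokens, since each read reveals at most one handle. The number of keys relabeled at step $t$ is therefore at most $|A_t|\le|A_t|+|B_t|$. Summing over $t$, total relabeling is at most $E$. The new key's first label likewise comes from a witness in $B_t$, so it is charged to the pushed part. Online order preservation follows from the per-state property. The labeling is online because $\ell_{t+1}$ depends only on the structure's state $s_{t+1}$, which the algorithm has already built from the first $t+1$ keys.

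\textbf{Main obstacle.} The hard step is making canonical witnesses consistent across consecutive states. The witness choice in \Cref{lem:nonmat-tree} takes "the first order-sensitive divergence" for handles placed close to $x$. One must check that, for $x$ retained in the prefix, this first divergence in $s_{t+1}$ is the same token as in $s_t$. The worry is that inserting a new key inside a gap adjacent to $x$ might alter which comparison first separates the two sides of $x$.

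I would handle this by fixing the probing handle $\xi$ relative to the old key set and applying \Cref{lem:nonmat-prefix} to both neighbouring gap instances. Those two transcripts agree inside $P_t$ in both states, so their first divergence, if it occurs in $P_t$, is the same token. If no divergence occurs in $P_t$, the witness lies in $A_t$ and the key is already charged. This makes the relabel charge airtight.
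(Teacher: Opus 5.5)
Your proposal is correct and follows the paper's proof essentially step for step. The paper uses the same ingredients: dyadic in-order labels read off the canonical tree of \Cref{lem:nonmat-tree}; unchanged labels, by \Cref{lem:nonmat-prefix}, for keys whose witnesses (and hence all ancestor witnesses) stay in $P_t$; and at most $|A_t|$ relabels per edit, with the new key's label charged to $B_t$. Your ``main obstacle'' paragraph just spells out the retained-witness remark the paper states right before the theorem. One detail to pin down is the padding: it must be the midpoint form $(2a+1)2^{R-d}$ (append a $1$, then zeros). Plain zero-padding gives a node at address $a$ the same label $a\,2^{R-d}$ as its left child, which breaks strict order preservation.
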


\begin{proof}
At each time take the canonical tree of \Cref{lem:nonmat-tree} and give a separator
at depth $d$ with binary address $a$ the dyadic label $(2a+1)2^{R-d}$, the midpoint
of its depth-$d$ interval refined to depth $R$. These are strictly ordered by the
inorder of the separators and lie in $\{1,\dots,2^{R+1}-1\}$, a range for the
reduction and not a constraint on the keys. Under one edit $P_tA_t\to P_tB_t$, an old
key whose witness stays in $P_t$ keeps its depth, address, and label, so every old
key that is relabeled has its witness in the popped $A_t$; distinct keys have
distinct witnesses, so at most $|A_t|$ keys are relabeled, and the new key draws its
first label from $B_t$. The update costs at most $|A_t|+|B_t|$, and the sum is at
most $E$. The retained witnesses are a prefix of the physical witness order and the
replacements are written in the appended suffix, the recourse the statement names.
\end{proof}

\begin{proposition}[Near-Static Regime]\label{prop:nonmat-regimes}
A deterministic non-materialized exact-predecessor structure with
$\Rhat\le\log_2 n+O(1)$ has $E\Rhat=\Omega(n\log^3 n)$.
\end{proposition}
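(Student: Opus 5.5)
The plan is to combine the Labeling Reduction (\Cref{thm:nonmat-labeling}) with the linear-range online-labeling lower bound of Bul\'anek, Kouck\'y, and Saks~\cite{bks2012}, the same lemma whose counting argument underlies \Cref{lem:fw}. The read side then comes for free from the tree height.

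\emph{Read floor.} At the final state, \Cref{lem:nonmat-tree} yields an alphabetic tree with $n+1$ leaves and height at most $R$. Since the tree is binary after suppressing unary nodes, this forces $R\ge\log_2(n+1)$, so $\Rhat=\Omega(L)$ with no hypothesis at all.

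\emph{Write floor.} Apply \Cref{thm:nonmat-labeling} at every prefix time. Because $\Rhat$ is prefix-worst, the same $R=\lceil\Rhat\rceil\le\log_2 n+O(1)$ bounds the canonical tree height at every intermediate state. Hence the induced labeling is an online order-preserving labeling of the first $t$ keys, for every $t\le n$, inside the fixed range $\{1,\dots,2^{R+1}-1\}$. That range has size at most $2^{O(1)}n=O(n)$, so this is linear-range online labeling. Its total relabeling is at most $E$.

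\emph{The adversary.} The structure is deterministic and the keys are abstract. So the adversary of~\cite{bks2012} may choose each next key's position in the current order adaptively, by watching the induced labels; only comparisons matter, so any relative order is realizable by abstract keys. Their theorem then gives total relabeling $\Omega(n\log^2 n)$ for every deterministic online labeling with $n$ items and $O(n)$ labels. Therefore $E=\Omega(n\log^2 n)$, and multiplying by the read floor gives $E\Rhat=\Omega(n\log^3 n)$.

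\emph{Main obstacle.} The delicate step is checking that the induced object really satisfies the hypotheses of the~\cite{bks2012} bound, in three respects.
\begin{itemize}
  \item The labels must be well defined and distinct at every prefix time, not only at the end. This needs \Cref{lem:nonmat-tree} applied to each state $s_t$ with its $t$ keys, which in turn needs distinct values; the hard instance can fix those freely.
  \item The relabeling count of \Cref{thm:nonmat-labeling} must match the cost measure of~\cite{bks2012}. Their bound counts, per insertion, the number of items whose label changes. This is exactly what \Cref{thm:nonmat-labeling} bounds by $|A_t|+|B_t|$, with the new key's first label charged to $B_t$.
  \item The constant in the label range must be admissible. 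Their lower bound holds for range $m\le cn$ with any fixed constant $c$, so the additive $O(1)$ in $\Rhat\le\log_2 n+O(1)$, which becomes a multiplicative constant in $2^{R+1}$, is absorbed.
\end{itemize}
If the available form of~\cite{bks2012} requires range exactly $cn$ with all $n$ items present from the start of the count, I would pad by restricting to the last $n/2$ insertions, which still costs $\Omega(n\log^2 n)$.
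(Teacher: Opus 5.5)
Your proposal is correct and is essentially the paper's proof. Both arguments apply the labeling reduction of \Cref{thm:nonmat-labeling} with the fixed range $2^{R+1}=O(n)$ and use the adaptive Bul\'anek--Kouck\'y--Saks adversary, which is legitimate because a deterministic structure over abstract keys sees only relative order, so the run can be realized afterwards by distinct ranks; this gives $E=\Omega(nL^2)$, multiplied by the read floor $\Rhat=\Omega(L)$. The only cosmetic difference is that you derive $\Rhat\ge\log_2(n+1)$ explicitly from the height of the $(n+1)$-leaf binary alphabetic tree of \Cref{lem:nonmat-tree}, whereas the paper simply cites that exact predecessor search needs $\Omega(L)$ reads.
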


\begin{proof}
The labeling lower bound is adaptive over a totally ordered key set, so against a
deterministic structure the adversary fixes a sequence of $n$ keys with distinct
values whose order alone is read through the oracle, realized by distinct ranks in
the $\mathrm{poly}(n)$ universe. With $\Rhat\le L+O(1)$ the induced label range
$2^{R+1}=O(n)$ stays well inside that universe, so the adversary can refine a label
interval past exhaustion, and the linear-range labeling lower bound of Bul\'anek,
Kouck\'y, and Saks~\cite{bks2012} forces $\Omega(nL^2)$ relabelings, giving
$E=\Omega(nL^2)$. Exact predecessor search needs $\Rhat=\Omega(L)$, so
$E\Rhat=\Omega(nL^3)$.
\end{proof}

The reduction reaches no further.  The $n\log^3 n$ bound rests on the
linear-range labeling lower bound, hence on a label range $O(n)$ and
$\Rhat\le\log_2 n+O(1)$.  For larger read cost the range outgrows the linear
regime and, once it passes the $\mathrm{poly}(n)$ key universe, leaves ordinary
labeling with no relabelings to force.  What the reduction keeps and ordinary
labeling discards is sharper: an old label changes only through a physical suffix
rebuild of the alphabetic tree.  Extending the lower bound to all read costs would
amount to proving that suffix-recourse alphabetic trees must delete
$\Omega(nL^3/R)$ old separators over some insertion sequence when their height is
bounded by $R$.  We record this as the residual non-materialized kernel, not as a
claim needed by the main theorem.

\begin{remark}
A sparse directory genuinely escapes the materialized proof. For an old sorted set
$A$ and a younger batch $B$, the young suffix can store, for each nonempty gap $g$ of
$A$, a static predecessor structure over $B_g=\{x\in B:x\text{ falls in }g\}$, of
total size $O(|B|)$; having searched $A$ the evaluator knows $g$ and searches only
$B_g$, fusing many localization scales without copying $A$. What it cannot escape is
temporal renewal: once several batches enter one gap, a late directory names the
newest bucket but the forward head cannot turn back to an older body, so making the
newest resolver forward-readable forces a suffix rebuild. Whether this renewal can be
paid once for all nested scales is the residual question. Three views sharpen it. Witnesses
sit parent before descendant, so a popped witness pops its descendants and the
deletions are disjoint complete subtrees, not arbitrary relabelings. A direct-sum
potential suffices while a prefix skeleton survives, the local recourse potentials
adding over the external cells; what is missing is the renewal inequality once the
skeleton itself is popped, when one rebuilt subtree can refresh every nested resolver
beneath it at a stroke. And in rectangle terms each witness is a block laminar in
both time and query order, a laminarity broken by path copying, backward pointers, or
entrance writes, which is why random-access cascades, the cache-oblivious lookahead
array~\cite{cola2007}, and dynamic fractional cascading~\cite{afshani2021} sit beside
this problem rather than settle it.
\end{remark}

Materialization is the last assumption the model sheds, and the one this paper leaves
as a frontier rather than a crutch: the predecessor bound of \Cref{sec:order} needs
none of this section, the non-materialized separation is unconditional at
near-optimal read cost, and the residue is a suffix-recourse alphabetic-tree problem. The
section is deterministic. A randomized extension cannot follow from the labeling
reduction alone, since randomized online labeling beats the deterministic
$\log^2 n$ barrier in the linear range~\cite{bcfc2022}; any randomized proof would
have to use the suffix-renewal constraint itself, not label density.